\definecolor{myred}{rgb}{0.8,0.3,0.2}
\definecolor{myredfill}{rgb}{1,0.93,0.93}
\definecolor{myred2}{rgb}{0.6,0.5,0.3}
\definecolor{myredfill2}{rgb}{0.93,0.83,0.83}
\definecolor{myblue}{rgb}{0.05,0.25,0.7}
\definecolor{mybluefill}{rgb}{0.92,0.98,1}
\definecolor{mygreen}{rgb}{0.3,0.6,0.4}
\definecolor{mygreenfill}{rgb}{0.93,1,0.97}
\definecolor{mygrey}{rgb}{0.4,0.4,0.4}
\definecolor{mygreyfill}{rgb}{0.95,0.95,0.95}
\definecolor{mypurple}{rgb}{0.6,0.2,0.6}
\definecolor{mypurplefill}{rgb}{0.99,0.94,0.99}
\definecolor{myyellow}{rgb}{9,0.9,0.3}
\definecolor{myyellowfill}{rgb}{1,1,0.92}
\definecolor{myorange}{rgb}{9,0.7,0.3}
\definecolor{myorangefill}{rgb}{1,0.95,0.92}
\definecolor{mybrown}{rgb}{0.4,0.25,0.1}
\definecolor{mybrownfill}{rgb}{1,0.9,0.8}
\definecolor{mediumgrey}{rgb}{0.7,0.7,0.7}
\definecolor{linkred}{rgb}{0.6,0.1,0.1}
\definecolor{citeblue}{rgb}{0.2,0.35,0.75}
\definecolor{urlblue}{rgb}{0.2,0.25,0.45}
\newtheorem{theorem}{Theorem}
\newtheorem{lemma}{Lemma}
\newtheorem*{lemma*}{Lemma}
\newtheorem{proposition}{Proposition}
\def\ketbra#1#2{\ket{#1\vphantom{#2}}\!\bra{#2\vphantom{#1}}}
\begin{document}

\title{Quantifiers of Noise Reducibility Under Restricted Control}

\author{Graeme D. Berk}
\affiliation{Nanyang Quantum Hub, School of Physical and Mathematical Sciences, Nanyang Technological University, Singapore 639673}

\author{Kavan Modi}
\affiliation{Science, Mathematics and Technology Cluster, Singapore University of Technology and Design, 8 Somapah Road, 487372 Singapore}
\affiliation{School of Physics and Astronomy, Monash University, Victoria 3800, Australia}

\author{Simon Milz}
\email{S.Milz@hw.ac.uk}

\affiliation{Institute of Photonics and Quantum Sciences, School of Engineering and Physical Sciences, Heriot-Watt University, Edinburgh EH14 4AS, United Kingdom}
\affiliation{School of Physics, Trinity College Dublin, Dublin 2, Ireland}
\affiliation{Trinity Quantum Alliance, Unit 16, Trinity Technology and Enterprise Centre, Pearse Street, Dublin 2, D02YN67, Ireland}

\date{\today}

\begin{abstract}
The correlation structure of multitime quantum processes -- succinctly described by quantum combs -- is an important resource for many quantum information protocols and control tasks. Inspired by approaches for quantum \textit{states}, we introduce quantifiers of the practical utility of quantum processes that satisfy monotonicity properties, thus overcoming shortcomings in previous state-motivated approaches.  Applying these quantifiers to the problem of noise reduction of a quantum process under open-loop control, they are shown to represent the largest amount of temporal mutual information that a process can possibly exhibit. In addition, we study their resource composition behaviour and connect them to the recently introduced notion of generalised comb divergences. Finally, in light of these new quantifiers, we re-interpret the numerical findings of Ref.~\cite{extractingdynamicalquantumresources} on the relationship of dynamical decoupling and non-Markovian memory, which were based on insufficient resource quantifiers, and show that its main conclusion -- the interpretation of dynamical decoupling as a resource distillation -- still holds.
\end{abstract}

\maketitle
\section{Introduction}
Quantum technologies like quantum computers require exquisite control over quantum systems, which is a daunting scientific challenge. A key source of difficulty towards this goal is designing hardware that sufficiently isolates quantum systems from uncontrolled interactions with their environment. To reduce the burden that this hardware limitation places on quantum devices, there are a wealth of noise reduction methods, such as quantum error correction~\cite{PhysRevA.52.R2493,RevModPhys.87.307,Knill_2005}, quantum error mitigation~\cite{PhysRevLett.119.180509,PhysRevX.8.031027,Kandala_2019}, and dynamical decoupling~\cite{dynamicaldecouplingofopenquantumsystems,demonstrationoffidelityimprovement,dynamicaldecouplingofasingleelectronspinatroom,preservingelectronspincoherence,tong2025empiricallearningdynamicaldecoupling,PhysRevLett.134.050601,PhysRevApplied.20.064027} (DD). DD involves applying a rapid sequence of unitary pulses to the system of interest $s$, designed to `average away' the influence the environment $e$ has on $s$, leading to an overall quantum channel $\mathcal{T}$ that is as noiseless as possible. Among noise reduction methods, it is particularly practical because it requires no measurement feedback or post-processing, nor does it require additional redundant qubits. However, this latter feature makes it somewhat unclear what resource DD is actually leveraging to reduce noise. Similarly, it raises the question of which quantum processes are amenable to DD, and to what extent they can be de-noised.

The most common descriptions of noisy quantum processes are quantum channels and Markovian master equations. In a generic repeated interactions scenario -- like the one encountered in DD, where an experimenter applies control sequences to denoise a process -- these descriptions make a Markov approximation~\cite{Breuer:2007juk} that the actions of the experimenter on $s$ cannot influence $e$. While this assumption is often warranted, and leads to a significant complexity reduction in the description of the prevalent temporal correlations, doing so may discard potentially useful information about $s$~\cite{nonmarkovianquantumprocesses, quantumstochasticprocesses}. Moreover, multitime correlations are not captured by the Markov approximation, and disregarding them may lead to an overestimation of randomness/noise/irreversibility~\cite{regularitiesunseenrandomnessobserved}, and yield a mischaracterisation of the dynamics of the system altogether~\cite{reduceddynamicsneednotbe}. Indeed, the noise observed in real-world quantum processors is often non-Markovian~\cite{quantifyingnonmarkovianityinasuperconditing, White_2020,unifyingnonmarkoviancharacterisationefficient}, making this a significant practical concern. Employing a fully non-Markovian description of quantum processes thus has a twofold advantage: First, it enables the correct prediction of the general evolution of a quantum system subject to noise and external control. Second, just like in the spatial case, \textit{temporal} quantum correlations that present as complex memory offer a potential resource, and are expected to yield enhanced performance at a wide range of quantum control tasks~\cite{strategiccodeunifiedspatiotemporal,quantumprocessesthermodynamicresources} like, for example, DD. Properly gauging such enhancement requires a complete description of memory effects and the derivation of quantifiers of their resourcefulness.

A fully-fledged framework that can account for \textit{all} possible memory effects in quantum processes has been developed previously in terms of the process tensor~\cite{nonmarkovianquantumprocesses,quantumstochasticprocesses,anintroductiontooperational}, and its capability to experimentally characterise and diagnose general quantum noise has been demonstrated~\cite{PRXQuantum.3.020344, White_2020, unifyingnonmarkoviancharacterisationefficient}. A process tensor is a complete characterisation of a quantum system's observable statistics, given any way that it can be accessed at a discrete number of times. Thus, it is natural to study process tensors as a vehicle to quantify the amenability of quantum systems to discrete-time quantum control tasks in a manner that accounts for \textit{all} potential resources.

Similar quantum resource questions have been considered in the literature when investigating the utility of quantum \textit{states} to an experimenter seeking to perform some operational task. For these cases, relative entropy to a set of `free' (i.e., useless) states~\cite{review} is a highly significant and mathematically well-behaved quantifier~\cite{Berta_2023, hayashi2025generalizedquantumsteinslemma}. Drawing inspiration from the quantum state case, it appears natural to also quantify the usefulness of quantum \textit{processes} (defined by process tensors) in terms of a relative entropy with respect to a set of processes that are considered to be `free/useless' under specified operational constraints. For example, the amenability of a process to DD appears to be closely linked to its non-Markovian correlations~\cite{nonmarkoviannoisethatcannotbe, dynamicaldecouplingefficiencyversusquantumnonmarkovianity, Gough_2017, extractingdynamicalquantumresources, PRXQuantum.2.040351, Figueroa-Romero_2024} (albeit this relationship is far from being fully understood), seemingly making the relative entropy of a process with respect to the set of Markovian processes an obvious marker of its amenability to DD. 

However, generalising state-based resource quantifiers to quantum processes requires the careful treatment of subtle mathematical differences between states and processes. In Ref.~\cite{extractingdynamicalquantumresources} we sought to use the framework of independent quantum instruments (IQI) -- where the experimenter can perform arbitrary quantum control operations at fixed times, without memory between them -- to evaluate the amenability of processes to DD. To achieve this, we employed the relative entropy of a multitime process to its `natural' memoryless version as a quantifier and claimed that it decreases monotonically under DD operations. This latter property mirrors the requirement that such a resource quantifier should appraise the resources of a process that can be exploited in DD, and should thus not be increased by the corresponding control operations; put differently, it should not be possible to increase the value of a process for a task by means of `free' operations that can be performed. Later, Ref.~\cite{zambon2024processtensordistinguishabilitymeasures} found that the `proof' of monotonicity of relative entropy under the IQI scenario in Ref.~\cite{extractingdynamicalquantumresources} had a subtle mistake, and provided explicit counterexamples, implying that relative entropy on its own cannot meaningfully characterise the utility of non-Markovianity for DD in general.

In this paper, we close this identified gap by first deriving a class of resource quantifiers for multitime processes for arbitrary control tasks that are still rooted in relative entropy, but are monotonic under the control allowed in the respective setups. Narrowing our focus to the DD problem of noise reduction on a quantum process that can be controlled at initial, intermediate, and final times $\hat{n}:=\{t_{\text{i}},t_1,\dots,t_n,t_{\text{f}} \}$, in Sec.~\ref{sec:monotones} we introduce three temporal correlation quantifiers $\overline{I}_{\hat{m}},\overline{M}_{\hat{m}},\overline{N}_{\hat{m}}$ that measure total, Markovian, and non-Markovian correlations of a process, respectively for sets of times $\hat{m} \subseteq \hat{n}$. We prove that these are indeed monotonic, and hence meaningful resource quantifiers for quantum processes. For the case $\hat{m} = \emptyset$, where DD operations have been performed at \textit{all} available times, $I_\emptyset$ corresponds to the input-output correlations of the resulting channel $\mathcal{T}_\emptyset$, thus quantifying the maximal possible success of DD. 

More generally, one may desire to still retain access to the process at intermediate times, in which case the quantifiers $\overline{I}_{\hat{m}},\overline{M}_{\hat{m}}$ and $\overline{N}_{\hat{m}}$ evaluate the different types of correlations that can be present at these lowered temporal resolutions. Detailing the relationship between these quantities, in Sec.~\ref{sec:subadditivity} we prove the subadditivity relation $\overline{I}_{\hat{m}} \leq \overline{M}_{\hat{m}}+\overline{N}_{\hat{m}}$, which is a manifestation of the fact that the distinct types of correlations corresponding to each of the quantifiers $\overline{I}_{\hat{m}},\overline{M}_{\hat{m}}$ and $\overline{N}_{\hat{m}}$, respectively, cannot be simultaneously optimised in general. As a consequence, this subaddititivity condition shows that optimally harnessing Markovian correlations typically requires sacrificing some non-Markovian correlations and vice versa.

The reduction in the number of intermediate times inherent to these quantifiers is critical to explaining the fact that DD can reduce noise on a single copy of a system under experimental control that cannot increase resource value. Indeed, seemingly, DD makes a process more valuable (that is, less noisy) by only using operations that are deemed free (the operations of IQI, used in DD). We resolve this apparent contradiction and explain DD as a kind of resource distillation among subsystems of a process tensor corresponding to the \textit{same system}, but at \textit{different times}. This is in contrast to quantum error correction, which can be understood as resource distillation among spatially distinct copies of a system. These two kinds of manipulations can be -- and have been~\cite{Ng_2011,optimallycombining,han2024protectinglogicalqubitsdynamical} -- combined to leverage both effects simultaneously. The IQI scenario encompasses these cases (so long as the error correction scheme does not have memory between steps), thus motivating the study of how the quantifiers we introduce behave when we have access to multiple copies of a quantum process. In Sec.~\ref{sec:compositionstructure}, we study the behaviour of our quantifiers under both sequential and parallel composition of processes, corresponding to having access to multiple processes one after the other or simultaneously (see Fig.~\ref{fig::composition}). We find that our quantifiers are additive and superadditive under sequential and parallel composition, respectively. 

In Sec.~\ref{sec:correspondence} we compare the introduced quantifiers to generalised comb divergences~\cite{zambon2024processtensordistinguishabilitymeasures}, an operationally meaningful notion of distance between quantum processes obtained through an optimisation over all possible `complementary' combs. In particular, we show that the quantifiers we provide are upper bounded by a specific type of comb divergence based on optimisation only over combs that are reachable within the operational constraints of the scenario.  

Finally, in Sec.~\ref{sec:numerics} we study the algorithm for multitimescale optimal dynamical decoupling (MODD), introduced in Ref.~\cite{extractingdynamicalquantumresources} to find optimal decoupling sequences for given non-Markovian processes. We show that this algorithm can be interpreted as an approximate optimisation method for quantifier $\overline{I}_{\emptyset}$, which quantifies the noiselessness of the resulting dynamics. With this understanding in mind, we observe from the numerical results of Ref.~\cite{extractingdynamicalquantumresources} that traditional DD methods only extract a small portion of the available resources from a process.

Together, our results provide faithful and mathematically well-behaved resource quantifiers for multi-time processes at all temporal resolutions that can be adapted to any considered control task. In particular, they provide the possible limits of denoising via DD, allow its interpretation as a resource distillation, and demonstrate the trade-off between different resources present in quantum processes.

\section{Framework}
\subsection{Process tensors}
\begin{figure}
\centering
\includegraphics[width=0.99\linewidth]{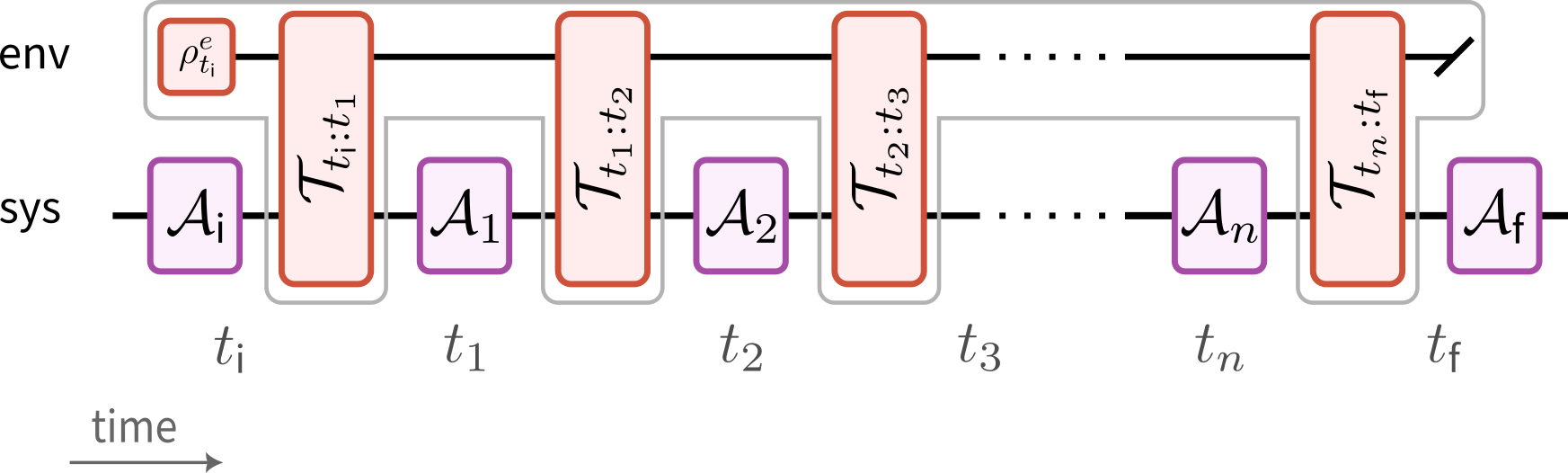}
\caption{\textbf{Multi-time quantum process.} Throughout, we consider the dynamics of a system $s$ coupled to an environment $e$ initially in state $\rho_\texttt{i}^e$. The system can be accessed/manipulated (depicted here by the maps $\mathcal{A}_1, \mathcal{A}_2, \dots, \mathcal{A}_n$) at times $\hat{n} =\{t_1, \dots, t_n\}$ in addition to the initial and final times $t_{\text{i}}$ and $t_{\text{f}}$, respectively. The evolution of the system and environment in between interventions is given by $\mathcal{T}$, and can differ for different time steps. Through the interaction with the environment, the process can display complex, non-Markovian correlations in time. The resulting process tensor description of the process is depicted by the grey outline. In this work, we are predominantly concerned with the IQI scenario (see Sec.~\ref{sec:restrictedcombs}), where all $\mathcal{A}_k$ are independent. More generally, they can be correlated or consist of different layers of operations that map between different levels of temporal resolution (see Figs.~\ref{fig:superprocess} and~\ref{fig:processtensorsuperprocess}).}
\label{fig::sys_env_dyn}
\end{figure}
Consider the dynamics of an open $d_s$-dimensional quantum system $s$ (interacting with an environment $e$) that is accessible to an experimenter at initial time $t_{\text{i}}$ and final time $t_{\text{f}}$, plus a set of intermediate times $\hat{n}=\{ t_1 , \dots , t_n \}$ (see Fig.~\ref{fig::sys_env_dyn}). This setup corresponds, for example, to dynamical decoupling, which consists of predetermined sequences of unitary pulses -- typically chosen from a decoupling group~\cite{dynamicaldecouplingofopenquantumsystems} -- at times $\hat{n}$ with the aim to average away the noise induced by the $se$ interaction. In general, as $s$ and $e$ interact, the experimenter's actions on $s$ may lead to changes in $e$, which in turn can influence $s$ again at a later time, manifesting as temporal correlations, more commonly known as \textit{non-Markovian} behaviour.
\begin{figure}[ht!]
\includegraphics[width=0.97\linewidth]{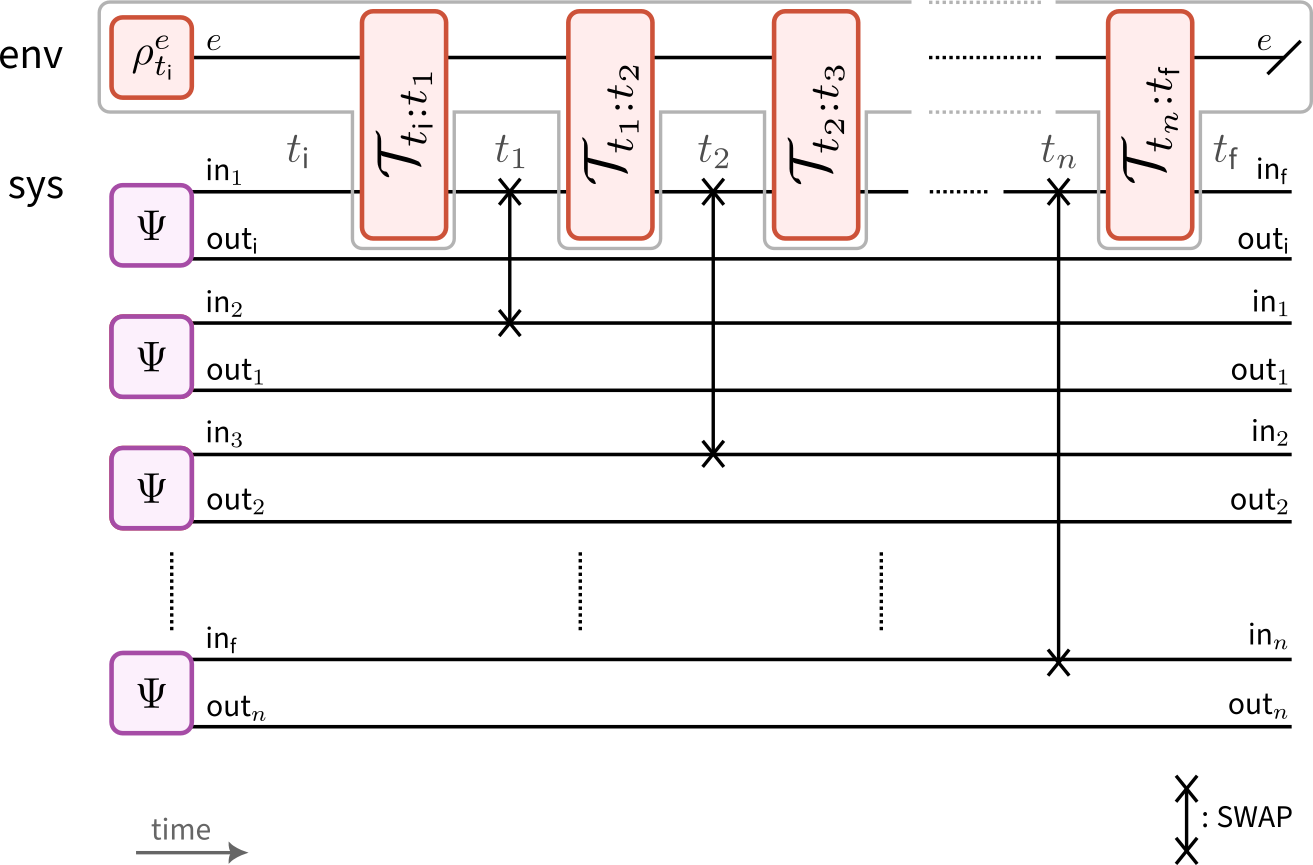}
\caption{\textbf{Open System Dynamics and Choi states.} Throughout, we consider the scenario of a system $s$ of interest that interacts with an environment $e$ and can be probed at times $t_\text{i}, t_1, \dots, t_n, t_\text{f}$. To map the process tensor (grey outline) that fully describes this multi-time setup onto a quantum state, half of a maximally entangled state is `fed in' (represented using circuit notation by the SWAP operator) at each of the corresponding times, resulting in the $2n+2$-partite state $\mathbf{T}_{\hat{n}}$ of Eq.~\eqref{eq:quantumprocessdef}.}
\label{fig::Choi}
\end{figure}

When the experimental interventions are brief compared to the time intervals between them (and compared to typical timescales of the dynamics), the influences of the environment and the experimenter cleanly separate within the process tensor/quantum comb formalism~\cite{theoreticalframeworkforquantumnetworks, nonmarkovianquantumprocesses,quantumstochasticprocesses,anintroductiontooperational}. The process tensor/quantum comb $\mathbf{T}_{\hat{n}}$ of a multitime process (i.e., a process that is probed at multiple points in time) encodes its statistical response to any way that the process may be probed by an experimenter. In particular, it accounts for all observable non-Markovian correlations. These correlations can most easily be described via the Choi state representation of the process tensor~\cite{anintroductiontooperational}, generalising channel-state duality. Specifically, the Choi state of a multitime process is obtained by letting it act on half of a maximally entangled state at each of the considered times (see Fig.~\ref{fig::Choi}).

Let the dynamics of system and environment $se$ from any $t_k$ to $t_{k+1}$ be given by a map $\mathcal{T}_{t_k:t_{k+1}}: \mathcal{L}(\mathcal{H}_{\text{in}_{k+1}} \otimes \mathcal{H}_{e} \rightarrow \mathcal{H}_{\text{in}_{k+1}} \otimes \mathcal{H}_{e})$. The $2n+2$-partite Choi state of the corresponding quantum process is
\begin{equation} \label{eq:quantumprocessdef}
\begin{aligned}
   \mathbf{T}_{\hat{n}} 
   =&  {\text{tr}}_{e} \Big\{   \bigcirc_{t_k \in t_{\text{i}}  \cup \hat{n} }       \mathcal{T}_{t_k:t_{k+1}} \Big[ \bigotimes_{t_k \in t_{\text{i}}  \cup \hat{n}}  \Psi^+_{{\text{in}}_{k+1}{\text{out}}_{k}}   \otimes  \rho_{t_{\text{i}}}^{e}  \Big] \Big\},
\end{aligned}
\end{equation}
where $\rho_{t_{\text{i}}}^{e}$ is the initial state of the environment and $\Psi^+_{{\text{in}}_{k+1}{\text{out}}_{k}}$ is the maximally entangled state on $\mathcal{H}_{{\text{in}}_{k+1}} \otimes \mathcal{H}_{\text{out}_k}$, with $\mathcal{H}_{{\text{in}}_{k+1}} \cong \mathcal{H}_{\text{out}_k}$.

Following the above definition, we have $\mathbf{T}_{\hat{n}} \in \mathcal{L}(\mathcal{H}_{\text{in}_\text{f}} \otimes \mathcal{H}_{\text{out}_n} \otimes \mathcal{H}_{\text{in}_n} \otimes \cdots \mathcal{H}_{\text{in}_1} \otimes \mathcal{H}_{\text{out}_\text{i}}$), where each pair $\mathcal{H}_{\text{in}_k} \otimes \mathcal{H}_{\text{out}_k}$ corresponds to the `slot' at time $t_k \in \hat{n}$ (see Fig.~\ref{fig::Choi}). The interaction with the environment $e$ may lead to correlations/memory in the resulting Choi state $\mathbf{T}_{\hat{n}}$. If it does not, then the process is Markovian, and all intermediate maps between times are of the form 
$\mathcal{T}^{s}_{t_k:t_{k+1}}$, i.e., they act on the system alone, leading to a process tensor of product form~\cite{nonmarkovianquantumprocesses}:
\begin{gather}
\label{eq:Markovian_process}
   \mathbf{T}^{\mathscr{M}}_{\hat{n}} = \mathbf{T}_{t_n} \otimes \mathbf{T}_{t_{n-1}} \otimes \cdots \otimes \mathbf{T}_{t_1} \otimes \mathbf{T}_{t_\text{i}},
\end{gather}
where $\mathbf{T}_{t_n} \in \mathcal{L}(\mathcal{H}_{\text{in}_{k+1}} \otimes \mathcal{H}_{\text{out}_k})$ is the Choi state of $\mathcal{T}^{s}_{t_k:t_{k+1}}$. Markovian processes are represented by Choi states that have no correlations between spaces that correspond to non-adjacent times, although there can still be correlations between the adjacent spaces $\text{in}_{k+1}$ and $\text{out}_k$ in each of the $\mathbf{T}_{t_k}$ (below we will refer to this type of correlations as Markovian information). On the other hand, non-Markovianity corresponds to the presence of inter-step correlations within processes $\mathbf{T}_{\hat{n}}$ that cannot be written in the form of Eq.~\eqref{eq:Markovian_process}. 

\subsection{Control combs}
In a similar vein, the experimental control itself can be represented by a process tensor/quantum comb, recalling that we allow for experimental intervention at times $ t_{\text{i}} \cup \hat{n} \cup t_{\text{f}}$. The corresponding interventions could be independent (the case we will predominantly consider in this work, see Fig.~\ref{fig::sys_env_dyn}), or correlated, both classically or quantum (see Fig.~\ref{fig:superprocess}). In full generality, a \textit{control comb} $\mathbf{Z}_{\hat{n} \hat{m}}$ corresponding to the experimental interventions, can transform process tensors with $\hat{n}$ intermediate times, to ones with $\hat{m} \subseteq \hat{n}$ intermediate times. For example, in Fig.~\ref{fig::sys_env_dyn} we have a control comb $\mathbf{Z}_{\hat{n} \emptyset}$ that maps a comb with ${\hat{n}}$ available intermediate times to a channel with no remaining intermediate times/slots. On the other hand, the control comb in Fig.~\ref{fig:superprocess} transforms a comb with intermediate times $\{t_1, t_2, t_3\}$ to one with a single intermediate time $t_2$. 

We denote the action of a control comb $\mathbf{Z}_{\hat{n}\hat{m}}$ on $\mathbf{T}_{\hat{n}}$ in terms of a a `braket' notation~\cite{extractingdynamicalquantumresources} as
\begin{gather}\label{eq:moreoptimal}
 \mathbf{T}'_{\hat{m}} = \llbracket \mathbf{T}_{\hat{n}} |\mathbf{Z}_{\hat{n} \hat{m}} \rrbracket  = \llbracket \mathbf{T}_{\hat{n}} |\mathbf{Z}_{\hat{n} \hat{n}} |\mathbf{I}_{\hat{n} \setminus \hat{m}} \rrbracket .
\end{gather} 
Mathematically this corresponds to contraction of the indices corresponding to times in $\hat{n}$ but not in $\hat{m}$ using the link product~\cite{theoreticalframeworkforquantumnetworks}, i.e., $\llbracket \mathbf{T}_{\hat{n}} |\mathbf{Z}_{\hat{n} \hat{m}} \rrbracket := d_s^{|\mathbf{T} \cap \mathbf{Z}|}\text{tr}_{\mathbf{T} \cap \mathbf{Z}}[\mathbf{T}_{\hat{n}}^{\mathrm{T}} \mathbf{Z}_{\hat{n} \hat{m}}]$, where $\text{tr}_{\mathbf{T} \cap \mathbf{Z}}$ is the trace over the spaces that both $\mathbf{T}_{\hat{n}}$ and $\mathbf{Z}_{\hat{n} \hat{m}} $ are defined on, $|\mathbf{T} \cap \mathbf{Z}|$ is the number of subsystems that are to be traced over, and $\bullet^\mathrm{T}$ denotes the transpose.\footnote{Here, the definition of the link product differs from the original one in Ref.~\cite{theoreticalframeworkforquantumnetworks} by a factor $d_s^{|\mathbf{T} \cap \mathbf{Z}|}$. This stems from the fact that the Choi matrices we use are normalised to unit trace.} See Fig.~\ref{fig:superprocess} for a graphical depiction. $\mathbf{Z}_{\hat{n} \hat{m}}$ and $\mathbf{Z}_{\hat{n} \hat{n}}$ are both superprocesses -- transformations on process tensors, in the same sense that superchannels are transformations of quantum channels~\cite{resourcetheoriesofmultitime}. Throughout, we employ the convention that $\mathbf{Z}_{\hat{n} \hat{m}}$ is the superprocess obtained from $\mathbf{Z}_{\hat{n} \hat{n}}$ via \textit{temporal coarse-graining}, i.e., by inserting identity maps into $\mathbf{Z}_{\hat{n} \hat{n}}$ at times $\hat{n} \setminus \hat{m}$. This operation is denoted by $\mathbf{Z}_{\hat{n} \hat{m}} = \llbracket \mathbf{Z}_{\hat{n} \hat{n}} |\mathbf{I}_{\hat{n} \setminus \hat{m}} \rrbracket = \text{tr}_{\hat{n}\setminus \hat{m}}[\mathbf{Z}_{\hat{n} \hat{n}} \mathbf{I}^\mathrm{T}_{\hat{n} \setminus \hat{m}}$], where $\text{tr}_{\hat{n}\setminus \hat{m}}$ is the trace over all Hilbert spaces that pertain to the times $\hat{n} \setminus \hat{m}$ and $\mathbf{I}_{\hat{n} \setminus \hat{m}}$ consists solely of Choi states of identity channels at times $\hat{n} \setminus \hat{m}$. That is, $\mathbf{I}_{\hat{n}\setminus\hat{m}} = \bigotimes_{t_k \in \hat{n} \setminus \hat{m}} \Psi^+_{t_k}$, where $\Psi^+_{t_k}$ is the maximally entangled state on $\mathcal{H}_{\text{out}_k} \otimes \mathcal{H}_{\text{in}_k}$. 

Indeed, \textit{any} superprocess $\mathbf{Z}_{\hat{n}\hat{m}}$ that changes the number of timesteps can always be realised from one at the finest level of temporal resolution (here: $\hat{n}$), followed by temporal coarse-graining~\cite{extractingdynamicalquantumresources}, justifying the second equality in Eq.~\eqref{eq:moreoptimal}. Consequently, $\mathbf{Z}_{\hat{n} \hat{n}}$ can be understood as an object that can act to the left -- yielding a transformed process $\mathbf{T}'_{\hat{n}} = \llbracket \mathbf{T}_{\hat{n}} |\mathbf{Z}_{\hat{n} \hat{n}}$ --  or to the right -- yielding a transformed control $\mathbf{Z}_{\hat{n}\hat{m}} =  \mathbf{Z}_{\hat{n} \hat{n}} |\mathbf{I}_{\hat{n} \setminus \hat{m}} \rrbracket$. This is the motivation for the `braket' notation we employ. Throughout, our results will primarily be derived from the properties of these objects at the abstract `braket'-level, reducing the need to explicitly compute link products in most cases.

\begin{figure*}[ht!] 
        \centering
\includegraphics[width=0.75\linewidth]{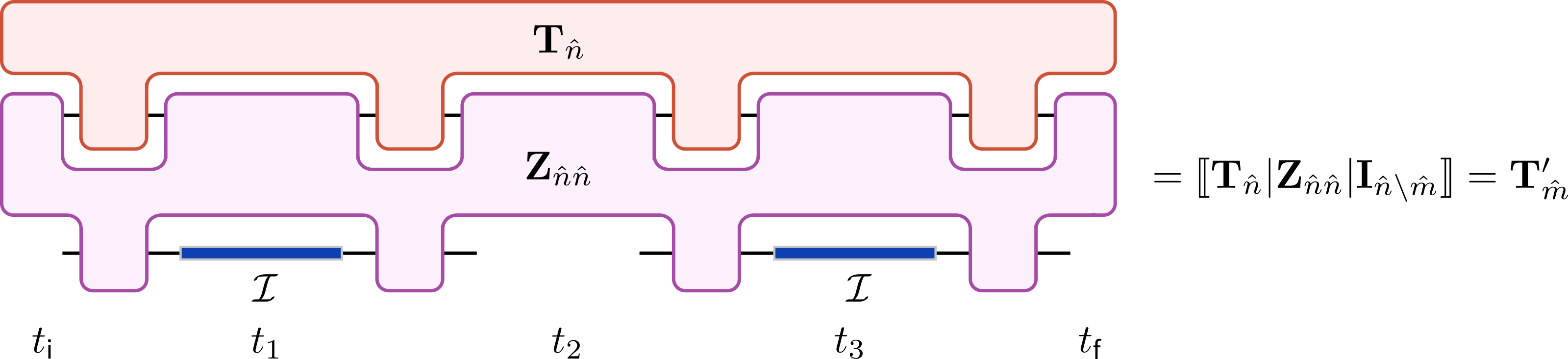}
\caption{\textbf{Action of a superprocess and coarse-graining.} A superprocess $\mathbf{Z}_{\hat{n}\hat{n}}$ maps a process on times $\hat{n}$ to another process on times $\hat{n}$. Temporal coarse-graining -- depicted by the identity channel $\mathcal{I}$ -- at times $\hat{n}\setminus \hat{m}$ reduces the temporal resolution from $\hat{n}$ to $\hat{m}$. The resulting process $\mathbf{T}'_{\hat{m}}$ is defined on the times $\hat{m}$ and given by $\llbracket \mathbf{T}_{\hat{n}}|\mathbf{Z}_{\hat{n}\hat{n}}|\mathbf{I}_{\hat{n}\setminus \hat{m}}\rrbracket = \llbracket \mathbf{T}_{\hat{n}}|\mathbf{Z}_{\hat{n}\hat{m}} \rrbracket$. Here, $\hat{n} = \{t_\text{i},t_1, t_2, t_3, t_\text{f}\}$ and $\hat{m}= \{t_\text{i},t_2, t_\text{f}\}$. }
\label{fig:superprocess}
\end{figure*}

\subsection{Restricted control combs} \label{sec:restrictedcombs}
In this work, we seek to quantify the utility of a quantum process under specified operational constraints, corresponding to limitations on the set $\mathsf{Z}$ of experimentally possible transformations $\mathbf{Z}_{\hat{n} \hat{m}}$. While the tools we present apply to any quantum control task, we focus on the problem of noise reduction and dynamical decoupling protocols, as in Ref.~\cite{extractingdynamicalquantumresources}. There, independent operations on the system $s$ at times $\hat{n}$ are employed to denoise its evolution. Consequently, our primary attention is this type of constraint, which corresponds to the \textit{independent quantum instruments} (IQI)~\cite{extractingdynamicalquantumresources} scenario. Specifically, IQI only allows for pre- and post-processing of each slot by means of independent quantum operations, potentially followed by coarse-graining, i.e., insertion of an identity map (see Fig.~\ref{fig:processtensorsuperprocess}). At each individual time, adaptive behaviour is allowed (e.g., performing measurements directly followed by conditional state preparations), but no memory is transmitted between individual times. Throughout, we only consider \textit{deterministic} control, where the action of each map is given by a quantum channel, i.e., a CPTP map.  

Structurally, the restriction to IQI superprocesses implies that the allowed superprocesses $\mathbf{Z}_{\hat{n}\hat{n}}$ at the finest level of temporal access $\hat{n}$ are themselves Markovian processes as in Eq.~\eqref{eq:Markovian_process}, i.e. all allowed $\mathbf{Z}_{\hat{n}\hat{n}} \in \mathsf{Z}_{\hat{n}\hat{n}}$ satisfy
\begin{equation}
    \mathbf{Z}_{\hat{n}\hat{n}} = \mathbf{W}_{t_{\text{f}}} \otimes \mathbf{V}_{t_{n}} \otimes \mathbf{W}_{t_{n}} \otimes \dots \otimes \mathbf{V}_{t_{1}} \otimes \mathbf{W}_{t_{1}} \otimes \mathbf{V}_{t_{\text{i}}},
\end{equation}
where $\mathbf{V}_{t_{k}} \in \mathcal{L}(\mathcal{H}_{\text{out}_k} \otimes \mathcal{H}_{\text{out}_k'})$ is the Choi state of a pre-processing operation $\mathcal{V}^s_{t_{k}}: \mathcal{L}(\mathcal{H}_{\text{out}_k'}) \rightarrow \mathcal{L}(\mathcal{H}_{\text{out}_k})$ at time $t_k$, and $\mathbf{W}_{t_{k+1}} \in \mathcal{L}(\mathcal{H}_{\text{in}_{k+1}'} \otimes \mathcal{H}_{\text{in}_{k+1}})$ is the Choi state of the corresponding post-processing operation $\mathcal{W}^s_{t_{k+1}}: \mathcal{L}(\mathcal{H}_{\text{in}_{k+1}}) \rightarrow \mathcal{L}(\mathcal{H}_{\text{in}'_{k+1}})$ (see Fig.~\ref{fig:processtensorsuperprocess} for a graphical depiction). As a result, we have $\llbracket\mathbf{T}_{\hat{n}}|\mathbf{Z}_{\hat{n}\hat{n}} =: \mathbf{T}_{\hat{n}}' \in \mathcal{L}(\mathcal{H}_{\text{in}'_\text{f}} \otimes \mathcal{H}_{\text{out}'_n} \otimes \mathcal{H}_{\text{in}'_n} \otimes \cdots \mathcal{H}_{\text{in}'_1} \otimes \mathcal{H}_{\text{out}'_\text{i}})$. We emphasise that all superprocesses $\mathbf{Z}_{\hat{n}\hat{m}} \in \mathsf{Z}_{\hat{n}\hat{m}}$ mapping the comb $\mathbf{T}_{\hat{n}}$ to one with fewer intermediate times $\hat{m}$ can be expressed as $\mathbf{Z}_{\hat{n}\hat{m}} = \mathbf{Z}_{\hat{n}\hat{n}}|\mathbf{I}_{\hat{n}\setminus \hat{m}}\rrbracket$ for some $\mathbf{Z}_{\hat{n}\hat{n}} \in \mathsf{Z}_{\hat{n}}$~\cite{ extractingdynamicalquantumresources}, i.e., they can be obtained from $\mathsf{Z}_{\hat{n}\hat{n}}$ via temporal coarse-graining.

Correlations in the process tensor $\mathbf{T}_{\hat{n}}$ constitute a resource that can be exploited via superprocesses $\mathbf{Z}_{\hat{n}\hat{n}}$ for better performance at a given task (like, for example, denoising). Consequently, the set of \textit{resourceless/free processes} is the set of those processes that are \textit{fully uncorrelated}, i.e., of the form
\begin{align} \label{eq:fullyuncorrelated}
\notag
    \mathbf{T}^{\mathscr{U}}_{\hat{n}} &= \mathbf{T}_{t_n} \otimes \mathbf{T}_{t_{n-1}} \otimes \cdots \otimes \mathbf{T}_{t_1} \otimes \mathbf{T}_{t_\text{i}}\\
    \notag
    &= (\rho_{\text{in}_\text{f}} \otimes \mathbbm{1}_{\text{out}_n}/d_{\text{out}_n}) \otimes (\rho_{\text{in}_n} \otimes \mathbbm{1}_{\text{out}_{n-1}}/d_{\text{out}_{n-1}}) \otimes \cdots \\
    &\phantom{=}\cdots\otimes (\rho_{\text{in}_2} \otimes \mathbbm{1}_{\text{out}_1}/d_{\text{out}_1}) \otimes (\rho_{\text{in}_1} \otimes \mathbbm{1}_{\text{out}_\text{i}}/d_{\text{out}_\text{i}}) .
\end{align}
Since these processes do not display any correlations, they cannot preserve any information in time. IQI superprocesses $\mathbf{Z}_{\hat{n}\hat{n}} \in \mathsf{Z}_{\hat{n}\hat{n}}$ map such resourceless processes onto resourceless processes~\cite{resourcetheoriesofmultitime}, i.e., IQI processes cannot create temporal correlations. Consequently, the presence of correlations within $\mathbf{T}_{\hat{m}}$  -- or, equivalently, its ability to preserve information in time -- is deemed a resource within the framework of IQI. In what follows, we will analyse how this resource can be employed by means of IQI superprocesses for the task of noise reduction.

We emphasise that the allowed experimental control within IQI encompasses traditional (Markovian) error correction methods as well as all fully open-loop control, including dynamical decoupling (DD).

\begin{figure*}[ht!] 
        \centering
        \begin{tikzpicture}[scale=0.4]
\draw[ mediumgrey, thick,dashed] (0+0.0,-3) -- (0+0.0,-7);
\draw[ mediumgrey, thick,dashed] (2+0.0,-3) -- (2+0.0,-7);
\draw[ mediumgrey, thick,dashed] (0+8.0,-3) -- (0+8.0,-7);
\draw[ mediumgrey, thick,dashed] (2+8.0,-3) -- (2+8.0,-7);
\draw[ mediumgrey, thick,dashed] (0+16.0,-3) -- (0+16.0,-7);
\draw[ mediumgrey, thick,dashed] (2+16.0,-3) -- (2+16.0,-7);
\draw[ mediumgrey, thick,dashed] (0+24.0,-3) -- (0+24.0,-7);
\draw[ mediumgrey, thick,dashed] (2+24.0,-3) -- (2+24.0,-7);

\draw[myred,fill=myredfill, thick,solid,rounded corners=4] (-2,3-4) -- (-2,4-4) -- (28,4-4) -- (28,2+0.1-4) -- (26-0.1,2+0.1-4) -- (26-0.1,0+0.1-4) -- (24+0.1,0+0.1-4) -- (24+0.1,2+0.1-4)-- (18-0.1,2+0.1-4) -- (18-0.1,0+0.1-4) -- (16+0.1,0+0.1-4) -- (16+0.1,2+0.1-4) -- (10-0.1,2+0.1-4) -- (10-0.1,0+0.1-4) -- (8+0.1,0+0.1-4) -- (8+0.1,2+0.1-4) -- (2-0.1,2+0.1-4) -- (2-0.1,0+0.1-4) -- (0+0.1,0+0.1-4) -- (0+0.1,2+0.1-4) -- (-2,2+0.1-4) -- (-2,3-4)   ;

\draw[black, very thick,solid] (-1,3-4) -- (27,3-4);
\draw[black, very thick,solid] (26.8,2.7-4) -- (27.2,3.3-4);
\draw[black, very thick,solid] (-1.5,1-4) -- (27.5,1-4);

\draw[myred,fill=myredfill,very thick,solid,rounded corners=2] (-1.7,3.7-4) rectangle (-0.3,2.3-4);

\draw[myred,fill=myredfill,very thick,solid,rounded corners=2] (0.3,3.7-4) rectangle (1.7,0.3-4);
\draw[myred,fill=myredfill,very thick,solid,rounded corners=2] (8.3,3.7-4) rectangle (9.7,0.3-4);
\draw[myred,fill=myredfill,very thick,solid,rounded corners=2] (16.3,3.7-4) rectangle (17.7,0.3-4);
\draw[myred,fill=myredfill,very thick,solid,rounded corners=2] (24.3,3.7-4) rectangle (25.7,0.3-4);

\draw[white,fill=white,ultra thick,solid,rounded corners=2] (-1.7,1.7-4) rectangle (-0.3,-1.7-4);
\draw[white,fill=white,ultra thick,solid,rounded corners=2] (4.3-2,1.7-4) rectangle (5.7+2,-1.7-4);
\draw[white,fill=white,ultra thick,solid,rounded corners=2] (12.3-2,1.7-4) rectangle (13.7+2,-1.7-4);
\draw[white,fill=white,ultra thick,solid,rounded corners=2] (20.3-2,1.7-4) rectangle (21.7+2,-1.7-4);
\draw[white,fill=white,ultra thick,solid,rounded corners=2] (26.3,1.7-4) rectangle (27.7,-1.7-4);

\draw[] (-1,3-4) node {\small $\rho^{\text{e}}_{t_{\text{i}}}$};
\draw[] (1,2-4) node[rotate=90] {\small $\mathcal{T}_{{t_1:t_{\text{i}}}}$};
\draw[] (9,2-4) node[rotate=90] {\small $\mathcal{T}_{t_2:t_1}$};
\draw[] (17,2-4) node[rotate=90] {\small $\mathcal{T}_{t_3:t_2}$};
\draw[] (25,2-4) node[rotate=90] {\small $\mathcal{T}_{t_{\text{f}}:t_3}$};

\draw[] (-1,1-4) node { $t_{\text{i}}$};
\draw[] (5,1-4) node { $t_1$};
\draw[] (13,1-4) node { $t_2$};
\draw[] (21,1-4) node { $t_3$};
\draw[] (27,1-4) node { $t_{\text{f}}$};

\draw[] (13,5-4-0.25) node {\large Noise process $\mathbf{T}_{\hat{n}}$};


\draw[black, very thick,solid] (-2.1,-7) -- (0.1,-7);
\draw[black, very thick,solid] (-2.1+4,-7) -- (0.1+4,-7);
\draw[black, very thick,solid] (-2.1+8,-7) -- (0.1+8,-7);
\draw[black, very thick,solid] (-2.1+12,-7) -- (0.1+12,-7);
\draw[black, very thick,solid] (-2.1+16,-7) -- (0.1+16,-7);
\draw[black, very thick,solid] (-2.1+20,-7) -- (0.1+20,-7);
\draw[black, very thick,solid] (-2.1+24,-7) -- (0.1+24,-7);
\draw[black, very thick,solid] (-2.1+28,-7) -- (0.1+28,-7);

\draw[mypurple,fill=mypurplefill,very thick,solid,rounded corners=2] (-1.7,1.7-8) rectangle (-0.3,-1.7-8+2);
\draw[mypurple,fill=mypurplefill,very thick,solid,rounded corners=2] (4.3-2,1.7-8) rectangle (5.7-2,-1.7-8+2);
\draw[mypurple,fill=mypurplefill,very thick,solid,rounded corners=2] (4.3+2,1.7-8) rectangle (5.7+2,-1.7-8+2);
\draw[mypurple,fill=mypurplefill,very thick,solid,rounded corners=2] (12.3-2,1.7-8) rectangle (13.7-2,-1.7-8+2);
\draw[mypurple,fill=mypurplefill,very thick,solid,rounded corners=2] (12.3+2,1.7-8) rectangle (13.7+2,-1.7-8+2);
\draw[mypurple,fill=mypurplefill,very thick,solid,rounded corners=2] (18.3,1.7-8) rectangle (19.7,-1.7-8+2);
\draw[mypurple,fill=mypurplefill,very thick,solid,rounded corners=2] (20.3+2,1.7-8) rectangle (21.7+2,-1.7-8+2);
\draw[mypurple,fill=mypurplefill,very thick,solid,rounded corners=2] (26.3,1.7-8) rectangle (27.7,-1.7-8+2);

\draw[] (-1,0-7) node[rotate=0] {\small $\mathcal{V}_{t_{\text{i}}}$};
\draw[] (5-2,0-7) node[rotate=0] {\small $\mathcal{W}_{t_{1}}$};
\draw[] (5+2,0-7) node[rotate=0] {\small $\mathcal{V}_{t_{1}}$};
\draw[] (13-2,0-7) node[rotate=0] {\small $\mathcal{W}_{t_{2}}$};
\draw[] (13+2,0-7) node[rotate=0] {\small $\mathcal{V}_{t_{2}}$};
\draw[] (21-2,0-7) node[rotate=0] {\small $\mathcal{W}_{t_{3}}$};
\draw[] (21+2,0-7) node[rotate=0] {\small $\mathcal{V}_{t_{3}}$};
\draw[] (27,0-7) node[rotate=0] {\small $\mathcal{W}_{t_{\text{f}}}$};

\draw[] (13,-5-0.25) node {Memoryless superprocess $\mathbf{Z}_{\hat{n}\hat{n}}$ acts on $\mathbf{T}_{\hat{n}}$};


\draw[black, very thick,solid] (-1,3-18+2+3) -- (27,3-18+2+3);
\draw[black, very thick,solid] (26.8,2.7-18+2+3) -- (27.2,3.3-18+2+3);
\draw[black, very thick,solid] (-2,1-18+2+3) -- (28,1-18+2+3);

\draw[myred,fill=myredfill, thick,solid,rounded corners=4] (-2,3-18+2+3) -- (-2,4-18+2+3) -- (28,4-18+2+3) -- (28,2+0.1-18+2+3) -- (26-0.1,2+0.1-18+2+3) -- (26-0.1,0+0.1-18+2+3) -- (24+0.1,0+0.1-18+2+3) -- (24+0.1,2+0.1-18+2+3)-- (18-0.1,2+0.1-18+2+3) -- (18-0.1,0+0.1-18+2+3) -- (16+0.1,0+0.1-18+2+3) -- (16+0.1,2+0.1-18+2+3) -- (10-0.1,2+0.1-18+2+3) -- (10-0.1,0+0.1-18+2+3) -- (8+0.1,0+0.1-18+2+3) -- (8+0.1,2+0.1-18+2+3) -- (2-0.1,2+0.1-18+2+3) -- (2-0.1,0+0.1-18+2+3) -- (0+0.1,0+0.1-18+2+3) -- (0+0.1,2+0.1-18+2+3) -- (-2,2+0.1-18+2+3) -- (-2,3-18+2+3)   ;

\draw[white,fill=white,ultra thick,solid,rounded corners=2] (4.3,1.7-18+2+3) rectangle (5.7,-1.7-18+2+3);
\draw[white,fill=white,ultra thick,solid,rounded corners=2] (12.3,1.7-18+2+3) rectangle (13.7,-1.7-18+2+3);
\draw[white,fill=white,ultra thick,solid,rounded corners=2] (20.3,1.7-18+2+3) rectangle (21.7,-1.7-18+2+3);

\draw[] (13,3-18+2+3) node {\large Transformed process $\mathbf{T}'_{\hat{n}}$};


\draw[ mediumgrey, thick,dashed] (4+0.3,-15+3) -- (4+0.3,-17+3-1);
\draw[ mediumgrey, thick,dashed] (6-0.3,-15+3) -- (6-0.3,-17+3-1);
\draw[ mediumgrey, thick,dashed] (4+0.3+8,-15+3) -- (4+0.3+8,-17+3-1);
\draw[ mediumgrey, thick,dashed] (6-0.3+8,-15+3) -- (6-0.3+8,-17+3-1);
\draw[ mediumgrey, thick,dashed] (4+0.3+16,-15+3) -- (4+0.3+16,-17+3-1);
\draw[ mediumgrey, thick,dashed] (6-0.3+16,-15+3) -- (6-0.3+16,-17+3-1);

\draw[myblue,line width=3pt] (4.3,1-26+8+3-1) -- (5.7,1-26+8+3-1);

\draw[myblue,line width=3pt] (12.3,1-26+8+3-1) -- (13.7,1-26+8+3-1);

\draw[myblue,line width=3pt] (20.3,1-26+8+3-1) -- (21.7,1-26+8+3-1);

\draw[] (5,-18.5+3-0.25) node {\small $\mathcal{I}$};
\draw[] (13,-18.5+3-0.25) node {\small $\mathcal{I}$};
\draw[] (21,-18.5+3-0.25) node {\small $\mathcal{I}$};

\draw[] (13,2-26+8+3-1) node {Coarse-graining $\mathbf{I}_{\hat{n} \setminus \hat{m}}$ acts on $\mathbf{T}'_{\hat{n}}$};

\draw[line width=1pt,mediumgrey, decoration={brace,raise=0pt,amplitude=8pt},decorate]
  (-2.3,-8) --  (-2.3,0);

  \draw[] (-2.3-1.5,-4) node[rotate=90] {\small $\mathbf{T}'_{\hat{n}} =\llbracket \mathbf{T}_{\hat{n}} |\mathbf{Z}_{\hat{n} \hat{n}} \rrbracket$};

  \draw[line width=1pt,mediumgrey, decoration={brace,raise=0pt,amplitude=8pt},decorate]
  (-2.3,-15) --  (-2.3,-9);
  \draw[] (-2.3-1.5,-12) node[rotate=90] {\small $\mathbf{T}'_{\emptyset} = \llbracket \mathbf{T}'_{\hat{n}} |\mathbf{I}_{\hat{n} \setminus \emptyset} \rrbracket$};

\end{tikzpicture}

\caption{\textbf{Noise reduction in IQI.} The process tensor $\mathbf{T}_{\hat{n}}$ is accessible at intermediate times $\hat{n}$, plus initial and final times $t_{\text{i}},t_{\text{f}}$. The superprocess $\mathbf{Z}_{\hat{n}\hat{n}}$ corresponds to experimental control, and transforms the process tensor $\mathbf{T}_{\hat{n}}$ on $\hat{n}$ times to another process $\mathbf{T}_{\hat{n}}'$ on the same set of times. In the operational scenario of IQI experimental control is memoryless, implying that $\mathbf{Z}_{\hat{n}\hat{n}}$ is realised by predetermined sequences of pre- and post-processing channels $\mathcal{V}_{t_{k}}: \mathcal{L}(\mathcal{H}_{\text{out}'_k}) \rightarrow \mathcal{L}(\mathcal{H}_{\text{out}_k})$ and $\mathcal{W}_{t_{k+1}}: \mathcal{L}(\mathcal{H}_{\text{in}_{k+1}}) \rightarrow \mathcal{L}(\mathcal{H}_{\text{in}'_{k+1}})$. Note that, in principle, $\mathcal{V}_{t_{k}}$ and $\mathcal{W}_{t_{k+1}}$ can change the dimension of the involved spaces, i.e., we can have $\mathcal{H}_{\text{in}_k} \ncong \mathcal{H}_{\text{in}'_k}$ and $\mathcal{H}_{\text{out}_k} \ncong \mathcal{H}_{\text{out}_k'}$. Temporal coarse-graining $\mathbf{I}_{\hat{n} \setminus \hat{m}}$ removes times from $\hat{n}$ that are not in $\hat{m}$. The resource quantifiers $\overline{I}_{\hat{m}},\overline{M}_{\hat{m}},\overline{N}_{\hat{m}}$ are concerned with the temporal correlations of the process at the coarse-grained level $\hat{m}$ of access. As depicted here, $\hat{m}$ is the empty set $\emptyset$, meaning that the resultant process after coarse-graining has no intermediate times. The relevant quantifier $\overline{I}_{\emptyset}=\overline{M}_{\emptyset}$ then becomes the maximum mutual information of the resultant channel that one can obtain under superprocesses $\mathbf{Z}_{\hat{n}\hat{n}}$ from IQI. If this mutual information of the resulting channel is larger than that of the one one would have obtained from performing \textit{no} operations (i.e., only temporal coarse-graining), then the underlying resources within $\mathbf{T}_{\hat{n}}$ have successfully be employed for denoising.} \label{fig:processtensorsuperprocess}
\end{figure*}
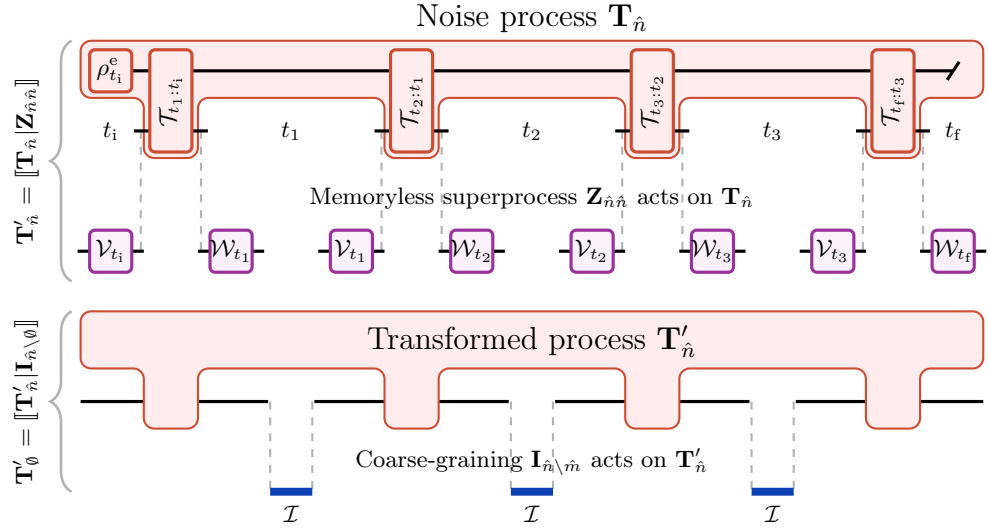

For the remainder of the manuscript, we will focus on DD scenarios, primarily because quantum error correction only works in the many-copy regime, while DD can reduce noise on a single copy of a quantum system. Ref.~\cite{extractingdynamicalquantumresources} interpreted this as \textit{temporal} resource distillation, as opposed to the spatial resource distillation of quantum error correction. We aim to formally quantify this effect.

\section{Resource Quantifiers} \label{sec:monotones}
Temporal correlations in a multi-time process $\mathbf{T}_{\hat{n}}$ constitute a resource for many practical tasks. To quantify them, we consider two distinct marginal processes of $\mathbf{T}_{\hat{n}}$,
\begin{gather} \label{eq:marginals}
     \mathbf{T}^{\text{Mkv}}_{\hat{n}} \! := \! \bigotimes_{j\in \text{J}}  {\rm tr}_{\bar{j}} \{ \mathbf{T}_{\hat{n}}\}
     \ \ \ \mbox{and} \ \ \
     \mathbf{T}^{\text{marg}}_{\hat{n}} \! := \! \bigotimes_{k\in K}  {\rm tr}_{\bar{k}} \{ \mathbf{T}_{\hat{n}}\}.
\end{gather}
The index set $\text{J} = \{\text{out}_\text{i}\text{in}_1, \text{out}_\text{1}\text{in}_2, \text{out}_\text{2}\text{in}_3, \dots, \text{out}_\text{n}\text{in}_\text{f}\}$ with $|\text{J}| = n+1$ enumerates the constituent channels of the process, and the set $\text{K} = \{\text{out}_\text{i}, \text{in}_1, \text{out}_1, \dots, \text{in}_n, \text{out}_n, \text{in}_\text{f}\}$ with $|\text{K}| = 2(n+1)$ splits this further into the marginal inputs and outputs of those channels. The overline on an index signifies its complement. As such, $\mathbf{T}^{\text{Mkv}}_{\hat{n}}$ only contains the Markovian temporal correlations of $\mathbf{T}_{\hat{n}}$, while  $\mathbf{T}^{\text{marg}}_{\hat{n}}$ contains no temporal correlations whatsoever. These are the nearest Markovian and uncorrelated processes to $\mathbf{T}_{\hat{n}}$, respectively, as measured by quantum relative entropy $S(x\|y) := \mbox{tr}\{x \log(x) - x \log(y)\}$.\footnote{Here, all process tensors are normalised to unit trace, such that entropic measures are well-defined} Evidently, if $\mathbf{T}_{\hat{n}}$ is Markovian, i.e., it is of the form of Eq.~\eqref{eq:Markovian_process}, then $\mathbf{T}_{\hat{n}} = \mathbf{T}^{\text{Mkv}}_{\hat{n}}$, but generally $\mathbf{T}_{\hat{n}} \neq \mathbf{T}^{\text{marg}}_{\hat{n}}$, while non-Markovian processes satisfy $\mathbf{T}_{\hat{n}} \neq \mathbf{T}^{\text{Mkv}}_{\hat{n}}$ and $\mathbf{T}_{\hat{n}} \neq \mathbf{T}^{\text{marg}}_{\hat{n}}$. Given the above definitions, we can define the total information $I$, the Markovian information $M$, and the non-Markovianity $N$ as
\begin{gather}
\begin{split}
    \label{eq:Imonotoneresolution}
    I(\mathbf{T}_{\hat{n}}) \! &:= \! S\left(\mathbf{T}_{\hat{n}} \| \mathbf{T}^{\text{marg}}_{\hat{n}} \right), \\
    M(\mathbf{T}_{\hat{n}}) &:=S\left( \mathbf{T}^{\text{Mkv}}_{\hat{n}} \| \mathbf{T}^{\text{marg}}_{\hat{n}} \right), \\
    \mbox{and} \ \ N(\mathbf{T}_{\hat{n}}) \! &:= \! S\left(\mathbf{T}_{\hat{n}} \| \mathbf{T}^{\text{Mkv}}_{\hat{n}} \right)  .
\end{split}
\end{gather}

$I,M,N$ are not monotonic under the constraints of IQI, in the sense that there exist superprocesses $\mathbf{Z}_{\hat{n}\hat{m}}$ in IQI, such that, e.g., $I(\llbracket \mathbf{T}_{\hat{n}}|\mathbf{Z}_{\hat{n}\hat{m}}) \geq I(\mathbf{T}_{\hat{n}})$~\cite{zambon2024processtensordistinguishabilitymeasures}. As a consequence, these functions cannot be interpreted as resource quantifiers on their own, since they can increase under the superprocesses that IQI allows for. 

A concrete example of this failure of monotonicity for a $1$-slot comb was provided in Ref.~\cite{zambon2024processtensordistinguishabilitymeasures}: Consider a qubit system $s$ with a two qubit environment $\mathcal{H}_e= \mathcal{H}_{e_1}\otimes \mathcal{H}_{e_2}$ that is in the initial state $\rho_{t_\text{i}}^{e} = \ketbra{0^{e_1}}{0^{e_1}} \otimes \mathbbm{1}^{e_2}/2$. The $|\hat{n}|=1$ intermediate time process $\mathbf{T}_{\hat{n}}$ stems from two intermediate system-environment dynamics between $t_\text{i}$ and $t_1$ and between $t_1$ and $t_\text{f}$, respectively. The former corresponds to a $se_1$ SWAP gate. The second system-environment dynamics consists of a concatenation of a dephasing gate on $s$, followed by an $s$-controlled $e_1 e_2$ SWAP gate, and finally an $s e_1$ SWAP gate. 

The overall behaviour of $\mathbf{T}_{\hat{n}}$ is that the first output of the system at time $t_1$ is $\ketbra{0}{0}$ regardless of the input at time $t_\text{i}$, and the output of the process at time $t_\text{f}$ is $p \rho^s_1 + (1-p)\mathbbm{1}/2$, where $p = \braket{0|\rho^s_2|0}$ and $\rho^s_2$ is the state of the system $s$ entering the process at $t_{1}$. It can be directly verified that, overall, this process satisfies $I(\mathbf{T}_{\hat{n}})=1$. However, it is easy to see that inserting an identity channel at time $t_1$ yields a coarse-grained process $\mathbf{T}_{\emptyset}$ that corresponds to an identity channel from $t_\text{i}$ to $t_\text{f}$, resulting in $I(\mathbf{T}_{\emptyset})=2$. This value is higher than the one before the coarse-graining, even though the performed operation on $\mathbf{T}_{\hat{n}}$ that lies within IQI. From this example, we can see that the Choi state of a process tensor alone does not capture the potential correlations it can exhibit when acted upon (e.g., by coarse-graining it). Due to the non-monotonicity of relative entropy under such operations, the correlations contained in the resulting process tensor can be greater than those of the Choi state itself. To remedy this issue, and to construct a well-behaved resource quantifier based on the relative entropy, \textit{all} possible choices of control must be considered to ensure monotonicity.

In particular, the \textit{optimal} value of any function -- such as $I,M,N$ -- that can be obtained subject to the allowed transformations (like, e.g., those in IQI)  must be non-increasing under said allowed transformations as well as temporal coarse-graining. This is a direct consequence of the fact that all transformations of the form $\mathbf{Z}_{\hat{n}\hat{m}}$ can be expressed as $\mathbf{Z}_{\hat{m}\hat{n}} = \mathbf{Z}_{\hat{m}\hat{n}}|\mathbf{I}_{\hat{n}\setminus \hat{m}} \rrbracket$, i.e., they stem from a control comb $\mathbf{Z}_{\hat{n}\hat{n}}$ followed by temporal coarse-graining at times $\hat{n} \setminus \hat{m}$. On the other hand, more processes can be `reached' by first applying $\mathbf{Z}_{\hat{n}\hat{n}} \in \mathsf{Z}_{\hat{n}\hat{n}}$ on $\mathbf{T}_{\hat{n}}$ and then temporal coarse-graining at times $\hat{n}\setminus \hat{m}$, than by first applying temporal coarse-graining and then applying a `lower resolution' superprocess $\mathbf{Z}_{\hat{m}\hat{m}} \in\mathsf{Z}_{\hat{m}\hat{m}}$. Put differently, there are processes $\mathbf{T}_{\hat{n}}$ and superprocesses $\mathsf{Z}_{\hat{n}\hat{n}} \in \mathsf{Z}_{\hat{n}}$, such that $\llbracket \mathbf{T}_{\hat{n}}|\mathbf{Z}_{\hat{n}\hat{n}}|\mathbf{I}_{\hat{n}\setminus \hat{m}}\rrbracket$ can \textit{not} be expressed as $\big\llbracket \llbracket\mathbf{T}_{\hat{n}}|\mathbf{I}_{\hat{n}\setminus \hat{m}}\rrbracket\big\vert \mathbf{Z}_{\hat{m}\hat{m}}\rrbracket \big\rrbracket$ for any $\mathbf{Z}_{\hat{m}\hat{m}} \in \mathsf{Z}_{\hat{m}\hat{m}}$. 

This property, that having less temporal resolution can never increase the range of potential control, is referred to as the \textit{irreversibility of temporal coarse-graining}~\cite{extractingdynamicalquantumresources}. With this, we can use $I, M$, and $N$ to define three families of resource quantifiers that satisfy the desired monotonicity property:
\begin{theorem}[Irreversibility Resource Quantifiers]
\label{thm::irrev}
Let $\hat{m} \subseteq \hat{n}$. Then,
\begin{gather}
\begin{split}
    \overline{I}_{\hat{m}}(\mathbf{T}_{\hat{n}}) \! &:= \! \sup_{\mathbf{Z}_{\hat{n}\hat{n}} \in \mathsf{Z}_{\hat{n}\hat{n}}} I\Big( \llbracket \mathbf{T}_{\hat{n}} | \mathbf{Z}_{\hat{n}\hat{n}} | \mathbf{I}_{{\hat{n}}\setminus{\hat{m}}} \rrbracket  \Big), \\
    \overline{M}_{\hat{m}}(\mathbf{T}_{\hat{n}}) \! &:= \! \sup_{\mathbf{Z}_{\hat{n}\hat{n}} \in \mathsf{Z}_{\hat{n}\hat{n}}} M\Big( \llbracket \mathbf{T}_{\hat{n}} | \mathbf{Z}_{\hat{n}\hat{n}} | \mathbf{I}_{{\hat{n}}\setminus{\hat{m}}} \rrbracket  \Big), \\
    \mbox{\text{\emph{and}}} \ \ \overline{N}_{\hat{m}}(\mathbf{T}_{\hat{n}}) \! &:= \! \sup_{\mathbf{Z}_{\hat{n}\hat{n}} \in \mathsf{Z}_{\hat{n}\hat{n}}} N\Big( \llbracket \mathbf{T}_{\hat{n}} | \mathbf{Z}_{\hat{n}\hat{n}} | \mathbf{I}_{{\hat{n}}\setminus{\hat{m}}} \rrbracket  \Big),
\end{split}
\end{gather}
are monotonic in IQI.
\end{theorem}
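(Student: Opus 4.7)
The plan is to prove monotonicity uniformly for $\overline{I}_{\hat{m}}$, $\overline{M}_{\hat{m}}$, and $\overline{N}_{\hat{m}}$ by reducing all three cases to the same structural fact: every admissible transformation of $\mathbf{T}_{\hat{n}}$ via an IQI superprocess $\mathbf{Y}_{\hat{n}\hat{n}'}$ (with $\hat{m}\subseteq\hat{n}'\subseteq\hat{n}$) can be absorbed into the IQI superprocess over which each quantifier takes its supremum. Since the specific relative entropy functional appearing in the three definitions plays no role in this absorption step, a single argument will cover $I$, $M$, and $N$ at once.

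Two structural properties underlie the argument. First, I would verify that $\mathsf{Z}_{\hat{n}\hat{n}}$ is closed under link-product composition: since every element factorises as a tensor product of Choi states of pre- and post-processing CPTP maps at each time, the link product of two such objects contracts matching in/out pairs and returns an object of the same tensor-product form with CPTP blocks, because the concatenation of CPTP maps is CPTP. Second, I would invoke the irreversibility of temporal coarse-graining, established in Ref.~\cite{extractingdynamicalquantumresources} and recalled in the paragraph immediately preceding the theorem, in two directions: $\mathbf{Y}_{\hat{n}\hat{n}'}=\mathbf{Y}_{\hat{n}\hat{n}}|\mathbf{I}_{\hat{n}\setminus\hat{n}'}\rrbracket$ for some $\mathbf{Y}_{\hat{n}\hat{n}}\in\mathsf{Z}_{\hat{n}\hat{n}}$; and conversely, any $\mathbf{Z}_{\hat{n}'\hat{n}'}\in\mathsf{Z}_{\hat{n}'\hat{n}'}$ lifts to $\tilde{\mathbf{Z}}_{\hat{n}\hat{n}}:=\mathbf{Z}_{\hat{n}'\hat{n}'}\otimes\mathbf{I}_{\hat{n}\setminus\hat{n}'}\in\mathsf{Z}_{\hat{n}\hat{n}}$ by appending identity pre- and post-processing at the times $\hat{n}\setminus\hat{n}'$.

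With these in hand, the main estimate is a diagrammatic manipulation. Setting $\mathbf{W}_{\hat{n}\hat{n}}:=\llbracket\mathbf{Y}_{\hat{n}\hat{n}}|\tilde{\mathbf{Z}}_{\hat{n}\hat{n}}\rrbracket\in\mathsf{Z}_{\hat{n}\hat{n}}$ by closure, and noting that the coarse-graining decomposes as $\mathbf{I}_{\hat{n}\setminus\hat{m}}=\mathbf{I}_{\hat{n}\setminus\hat{n}'}\otimes\mathbf{I}_{\hat{n}'\setminus\hat{m}}$ while two successive identity insertions at the same slot compose to a single one, associativity of the link product delivers
\[
\llbracket\mathbf{T}_{\hat{n}}|\mathbf{Y}_{\hat{n}\hat{n}'}|\mathbf{Z}_{\hat{n}'\hat{n}'}|\mathbf{I}_{\hat{n}'\setminus\hat{m}}\rrbracket
= \llbracket\mathbf{T}_{\hat{n}}|\mathbf{W}_{\hat{n}\hat{n}}|\mathbf{I}_{\hat{n}\setminus\hat{m}}\rrbracket .
\]
As both sides denote the same process at temporal resolution $\hat{m}$, their relative entropies to any fixed marginal coincide; taking the supremum over $\mathbf{Z}_{\hat{n}'\hat{n}'}\in\mathsf{Z}_{\hat{n}'\hat{n}'}$ on the left yields $\overline{X}_{\hat{m}}(\llbracket\mathbf{T}_{\hat{n}}|\mathbf{Y}_{\hat{n}\hat{n}'}\rrbracket)$ for $X\in\{I,M,N\}$, which is bounded above by the full supremum over $\mathbf{W}\in\mathsf{Z}_{\hat{n}\hat{n}}$ on the right, namely $\overline{X}_{\hat{m}}(\mathbf{T}_{\hat{n}})$.

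The main obstacle is the careful bookkeeping in the displayed identity, since one has to avoid double-counting the identity insertions originating from the decomposition of $\mathbf{Y}_{\hat{n}\hat{n}'}$ versus those originating from the outer coarse-graining $\mathbf{I}_{\hat{n}'\setminus\hat{m}}$, and also verify that the lift $\tilde{\mathbf{Z}}_{\hat{n}\hat{n}}$ produces exactly the desired effective control at lower resolution. These are diagrammatic rather than conceptual hurdles: the closure of $\mathsf{Z}_{\hat{n}\hat{n}}$ under composition, the irreversibility of temporal coarse-graining, and the uniform applicability of the argument to all three quantifiers are either immediate from the IQI tensor-product structure or imported as known results.
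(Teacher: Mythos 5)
Your proposal is correct and follows essentially the same route as the paper: the paper also proves monotonicity by combining closure of $\mathsf{Z}_{\hat{n}\hat{n}}$ under composition (its $\hat{m}=\hat{n}$ case) with the irreversibility of temporal coarse-graining, phrased there as the set inclusion $\mathsf{R}_{\hat{m}}\subseteq\mathsf{R}_{\hat{n}}$, and likewise treats $I$, $M$, $N$ uniformly since the argument never uses the specific functional. The only difference is presentational -- you absorb the applied superprocess and the coarse-graining into a single composite $\mathbf{W}_{\hat{n}\hat{n}}$ in one step, whereas the paper splits the argument into two separate cases.
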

Thm.~\ref{thm::irrev} holds true for \textit{any} set of free (i.e. allowed) operations $\mathsf{Z}_{\hat{n}\hat{n}}$, in particular for the case where $\mathsf{Z}_{\hat{n}\hat{n}}$ is the set of uncorrelated sequences of channels and temporal coarse-graining, which corresponds to IQI, the resource theory of temporal resolution considered in Ref.~\cite{extractingdynamicalquantumresources}. Hence, Thm.~\ref{thm::irrev} guarantees that these quantifiers cannot be increased by any actions an experimenter operating within IQI can perform, implying that they meaningfully quantify the amenability of a process to denoising under those constraints.
\begin{proof}
We first consider the case $\hat{m} = \hat{n}$, i.e., the case without temporal coarse-graining. Let $\mathbf{Z}^{*}_{\hat{n}\hat{n}}$ be the optimal superprocess in the definition of $\overline{I}_{\hat{n}}(\mathbf{T}_{\hat{n}})$. Evidently, the composition of $\mathbf{Z}^{*}_{\hat{n}\hat{n}}$ with some other arbitrary $\mathbf{Z}_{\hat{n}\hat{n}} \in \mathsf{Z}_{\hat{n}\hat{n}}$ is by definition not more optimal than $\mathbf{Z}^{*}_{\hat{n}\hat{n}}$, implying the monotonicity of $\overline{I}_{\hat{m}}(\mathbf{T}_{\hat{n}})$ under transformations in $\mathsf{Z}_{\hat{n}\hat{n}}$ in the case where the number of steps is unchanged. Monotonicity of $\overline{M}_{\hat{m}}(\mathbf{T}_{\hat{n}})$ and $\overline{N}_{\hat{n}}(\mathbf{T}_{\hat{n}})$ follows by the same argument. 

The case $\hat{m} \subset \hat{n}$ follows from the irreversibility of temporal coarse-graining~\cite{extractingdynamicalquantumresources}: We note that the two sets 
\begin{align}
   &\mathsf{R}_{\hat{n}} = \big\{ \llbracket \mathbf{T}_{\hat{n}} | \mathbf{Z}_{\hat{n}\hat{n}} | \mathbf{I}_{{\hat{n}}\setminus{\hat{m}}} \rrbracket  , \mathbf{Z}_{\hat{n}\hat{n}} \in \mathsf{Z}_{\hat{n}\hat{n}} \big\}\\
\text{and} \quad &\mathsf{R}_{\hat{m}} = \left\{  \Big\llbracket \llbracket \mathbf{T}_{\hat{n}}  | \mathbf{I}_{{\hat{n}} \setminus{\hat{m}}} \rrbracket \Big\vert  \mathbf{Z}_{\hat{m}\hat{m}} \Big\rrbracket  , \mathbf{Z}_{\hat{m}\hat{m}} \in \mathsf{Z}_{\hat{m}\hat{m}} \right\}
\end{align}
satisfy $\mathsf{R}_{\hat{m}} \subseteq \mathsf{R}_{\hat{n}}$. As mentioned, this implies that the set of processes on times $\hat{m}$ that can be reached by first applying control operations and then coarse-graining is generally larger than the set that can be reached by first coarse-graining and then performing control operations on fewer steps. This holds true under the mild assumption that `doing nothing'/`asserting no control' at the times $\hat{n}\setminus{m}$ is possible~\cite{extractingdynamicalquantumresources}, which is satisfied in all conceivable control scenarios.

The inclusion $\mathsf{R}_{\hat{m}} \subseteq \mathsf{R}_{\hat{n}}$ implies that regardless of the function $X: \mathsf{T} \rightarrow \mathbbm{R}_{\geq0}$ we choose, it holds for all $\hat{m} \subseteq \hat{n}$ that
\begin{equation} \label{eq:monotoneirreversibility}
\begin{aligned}
   & \sup_{\mathbf{Z}_{\hat{n}\hat{n}} \in \mathsf{Z}_{\hat{n}\hat{n}}} X\Big( \llbracket \mathbf{T}_{\hat{n}} | \mathbf{Z}_{\hat{n}\hat{n}} | \mathbf{I}_{{\hat{n}}\setminus{\hat{m}}} \rrbracket  \Big) \\
   \geq & \sup_{\mathbf{Z}_{\hat{m}\hat{m}} \in \mathsf{Z}_{\hat{m}\hat{m}}} X\big( \Big\llbracket \llbracket \mathbf{T}_{\hat{n}}  | \mathbf{I}_{{\hat{n}} \setminus{\hat{m}}} \rrbracket \Big\vert  \mathbf{Z}_{\hat{m}\hat{m}} \Big\rrbracket \big).
\end{aligned}
\end{equation}
This is monotonicity under temporal coarse-graining. Finally, combining monotonicity under allowed superprocesses and under temporal coarse-graining yields monotonicity under all possible control in IQI. Since the respective properties of IQI were not used in this derivation, it applies to \textit{all} setups where `doing nothing' at times $\hat{n} \setminus \hat{m}$ is possible.
\end{proof}

These quantifiers have the physical interpretation of being the highest value one can obtain for $I,M,N$ at temporal resolution $\hat{m}$, given a process $\mathbf{T}_{\hat{n}}$ with temporal resolution $\hat{n}$. In particular, when $\hat{m}=\emptyset$, i.e. coarse-graining to a channel, $\overline{I}_{\emptyset}(\mathbf{T}_{\hat{n}})$ is the highest achievable value for the mutual information of that channel under the allowed control. Consequently, it possesses a direct interpretation in terms of the `best' (i.e., most noiseless) channel one can obtain from $\mathbf{T}_{\hat{n}}$ by means of control operations that lie in IQI.

It is easy to see that $\overline{I}_{\emptyset}$ in particular quantifies the amenability of a process to denoising under the constraints of IQI. If the experimenter does not attempt denoising, they will obtain the channel $\mathbf{T}_{\emptyset}:=\llbracket \mathbf{T}_{\hat{n}} | \mathbf{I}_{\hat{n} \setminus \emptyset} \rrbracket$. Then, the corresponding monotone is $\overline{I}_{\emptyset}(\mathbf{T}_{\emptyset})$. The goal of denoising is to find the optimal $\mathbf{Z}^*_{\hat{n}\hat{n}} \in \mathsf{Z}$ such that the transformed process $\mathbf{T}'_{\emptyset}:=\llbracket \mathbf{T}_{\hat{n}} | \mathbf{Z}^*_{\hat{n}\hat{n}} | \mathbf{I}_{\hat{n} \setminus \emptyset} \rrbracket$ satisfies $\overline{I}_{\emptyset}(\mathbf{T}_{\hat{n}})= \overline{I}_{\emptyset}(\mathbf{T}'_{\emptyset}) \geq   \overline{I}_{\emptyset}(\mathbf{T}_{\emptyset})$. The maximum possible gap $\overline{I}_{\emptyset}(\mathbf{T}'_{\emptyset}) - \overline{I}_{\emptyset}(\mathbf{T}_{\emptyset})$ then corresponds to the amenability of $\mathbf{T}_{\hat{n}}$ to denoising.

While $\overline{I}_{\hat{m}},\overline{M}_{\hat{m}},\overline{N}_{\hat{m}}$ are not as straightforward to compute as $I,M,N$, approximations of $\overline{I}_{\emptyset}$ -- or, more generally, $\overline{I}_{\hat{m}}$, $\overline{M}_{\hat{m}}$, and $\overline{N}_{\hat{m}}$ -- can still be found numerically, for example via the multitime optimal dynamical decoupling (MODD) algorithm of Ref.~\cite{extractingdynamicalquantumresources} (see Sec.~\ref{sec:numerics} for further discussion).

\section{Subadditivity of Resource Quantifiers} \label{sec:subadditivity}
The Markovian and non-Markovian correlations in process tensors satisfy intuitive monogamy relations~\cite{capela_monogamy_2020, Zambon_2024}. For example, a process that displays a large amount of Markovian correlations (i.e., between adjacent points in time) cannot simultaneously display a large amount of correlations between non-adjacent points in time, and vice versa. Given this, one should expect that the values of the monotones $\overline{I}_{\hat{m}},\overline{M}_{\hat{m}},\overline{N}_{\hat{m}}$ are not mutually independent. For the non-monotonic quantities $I,M,N$, it has been shown that they are related by the strict equality~\cite{extractingdynamicalquantumresources}
\begin{equation}
    I=M+N,
\end{equation}
which implies that Markovian and non-Markovian contributions to the total temporal correlations in $\mathbf{T}_{\hat{n}}$ necessarily come at the cost of each other. The situation for $\overline{I}_{\hat{m}},\overline{M}_{\hat{m}},\overline{N}_{\hat{m}}$ is slightly more nuanced: the superprocess $\mathbf{Z}^{M}_{\hat{n}\hat{n}}$ optimising $\overline{M}_{\hat{m}}$ is in general not the same as the superprocess $\mathbf{Z}^{N}_{\hat{n}\hat{n}}$ optimising $\overline{N}_{\hat{m}}$. While this implies that the strict equality above fails to hold for these monotones, they still satisfy subadditivity:
\begin{theorem}[Subadditivity of $\overline{M}$ and $\overline{N}$] \label{thm:subadditivity}
For any process $\mathbf{T}_{\hat{n}}$, the quantifiers $\overline{I}_{\hat{m}},\overline{M}_{\hat{m}},\overline{N}_{\hat{m}}$ satisfy
    \begin{gather}
\overline{I}_{\hat{m}}(\mathbf{T}_{\hat{n}}) \leq    \overline{M}_{\hat{m}}(\mathbf{T}_{\hat{n}}) + \overline{N}_{\hat{m}}(\mathbf{T}_{\hat{n}}).     
\end{gather}
\end{theorem}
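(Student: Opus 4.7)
The plan is to reduce the desired subadditivity to the pointwise additive identity $I(\mathbf{T}) = M(\mathbf{T}) + N(\mathbf{T})$ that the authors recall just before the theorem. That identity is essentially the chain-rule statement that $\mathbf{T}^{\text{marg}}$ is the tensor product of the single-slot marginals of $\mathbf{T}^{\text{Mkv}}$, so all three relative entropies telescope cleanly. Crucially, it holds for \emph{any} process, because $\mathbf{T}^{\text{Mkv}}$ and $\mathbf{T}^{\text{marg}}$ are defined from the marginals of that process itself, and are therefore inherited by every transformed comb of the form $\llbracket \mathbf{T}_{\hat{n}} | \mathbf{Z}_{\hat{n}\hat{n}} | \mathbf{I}_{\hat{n}\setminus\hat{m}} \rrbracket$.

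Concretely, first I would pick a superprocess $\mathbf{Z}^{\star}_{\hat{n}\hat{n}} \in \mathsf{Z}_{\hat{n}\hat{n}}$ attaining (or $\varepsilon$-attaining) the supremum in $\overline{I}_{\hat{m}}(\mathbf{T}_{\hat{n}})$, and abbreviate $\mathbf{T}^{\star}_{\hat{m}} := \llbracket \mathbf{T}_{\hat{n}} | \mathbf{Z}^{\star}_{\hat{n}\hat{n}} | \mathbf{I}_{\hat{n}\setminus\hat{m}} \rrbracket$. Second, apply the pointwise identity to this single process to write $\overline{I}_{\hat{m}}(\mathbf{T}_{\hat{n}}) = I(\mathbf{T}^{\star}_{\hat{m}}) = M(\mathbf{T}^{\star}_{\hat{m}}) + N(\mathbf{T}^{\star}_{\hat{m}})$. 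Third, bound each summand individually by the relevant supremum: since $\mathbf{Z}^{\star}_{\hat{n}\hat{n}}$ is just one admissible element of $\mathsf{Z}_{\hat{n}\hat{n}}$, its values of $M$ and $N$ are by definition at most $\overline{M}_{\hat{m}}(\mathbf{T}_{\hat{n}})$ and $\overline{N}_{\hat{m}}(\mathbf{T}_{\hat{n}})$, respectively. Adding these two inequalities yields the stated bound; if the $\overline{I}_{\hat{m}}$ supremum is not attained, the same argument runs along a maximising sequence and one passes to the limit.

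I do not expect a serious technical obstacle. The only subtle point that requires a line of justification is that the additive identity applies \emph{after} the superprocess has been applied, since $\mathbf{T}^{\text{Mkv}}$ and $\mathbf{T}^{\text{marg}}$ are defined relative to whichever comb they qualify, not fixed once and for all by $\mathbf{T}_{\hat{n}}$. Conceptually, the slack $\overline{M}_{\hat{m}}(\mathbf{T}_{\hat{n}}) + \overline{N}_{\hat{m}}(\mathbf{T}_{\hat{n}}) - \overline{I}_{\hat{m}}(\mathbf{T}_{\hat{n}})$ is precisely the obstruction to the three optimal superprocesses $\mathbf{Z}^{I}, \mathbf{Z}^{M}, \mathbf{Z}^{N}$ coinciding --- the monogamy-style trade-off highlighted in the text --- and the inequality is saturated exactly when a single $\mathbf{Z}^{\star}$ simultaneously optimises all three quantifiers.
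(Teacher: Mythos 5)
Your proposal is correct and follows essentially the same route as the paper's proof: evaluate at the superprocess optimising $\overline{I}_{\hat{m}}$, invoke the pointwise identity $I=M+N$ on the resulting comb (which the paper re-derives inline by expanding the relative entropies into von Neumann entropies), and bound each summand by its respective supremum. Your remark that the marginal and Markovian reference combs must be taken relative to the \emph{transformed} process is exactly the point the paper's explicit entropy bookkeeping is handling, so no gap remains.
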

\begin{proof}
Consider the difference
 \begin{align}
     &\overline{I}_{\hat{m}}(\mathbf{T}_{\hat{n}}) - \overline{M}_{\hat{m}}(\mathbf{T}_{\hat{n}}) \\
     =& \sup_{\mathbf{Z}_{\hat{n}\hat{n}} \in \mathsf{Z}_{\hat{n}\hat{n}}} I\Big( \llbracket \mathbf{T}_{\hat{n}} | \mathbf{Z}_{\hat{n}\hat{n}} | \mathbf{I}_{{\hat{n} \setminus \hat{m}}} \rrbracket \Big) \\ 
     -&  \sup_{\mathbf{Z}'_{\hat{n}\hat{n}} \in \mathsf{Z}_{\hat{n}\hat{n}}} M\Big( \llbracket \mathbf{T}_{\hat{n}} | \mathbf{Z}'_{\hat{n}\hat{n}}  | \mathbf{I}_{{\hat{n} \setminus \hat{m}}} \rrbracket\Big) \\
     \label{eqn::subadditivity}
     =& \sup_{\mathbf{Z}_{\hat{n}\hat{n}} \in \mathsf{Z}_{\hat{n}\hat{n}}} S\left(\llbracket \mathbf{T}_{\hat{n}} | \mathbf{Z}_{\hat{n}\hat{n}} | \mathbf{I}_{{\hat{n} \setminus \hat{m}}} \rrbracket \| \llbracket \mathbf{T}_{\hat{n}} | \mathbf{Z}_{\hat{n}\hat{n}} | \mathbf{I}_{{\hat{n} \setminus \hat{m}}} \rrbracket^{\text{marg}}\right) \\
     -&
     \sup_{\mathbf{Z}'_{\hat{n}\hat{n}} \in \mathsf{Z}_{\hat{n}\hat{n}}} S\left(\llbracket \mathbf{T}_{\hat{n}}| \mathbf{Z}'_{\hat{n}\hat{n}} | \mathbf{I}_{{\hat{n} \setminus \hat{m}}} \rrbracket^{\text{Mkv}} \|  \llbracket \mathbf{T}_{\hat{n}} | \mathbf{Z}'_{\hat{n}\hat{n}} | \mathbf{I}_{{\hat{n} \setminus \hat{m}}} \rrbracket^{\text{marg}} \right),
 \end{align}
 where $\cdot^{{\text{Mkv}}}$ and $\cdot^{{\text{marg}}}$ yield the tensor product of marginals according to Eq.~\eqref{eq:marginals}, respectively. Assuming that $\mathsf{Z}_{\hat{n}\hat{n}}$ is compact (which holds for all sets $\mathsf{Z}_{\hat{n}\hat{n}}$ we consider), we denote by $\mathbf{Z}_{\hat{n}\hat{n}}^*$ the superprocess that maximises the first term in Eq.~\eqref{eqn::subadditivity}. Then we obtain
 \begin{align}
     &\overline{I}_{\hat{m}}(\mathbf{T}_{\hat{n}}) - \overline{M}_{\hat{m}}(\mathbf{T}_{\hat{n}}) \\
     \notag
     \leq & S\left(\llbracket \mathbf{T}_{\hat{n}} | \mathbf{Z}^*_{\hat{n}\hat{n}} | \mathbf{I}_{{\hat{n} \setminus \hat{m}}} \rrbracket \| \llbracket \mathbf{T}_{\hat{n}} | \mathbf{Z}^*_{\hat{n}\hat{n}} | \mathbf{I}_{{\hat{n} \setminus \hat{m}}} \rrbracket^{\text{marg}}\right)  \\ 
     -& 
     S\left(\llbracket \mathbf{T}_{\hat{n}}| \mathbf{Z}^*_{\hat{n}\hat{n}} | \mathbf{I}_{{\hat{n} \setminus \hat{m}}} \rrbracket^{\text{Mkv}} \|  \llbracket \mathbf{T}_{\hat{n}} | \mathbf{Z}^*_{\hat{n}\hat{n}} | \mathbf{I}_{{\hat{n} \setminus \hat{m}}} \rrbracket^{\text{marg}} \right) \\
    \notag
     =& -S\left(\llbracket \mathbf{T}_{\hat{n}} | \mathbf{Z}^*_{\hat{n}\hat{n}} | \mathbf{I}_{{\hat{n} \setminus \hat{m}}} \rrbracket\right) + S\left(\llbracket \mathbf{T}_{\hat{n}} | \mathbf{Z}^*_{\hat{n}\hat{n}} | \mathbf{I}_{{\hat{n} \setminus \hat{m}}} \rrbracket^{\text{marg}} \right) \\
     +& S\left(\llbracket \mathbf{T}_{\hat{n}} | \mathbf{Z}^*_{\hat{n}\hat{n}} | \mathbf{I}_{{\hat{n} \setminus \hat{m}}} \rrbracket^{\text{Mkv}} \right) - S\left(\llbracket \mathbf{T}_{\hat{n}} | \mathbf{Z}^*_{\hat{n}\hat{n}} | \mathbf{I}_{{\hat{n} \setminus \hat{m}}} \rrbracket^{\text{marg}} \right) \\
     =& S\left(\llbracket \mathbf{T}_{\hat{n}} | \mathbf{Z}^*_{\hat{n}\hat{n}} | \mathbf{I}_{{\hat{n} \setminus \hat{m}}} \rrbracket^{\text{Mkv}} \right) -S\left(\llbracket \mathbf{T}_{\hat{n}} | \mathbf{Z}^*_{\hat{n}\hat{n}} | \mathbf{I}_{{\hat{n} \setminus \hat{m}}} \rrbracket \right)\\
     \leq & \sup_{\mathbf{Z}_{\hat{n}\hat{n}} \in \mathsf{Z}_{\hat{n}\hat{n}}} \left[S\left(\llbracket \mathbf{T}_{\hat{n}} | \mathbf{Z}_{\hat{n}\hat{n}} | \mathbf{I}_{{\hat{n} \setminus \hat{m}}} \rrbracket^{\text{Mkv}} \right) -S\left(\llbracket \mathbf{T}_{\hat{n}} | \mathbf{Z}_{\hat{n}\hat{n}} | \mathbf{I}_{{\hat{n} \setminus \hat{m}}} \rrbracket \right) \right] \\
     =& \sup_{\mathbf{Z}_{\hat{n}\hat{n}} \in \mathsf{Z}_{\hat{n}\hat{n}}} \left[S\left(\ \llbracket \mathbf{T}_{\hat{n}} | \mathbf{Z}_{\hat{n}\hat{n}} | \mathbf{I}_{{\hat{n} \setminus \hat{m}}} \rrbracket  \| \llbracket \mathbf{T}_{\hat{n}} | \mathbf{Z}_{\hat{n}\hat{n}} | \mathbf{I}_{{\hat{n} \setminus \hat{m}}} \rrbracket^{\text{Mkv}} \right)\right]\\
     =& \overline{N}_{\hat{m}}(\mathbf{T}_{\hat{n}}).
 \end{align}
The first equality comes from explicitly writing out the relative entropy between a quantum state and its own marginals. The second inequality holds because the value for a particular $\mathbf{Z}_{\hat{n}\hat{n}} \in \mathsf{Z}_{\hat{n}\hat{n}}$ cannot be greater than the supremum over $ \mathsf{Z}_{\hat{n}\hat{n}}$.
\end{proof}

Thm.~\ref{thm:subadditivity} establishes a trade-off between different resources; optimisation of the total information $\overline{I}_{\hat{m}}$ will generally not also optimise the non-Markovianity $\overline{N}_{\hat{m}}$ and the Markovianity $\overline{M}_{\hat{m}}$. Even if all the applied control is unitary (as in DD), temporal coarse-graining forces resources of the type not optimised for to be sacrificed irreversibly. The bound is tight when no allowed superprocess can increase $I,M,N$ at temporal resolution ${\hat{n}}$ for a given process $\mathbf{T}_{\hat{n}}$. However, tightness in more general circumstances is an open question.

\section{Composition Structure}
\begin{figure}
  \centering
  \subfloat[Sequential Composition.]{
    \includegraphics[width=0.95\linewidth]{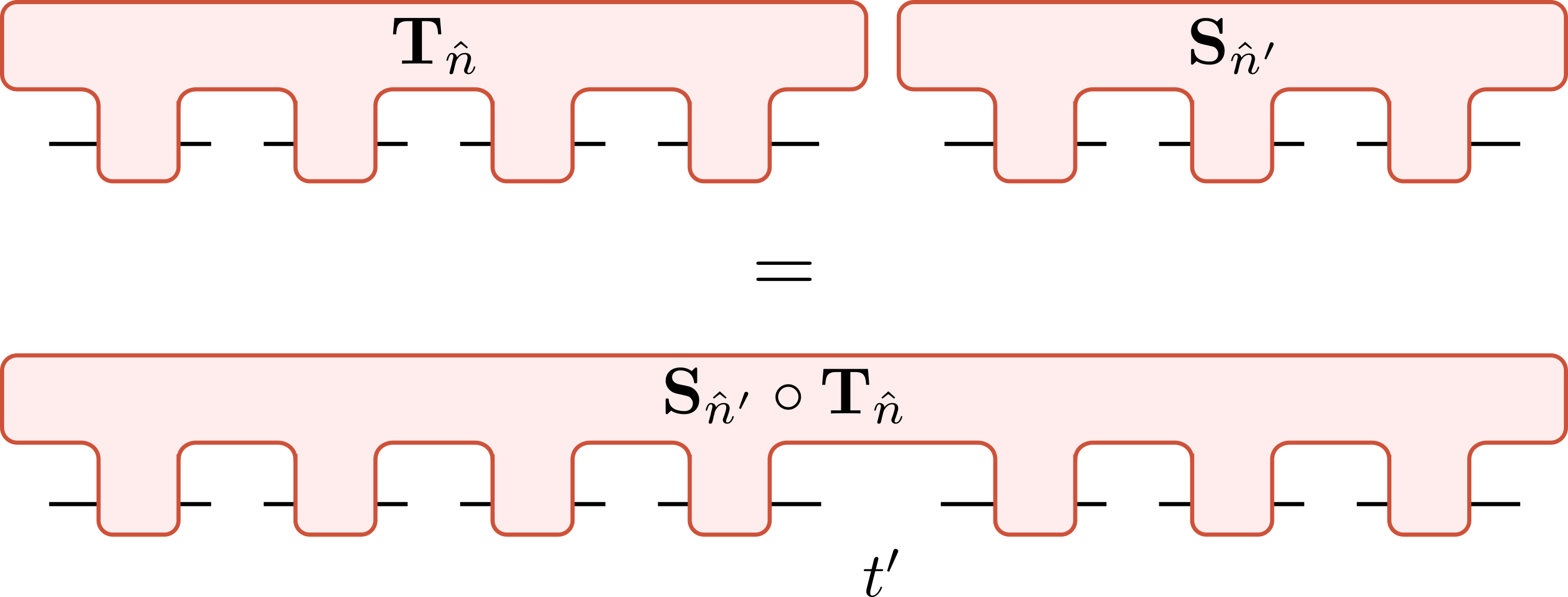}
    \label{fig::seq_comp}
  }\hfill
  \subfloat[Parallel Composition.]{
    \includegraphics[width=0.95\linewidth]{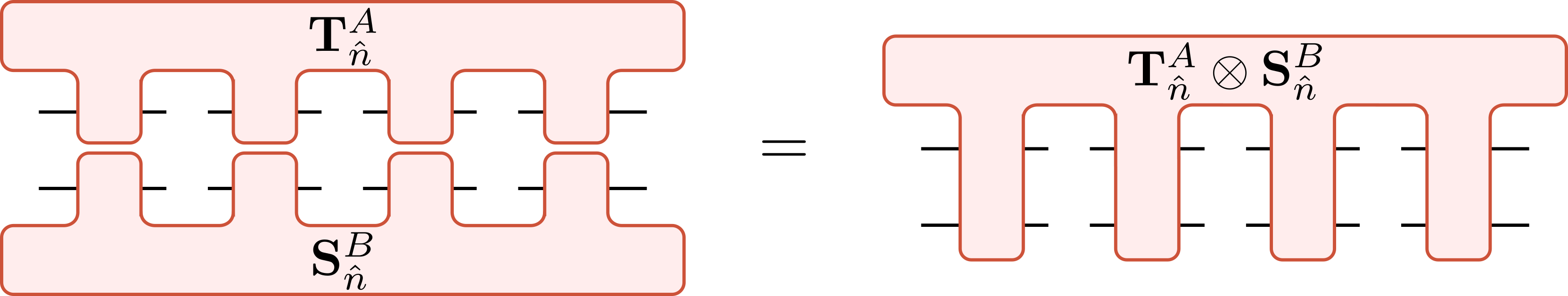}
    \label{fig::par_comp}
  }
  \caption{\textbf{Composition of quantum processes. (a)} Composing two processes $\mathbf{T}_{\hat{n}}$ and $\mathbf{S}_{\hat{n}'}$ in sequence yields a new time slot (denoted by $t'$) such that $\mathbf{S}_{\hat{n}'} \circ \mathbf{T}_{\hat{n}}$ is defined on $\hat{n} \cup \{t'\} \cup \hat{n}'$. \textbf{(b)} Composing two processes $\mathbf{T}_{\hat{n}}^A$ and $\mathbf{S}_{\hat{n}}^B$ in parallel yields a new process $\mathbf{T}_{\hat{n}}^A \otimes \mathbf{S}_{\hat{n}}^B$ on the same set of times $\hat{n}$. }
  \label{fig::composition}
\end{figure}
\label{sec:compositionstructure}
Thus far, we have predominantly considered the question of noise reduction when a single process tensor $\mathbf{T}_{\hat{n}}$ is available, and we have seen that $\overline{I}_\emptyset$ can be used as a quantifier of noise reduction. In this case, noise reduction is possible because the initial object is a multi-slot comb on which control operations can be performed, while the desired final object is a channel that is less noisy than if \textit{no} control operations had been performed. As such, denoising can be understood as a temporal \textit{resource distillation}~\cite{extractingdynamicalquantumresources}, where the resource of fine-grained temporal access to the process is used to distil all correlations present between different times into the final channel, which has zero slots. 

However, processes a priori not only allow temporal resource distillation, but one could -- similar to the case of quantum states -- be given multiple copies of the same process, or of different processes (see Fig.~\ref{fig::composition}). In this case, for the problem of noise reduction, the question arises how the resource content -- measured by $\overline{I}_{\hat{n}}$ -- changes as multitime process are combined? Unlike quantum states, multitime processes can be combined in many non-equivalent ways~\cite{theoreticalframeworkforquantumnetworks, Hirche2023}. Here, we study serial and parallel composition and focus on the investigation of $\overline{I}$; similar arguments can be applied to $\overline{M}$ and $\overline{N}$.

\subsection{Sequential Composition}
First, we consider sequential composition of a process $\mathbf{T}_{\hat{n}}$ on $n$ intermediate times with another process $\mathbf{S}_{\hat{n}'}$ on $n'$ intermediate times. In slight abuse of notation, we denote the resulting process by $\mathbf{S}_{\hat{n}'} \circ \mathbf{T}_{\hat{n}}$. We emphasise that we do not 'connect' the final output line of $\mathbf{T}_{\hat{n}}$ with the initial input line of $\mathbf{S}_{\hat{n}'}$. As a consequence, this composition induces a `new' time slot in between the two processes, which we denote by $t'$ (see Fig.~\ref{fig::seq_comp}). Consequently, the resulting quantifier is of the form $\overline{I}_{(\hat{m}, t' , \hat{m}')}$. We denote the corresponding IQI superprocesses on the resulting process tensor by $\mathbf{Z}_{(\hat{n}, \hat{n}')(\hat{n} , \hat{n}')}$, while the IQI superprocesses on the individual process tensors are denoted by $\mathbf{Z}_{\hat{n}\hat{n}}$ and $\mathbf{Z}_{\hat{n}'\hat{n}'}$, respectively. With this, we have the following Proposition:
\begin{proposition}[Sequential Composition] \label{prop:sequentialcomp}
    $\overline{I}_{\hat{m}}$ is additive in IQI under sequential composition.
\end{proposition}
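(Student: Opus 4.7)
The plan is to exploit two structural facts about sequential composition within IQI. First, since $\mathbf{T}_{\hat{n}}$ and $\mathbf{S}_{\hat{n}'}$ do not share an environment across the boundary time $t'$, the Choi state of the sequentially composed process factorises as $\mathbf{S}_{\hat{n}'} \circ \mathbf{T}_{\hat{n}} = \mathbf{T}_{\hat{n}} \otimes \mathbf{S}_{\hat{n}'}$, up to reordering of Hilbert space factors. Second, every IQI superprocess on the combined comb inherits the memoryless product structure of IQI and therefore splits as $\mathbf{Z}_{(\hat{n}\cup\{t'\}\cup\hat{n}')(\hat{n}\cup\{t'\}\cup\hat{n}')} = \mathbf{Z}^{T}_{\hat{n}\hat{n}} \otimes \mathbf{Z}^{S}_{\hat{n}'\hat{n}'}$, where $\mathbf{Z}^{T}$ and $\mathbf{Z}^{S}$ are IQI superprocesses for $\mathbf{T}_{\hat{n}}$ and $\mathbf{S}_{\hat{n}'}$ individually. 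The boundary post-processing $\mathcal{W}_{t'}$ applied to $\mathbf{T}$'s final output and the pre-processing $\mathcal{V}_{t'}$ applied to $\mathbf{S}$'s initial input are, by the no-memory axiom of IQI, chosen independently, and so absorb into the two factors without overlap.

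With these two facts in hand, I would observe that the link product and temporal coarse-graining both distribute over the non-overlapping tensor factors, so that the transformed comb also factorises: $\llbracket \mathbf{T}_{\hat{n}} \otimes \mathbf{S}_{\hat{n}'} | \mathbf{Z}^{T} \otimes \mathbf{Z}^{S} | \mathbf{I}_{(\hat{n}\cup\hat{n}')\setminus(\hat{m}\cup\hat{m}')} \rrbracket = \mathbf{T}'_{\hat{m}} \otimes \mathbf{S}'_{\hat{m}'}$, where $\mathbf{T}'_{\hat{m}} := \llbracket \mathbf{T}_{\hat{n}} | \mathbf{Z}^{T} | \mathbf{I}_{\hat{n}\setminus\hat{m}} \rrbracket$ and analogously for $\mathbf{S}'_{\hat{m}'}$. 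The marginalisation operation $(\cdot)^{\text{marg}}$ defined in Eq.~\eqref{eq:marginals} also distributes over tensor products, since it acts independently on each single-system factor, giving $(\mathbf{T}'_{\hat{m}}\otimes\mathbf{S}'_{\hat{m}'})^{\text{marg}} = (\mathbf{T}'_{\hat{m}})^{\text{marg}} \otimes (\mathbf{S}'_{\hat{m}'})^{\text{marg}}$. Combining this with the standard additivity of quantum relative entropy, $S(A\otimes B \,\|\, C\otimes D) = S(A\|C) + S(B\|D)$, I then obtain $I(\mathbf{T}'_{\hat{m}}\otimes\mathbf{S}'_{\hat{m}'}) = I(\mathbf{T}'_{\hat{m}}) + I(\mathbf{S}'_{\hat{m}'})$.

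Because the objective has decoupled into two independent summands, one depending only on $\mathbf{Z}^{T}$ and the other only on $\mathbf{Z}^{S}$, the supremum over product superprocesses factorises: $\sup_{\mathbf{Z}^{T},\mathbf{Z}^{S}}[I(\mathbf{T}'_{\hat{m}}) + I(\mathbf{S}'_{\hat{m}'})] = \sup_{\mathbf{Z}^{T}} I(\mathbf{T}'_{\hat{m}}) + \sup_{\mathbf{Z}^{S}} I(\mathbf{S}'_{\hat{m}'}) = \overline{I}_{\hat{m}}(\mathbf{T}_{\hat{n}}) + \overline{I}_{\hat{m}'}(\mathbf{S}_{\hat{n}'})$. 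Since the set of IQI superprocesses on the joint comb is exactly the product of the two individual IQI sets (from step one), this supremum is the quantifier $\overline{I}_{\hat{m}\cup\hat{m}'}(\mathbf{S}_{\hat{n}'} \circ \mathbf{T}_{\hat{n}})$, and additivity follows. The same argument applies whether or not the boundary slot $t'$ is retained in the target temporal resolution, since inserting an identity at $t'$ respects the tensor-product factorisation.

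The main obstacle I anticipate is the careful bookkeeping at the boundary slot $t'$: making rigorous that the post-processing terminating the IQI sequence on the $\mathbf{T}$-side and the pre-processing initiating the IQI sequence on the $\mathbf{S}$-side are entirely independent, so that no residual joint variable survives the factorisation of $\mathbf{Z}$. Once the memorylessness of IQI at $t'$ is made fully explicit and the product form of $\mathbf{Z}$ established, the remainder of the argument reduces to the additivity of quantum relative entropy on product states, together with the observation that suprema of decoupled objectives factorise.
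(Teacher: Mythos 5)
Your proposal is correct and follows essentially the same route as the paper's proof: factorise the composed Choi state as a tensor product across the boundary, use the memorylessness of IQI to factorise the joint superprocess into $\mathbf{Z}^{T}\otimes\mathbf{Z}^{S}$, invoke additivity of relative entropy on product states, and observe that the supremum of a decoupled objective splits into two independent suprema.

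One caveat: your closing remark that ``the same argument applies whether or not the boundary slot $t'$ is retained'' overreaches, and the paper explicitly guards against it. Inserting an identity at $t'$ contracts $\mathbf{T}$'s final output with $\mathbf{S}$'s initial input via the link product; the resulting comb is then a genuine concatenation rather than a tensor product, the factorisation of the marginals breaks down, and additivity of $I$ is no longer guaranteed by this argument. The paper therefore restricts to the quantifier $\overline{I}_{(\hat{m},t',\hat{m}')}$ in which the slot at $t'$ remains open. This does not affect the validity of your main argument, but the side claim should be dropped or separately justified.
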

\begin{proof}
Let $\mathbf{T}_{\hat{n}}$ and $\mathbf{S}_{\hat{n}'}$ be arbitrary processes. Then,\begin{align}
    &\overline{I}_{(\hat{m}, t' , \hat{m}')}( \mathbf{S}_{\hat{n}'} \circ \mathbf{T}_{\hat{n}} ) \\ 
    =&  \sup_{\mathbf{Z}_{(\hat{n} , \hat{n}')(\hat{n} , \hat{n}')} } I\Big( \llbracket \mathbf{S}_{\hat{n}'} \circ \mathbf{T}_{\hat{n}} | \mathbf{Z}_{(\hat{n}', \hat{n}) (\hat{n} , \hat{n}')} | \mathbf{I}_{{(\hat{n} , \hat{n}')}\setminus{(\hat{m}, t' , \hat{m}')}} \rrbracket  \Big) \\
    =& \sup_{\mathbf{Z}_{\hat{n}'\hat{n}'} \circ \mathbf{Z}_{\hat{n}\hat{n}}  } I\Big( \llbracket \mathbf{S}_{\hat{n}'} \circ \mathbf{T}_{\hat{n}} | \mathbf{Z}_{\hat{n}'\hat{n}'} \circ \mathbf{Z}_{\hat{n}\hat{n}}  | \mathbf{I}_{{(\hat{n} , \hat{n}')}\setminus{(\hat{m}, t' , \hat{m}')}} \rrbracket  \Big) \\
    =& \sup_{\mathbf{Z}_{\hat{n}'\hat{n}'} , \mathbf{Z}_{\hat{n}\hat{n}}  } I\Big( \llbracket \mathbf{S}_{\hat{n}'}  | \mathbf{Z}_{\hat{n}'\hat{n}'}  | \mathbf{I}_{{ \hat{n}'}\setminus{ \hat{m}'}} \rrbracket \otimes \llbracket \mathbf{T}_{\hat{n}} |  \mathbf{Z}_{\hat{n}\hat{n}}  | \mathbf{I}_{{\hat{n}}\setminus{\hat{m}}} \rrbracket  \Big) \\
    =& \sup_{\mathbf{Z}_{\hat{n}'\hat{n}'} } I\Big( \llbracket \mathbf{S}_{\hat{n}\prime} | \mathbf{Z}_{\hat{n}'\hat{n}'} | \mathbf{I}_{ \hat{n}' \setminus \hat{m}'} \rrbracket  \Big) + \sup_{\mathbf{Z}_{\hat{n}\hat{n}} } I\Big( \llbracket \mathbf{T}_{\hat{n}} | \mathbf{Z}_{\hat{n}\hat{n}} | \mathbf{I}_{{\hat{n}}\setminus\hat{m}} \rrbracket  \Big) \\
    =&  \overline{I}_{\hat{m}'}( \mathbf{S}_{\hat{n}'} ) + \overline{I}_{\hat{m}}( \mathbf{T}_{\hat{n}} ) .
\end{align}
The second equality is due to the fact that superprocesses in IQI are entirely uncorrelated between different times. The fourth equality holds due to the additivity of entropy for uncorrelated subsystems and because sequential composition does not impart correlations. We emphasise that in the above derivation, we have assumed that no coarse-graining happens at the additional time slot at $t'$, such that the corresponding $\overline{I}_{(\hat{m}, t', \hat{m}')}$ quantifies the correlation content for processes where the slot at $t'$ is still `open', i.e., it has not been coarse-grained over.
\end{proof}
As a consequence of this Proposition, if directly follows that if $\mathbf{S}_{\hat{n}'}$ has \textit{no} temporal correlations, i.e., $\overline{I}_{\hat{n}'} =0$, then composition with it cannot add any resource value. Additionally, since the memory subsystem of a quantum process is inaccessible from its process tensor, sequential composition cannot induce non-Markovianity across the bipartition between times $\hat{n}$ and $\hat{n}'$ (see Fig.~\ref{fig::seq_comp}).

\subsection{Parallel Composition}
We move on to parallel composition. Here, we only consider parallel composition of processes that have the \textit{same} number of slots, and the slots are located at the same times. Combining them in parallel does thus not change the number of slots -- unlike in the case of sequential composition. However, it increases the system size in the individual slots since more subsystems are accessed at any given time (see Fig.~\ref{fig::par_comp}). For this scenario, the situation is less straightforward than for sequential composition: in parallel composition, the `whole can be greater than the sum of its parts', i.e., resources are superadditive:
\begin{proposition}[Superadditivity Under Parallel Composition] \label{prop:superadditivity}
    $\overline{I}_{\hat{m}}$ is superadditive in IQI under parallel composition.
\end{proposition}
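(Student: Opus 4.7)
The plan is to prove superadditivity by restricting the optimisation in the definition of $\overline{I}_{\hat{m}}(\mathbf{T}_{\hat{n}}^A \otimes \mathbf{S}_{\hat{n}}^B)$ to the strict subset of IQI superprocesses that factorise as tensor products across the $A$/$B$ bipartition, and then invoking additivity of relative entropy on product states. Since the joint system at each time $t_k$ is $A_{t_k}B_{t_k}$, any choice $\mathcal{V}^A_{t_k}\otimes \mathcal{V}^B_{t_k}$ and $\mathcal{W}^A_{t_k}\otimes \mathcal{W}^B_{t_k}$ of product pre- and post-processing channels is a valid IQI operation on the composite system; call the resulting superprocess $\mathbf{Z}^A_{\hat{n}\hat{n}} \otimes \mathbf{Z}^B_{\hat{n}\hat{n}}$. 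Furthermore, the coarse-graining $\mathbf{I}_{\hat{n}\setminus\hat{m}}$ on the joint system factorises as $\mathbf{I}^A_{\hat{n}\setminus\hat{m}}\otimes \mathbf{I}^B_{\hat{n}\setminus\hat{m}}$, so that the link product on $\mathbf{T}_{\hat{n}}^A \otimes \mathbf{S}_{\hat{n}}^B$ with such a product superprocess separates cleanly into
\begin{align*}
    &\llbracket \mathbf{T}_{\hat{n}}^A \otimes \mathbf{S}_{\hat{n}}^B \,|\, \mathbf{Z}^A_{\hat{n}\hat{n}} \otimes \mathbf{Z}^B_{\hat{n}\hat{n}} \,|\, \mathbf{I}_{\hat{n}\setminus\hat{m}} \rrbracket \\
    &\quad = \llbracket \mathbf{T}_{\hat{n}}^A | \mathbf{Z}^A_{\hat{n}\hat{n}} | \mathbf{I}^A_{\hat{n}\setminus\hat{m}} \rrbracket \otimes \llbracket \mathbf{S}_{\hat{n}}^B | \mathbf{Z}^B_{\hat{n}\hat{n}} | \mathbf{I}^B_{\hat{n}\setminus\hat{m}} \rrbracket .
\end{align*}

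Next I would use the key fact that the marginal of a tensor product process is the tensor product of the marginals, so that $(\mathbf{T}'^A \otimes \mathbf{T}'^B)^{\text{marg}} = (\mathbf{T}'^A)^{\text{marg}}\otimes (\mathbf{T}'^B)^{\text{marg}}$ after the product-form control, and invoke additivity of quantum relative entropy on product arguments: $S(\rho_A\otimes\rho_B \| \sigma_A\otimes\sigma_B) = S(\rho_A\|\sigma_A) + S(\rho_B\|\sigma_B)$. This immediately yields
\begin{equation*}
    I\bigl( \mathbf{T}'^A \otimes \mathbf{T}'^B \bigr) = I(\mathbf{T}'^A) + I(\mathbf{T}'^B)
\end{equation*}
whenever the transformed processes are themselves a tensor product, which is the case here.

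Putting this together, and using that the product superprocesses form a subset of the full IQI set $\mathsf{Z}_{\hat{n}\hat{n}}$ for the composite system,
\begin{align*}
    &\overline{I}_{\hat{m}}\bigl(\mathbf{T}_{\hat{n}}^A \otimes \mathbf{S}_{\hat{n}}^B\bigr) \\
    &\geq \sup_{\mathbf{Z}^A_{\hat{n}\hat{n}},\mathbf{Z}^B_{\hat{n}\hat{n}}} I\!\left( \llbracket \mathbf{T}_{\hat{n}}^A | \mathbf{Z}^A_{\hat{n}\hat{n}} | \mathbf{I}^A_{\hat{n}\setminus\hat{m}}\rrbracket \otimes \llbracket \mathbf{S}_{\hat{n}}^B | \mathbf{Z}^B_{\hat{n}\hat{n}} | \mathbf{I}^B_{\hat{n}\setminus\hat{m}}\rrbracket \right) \\
    &= \sup_{\mathbf{Z}^A_{\hat{n}\hat{n}}} I\!\left(\llbracket \mathbf{T}_{\hat{n}}^A | \mathbf{Z}^A_{\hat{n}\hat{n}} | \mathbf{I}^A_{\hat{n}\setminus\hat{m}}\rrbracket\right) + \sup_{\mathbf{Z}^B_{\hat{n}\hat{n}}} I\!\left(\llbracket \mathbf{S}_{\hat{n}}^B | \mathbf{Z}^B_{\hat{n}\hat{n}} | \mathbf{I}^B_{\hat{n}\setminus\hat{m}}\rrbracket\right) \\
    &= \overline{I}_{\hat{m}}(\mathbf{T}_{\hat{n}}^A) + \overline{I}_{\hat{m}}(\mathbf{S}_{\hat{n}}^B),
\end{align*}
which is the desired superadditivity. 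The logical content of the proof is thus very light; there is no real obstacle beyond carefully verifying that product-form IQI superprocesses really do lie in the full IQI set on the composite system (so that the first $\geq$ is justified), and that marginals and the coarse-graining identity channels distribute across the tensor product so that the additivity of $S(\cdot\|\cdot)$ can be applied. The inequality is generally strict because correlated (but still IQI) pre- and post-processing at each time $t_k$ can exploit the joint $A_{t_k}B_{t_k}$ degrees of freedom to create cross-system temporal correlations that no product superprocess can access, which is precisely why one expects superadditivity rather than equality.
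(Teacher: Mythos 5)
Your proposal is correct and follows essentially the same route as the paper's proof: restrict the supremum to product-form IQI superprocesses $\mathbf{Z}^A_{\hat{n}\hat{n}}\otimes\mathbf{Z}^B_{\hat{n}\hat{n}}$, then use additivity of the relative entropy (equivalently, of $I$) on tensor-product arguments to split the optimisation. The extra care you take in checking that the marginals and coarse-graining identities factorise across the $A/B$ bipartition is implicit in the paper but harmless to make explicit.
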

\begin{proof}
We label the subsystems of processes $\mathbf{T}_{\hat{n}}$ and $\mathbf{S}_{\hat{n}}$ as $A$ and $B$ respectively. With this, we have
\begin{align}
    &\overline{I}_{\hat{m}}( \mathbf{T}_{\hat{n}} \otimes \mathbf{S}_{\hat{n}} ) \\
    =& \sup_{\mathbf{Z}^{AB}_{\hat{n}\hat{n}} } I\Big( \llbracket \mathbf{T}^{A}_{\hat{n}} \otimes \mathbf{S}^{B}_{\hat{n}} | \mathbf{Z}^{AB}_{\hat{n}\hat{n}}| \mathbf{I}^{AB}_{{\hat{n}}\setminus{\hat{m}}} \rrbracket  \Big) \\
    \geq & \sup_{\mathbf{Z}^{A}_{\hat{n}\hat{n}},\mathbf{Z}'^{B}_{\hat{n}\hat{n}} } I\Big( \llbracket \mathbf{T}^{A}_{\hat{n}} \otimes \mathbf{S}^{B}_{\hat{n}} | \mathbf{Z}^{A}_{\hat{n}\hat{n}} \otimes \mathbf{Z}'^{B}_{\hat{n}\hat{n}}| \mathbf{I}^{AB}_{{\hat{n}}\setminus{\hat{m}}} \rrbracket  \Big) \\
    =& \sup_{\mathbf{Z}^{A}_{\hat{n}\hat{n}} } I\Big( \llbracket \mathbf{T}^{A}_{\hat{n}}  | \mathbf{Z}^{A}_{\hat{n}\hat{n}}| \mathbf{I}^{A}_{{\hat{n}}\setminus{\hat{m}}} \rrbracket  \Big) + \sup_{\mathbf{Z}'^{B}_{\hat{n}\hat{n}} } I\Big( \llbracket \mathbf{S}^{B}_{\hat{n}}  | \mathbf{Z}'^{B}_{\hat{n}\hat{n}}| \mathbf{I}^{B}_{{\hat{n}}\setminus{\hat{m}}} \rrbracket  \Big) \\
    =&\overline{I}_{\hat{m}}( \mathbf{T}_{\hat{n}}) + \overline{I}_{\hat{m}}( \mathbf{S}_{\hat{n}}).
\end{align}
The inequality holds because the supremum over product superprocesses $\mathbf{Z}^{A}_{\hat{n}\hat{n}} \otimes \mathbf{Z}'^{B}_{\hat{n}\hat{n}}$ (instead of over \textit{all} superprocesses $\mathbf{Z}^{AB}_{\hat{n}\hat{n}}$ that IQI permits) is not necessarily optimal. The second equality holds due to the additivity of entropy for uncorrelated subsystems. 
\end{proof}
Fundamentally, the superadditivity property of Prop.~\ref{prop:superadditivity} holds since having multiple simultaneous processes provides more options for how to manipulate them, given that IQI does restrict manipulations temporally, but \textit{not} spatially. That is,  IQI superprocesses are local in time but allow for the concurrent manipulations of distinct subsystems in each slot. 

Due to superadditivity, the maximum values of $\overline{I}$, $\overline{M}$, and $\overline{N}$ that can be practically extracted from a multitime process in IQI is defined by the asymptotic limit of many copies of that process, not the single-copy value. This is reminiscent of the situation of noisy quantum channel coding, where quantum capacity is defined in the asymptotic limit due to the superadditivity of coherent information. As such, we expect that regularisations of $\overline{I}_{\hat{m}},\overline{M}_{\hat{m}},\overline{N}_{\hat{m}}$ possess useful operational interpretations for for similar asymptotic tasks.

Naturally, superadditivity should not allow one to generate resources for free, i.e., create correlations from processes that do not contain any correlations to start with [see Eq.~\eqref{eq:fullyuncorrelated}]. This fact is expressed in the following Proposition:
\begin{proposition}[Invariance for fully uncorrelated Processes]
    $\overline{I}_{\hat{m}}$ is invariant in IQI under parallel composition with processes $\mathbf{S}_{\hat{n}}$ satisfying Eq.~\eqref{eq:fullyuncorrelated}.
\end{proposition}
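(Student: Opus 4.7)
The plan is to establish the equality $\overline{I}_{\hat{m}}(\mathbf{T}_{\hat{n}} \otimes \mathbf{S}_{\hat{n}}) = \overline{I}_{\hat{m}}(\mathbf{T}_{\hat{n}})$ via two matching inequalities. The ``$\geq$'' direction is essentially an immediate corollary of Prop.~\ref{prop:superadditivity}: superadditivity yields $\overline{I}_{\hat{m}}(\mathbf{T}_{\hat{n}} \otimes \mathbf{S}_{\hat{n}}) \geq \overline{I}_{\hat{m}}(\mathbf{T}_{\hat{n}}) + \overline{I}_{\hat{m}}(\mathbf{S}_{\hat{n}})$, so it suffices to argue that $\overline{I}_{\hat{m}}(\mathbf{S}_{\hat{n}}) = 0$. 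This vanishing follows from the fact, recalled in the paragraph after Eq.~\eqref{eq:fullyuncorrelated}, that IQI superprocesses map fully uncorrelated processes to fully uncorrelated processes, while temporal coarse-graining manifestly preserves this product structure. Since $\mathbf{T}^{\text{marg}} = \mathbf{T}$ whenever $\mathbf{T}$ is fully uncorrelated, the relative entropy $I$ evaluates to zero on every such transformed process.

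For the reverse inequality -- with $A,B$ labelling the subsystems of $\mathbf{T}$ and $\mathbf{S}$ as in the proof of Prop.~\ref{prop:superadditivity} -- I would fix an arbitrary $\mathbf{Z}^{AB}_{\hat{n}\hat{n}} = \bigotimes_k (\mathcal{V}^{AB}_{t_k}\otimes \mathcal{W}^{AB}_{t_k}) \in \mathsf{Z}_{\hat{n}\hat{n}}$ and build an effective IQI superprocess on $\mathbf{T}^A_{\hat{n}}$ alone that produces the same transformed process. The key observation is that the $B$-sector of $\mathbf{S}^B_{\hat{n}}$ implements, at each time $t_k$, a ``discard input and prepare fresh state $\rho^B_{\text{in}_k}$'' operation. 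This allows the $B$-legs of the pre- and post-processings to be absorbed into
\[
\tilde{\mathcal{V}}_{t_k}(X) := \operatorname{tr}_{B}\bigl[\mathcal{V}^{AB}_{t_k}(X)\bigr], \quad \tilde{\mathcal{W}}_{t_k}(X) := \mathcal{W}^{AB}_{t_k}\bigl(X \otimes \rho^B_{\text{in}_k}\bigr),
\]
which are CPTP channels whose domain and codomain dimensions differ from those of $\mathbf{T}^A_{\hat{n}}$ by a factor of $d_B$. Because IQI explicitly allows dimension-changing pre- and post-processings, $\tilde{\mathbf{Z}}_{\hat{n}\hat{n}} := \bigotimes_k (\tilde{\mathcal{V}}_{t_k}\otimes \tilde{\mathcal{W}}_{t_k})$ lies in $\mathsf{Z}_{\hat{n}\hat{n}}$.

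The concluding step is to verify the link-product identity $\llbracket \mathbf{T}^A_{\hat{n}} \otimes \mathbf{S}^B_{\hat{n}}\,|\,\mathbf{Z}^{AB}_{\hat{n}\hat{n}}\rrbracket = \llbracket \mathbf{T}^A_{\hat{n}}\,|\,\tilde{\mathbf{Z}}_{\hat{n}\hat{n}}\rrbracket$ slot by slot, then append $\mathbf{I}_{\hat{n}\setminus\hat{m}}$ on both sides to deduce $I(\llbracket \mathbf{T}^A_{\hat{n}} \otimes \mathbf{S}^B_{\hat{n}}\,|\,\mathbf{Z}^{AB}_{\hat{n}\hat{n}}\,|\,\mathbf{I}_{\hat{n}\setminus\hat{m}}\rrbracket) \leq \overline{I}_{\hat{m}}(\mathbf{T}^A_{\hat{n}})$ directly from the definition of the supremum; taking the supremum over $\mathbf{Z}^{AB}_{\hat{n}\hat{n}}$ then closes the argument. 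The main obstacle is the formal verification of the link-product identity: intuitively it is evident from the circuit picture, but the precise check requires tracking how the partial trace and the $\rho^B_{\text{in}_k}$-preparation inherent to $\mathbf{S}^B_{\hat{n}}$ combine with each $\mathcal{V}^{AB}_{t_k}$ and $\mathcal{W}^{AB}_{t_k}$ at every slot, using that the $B$-sector of $\mathbf{S}^B_{\hat{n}}$ is a product of $\rho^B_{\text{in}_k}$'s and $\mathbbm{1}^B_{\text{out}_k}/d$'s. Once this factorisation is established, the remainder is routine bookkeeping.
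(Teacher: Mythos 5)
Your proof is correct and follows essentially the same route as the paper's: the ``$\geq$'' direction from superadditivity (with $\overline{I}_{\hat{m}}(\mathbf{S}_{\hat{n}})=0$), and the ``$\leq$'' direction by absorbing the fully uncorrelated $\mathbf{S}^{B}_{\hat{n}}$ into the superprocess to obtain an effective IQI superprocess $\mathbf{Z}^{\prime A}_{\hat{n}\hat{n}} = \llbracket \mathbf{S}^{B}_{\hat{n}} | \mathbf{Z}^{AB}_{\hat{n}\hat{n}} \rrbracket$ acting on $\mathbf{T}^{A}_{\hat{n}}$ alone. Your version merely makes explicit, at the level of the constituent pre- and post-processing channels $\tilde{\mathcal{V}}_{t_k}, \tilde{\mathcal{W}}_{t_k}$, the contraction that the paper asserts in one line.
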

\begin{proof}
From Prop.~\ref{prop:superadditivity} we know that $ \overline{I}_{\hat{m}}( \mathbf{T}_{\hat{n}} \otimes \mathbf{S}_{\hat{n}} ) \geq \overline{I}_{\hat{m}}( \mathbf{T}_{\hat{n}}) + \overline{I}_{\hat{m}}( \mathbf{S}_{\hat{n}})$ for all processes $\mathbf{S}_{\hat{n}}$. We thus only need to show that $ \overline{I}_{\hat{m}}( \mathbf{T}_{\hat{n}} \otimes \mathbf{S}_{\hat{n}} ) \leq \overline{I}_{\hat{m}}( \mathbf{T}_{\hat{n}}) + \overline{I}_{\hat{m}}( \mathbf{S}_{\hat{n}})$ whenever $\mathbf{S}_{\hat{n}}$ is fully uncorrelated. Processes without temporal correlations are of the form $(\mathbbm{1}/d_s \otimes \rho_1)\otimes \dots \otimes (\mathbbm{1}/d_s \otimes \rho_{n+1})$ (see Eq.~\eqref{eq:fullyuncorrelated}). As a consequence, $ \llbracket \mathbf{S}^{B}_{\hat{n}} | \mathbf{Z}^{AB}_{\hat{n}\hat{n}}  \rrbracket = \mathbf{Z}^{\prime A}_{\hat{n}\hat{n}}$ is a superprocess on $A$ that lies in in IQI, and we have $ \llbracket \mathbf{S}^{B}_{\hat{n}} | \mathsf{Z}^{AB}_{\hat{n}\hat{n}} \rrbracket \subseteq \mathsf{Z}^{A}_{\hat{n}\hat{n}}$. Here, $B$ is subsumed into a larger effective $A$ subsystem, and $ \mathbf{Z}^{\prime A}_{\hat{n}\hat{n}}$ is still unable to transmit correlations on this effective $A$ subsystem through time. With this, we can show:
\begin{align}
    \overline{I}_{\hat{m}}( \mathbf{T}_{\hat{n}} \otimes \mathbf{S}_{\hat{n}} ) =& \sup_{\mathbf{Z}^{AB}_{\hat{n}\hat{n}} } I\Big( \llbracket \mathbf{T}^{A}_{\hat{n}} \otimes \mathbf{S}^{B}_{\hat{n}} | \mathbf{Z}^{AB}_{\hat{n}\hat{n}}| \mathbf{I}^{AB}_{{\hat{n}}\setminus{\hat{m}}} \rrbracket  \Big) \\
    =&  \sup_{\mathbf{Z}^{\prime A}_{\hat{n}\hat{n}} } I\Big( \llbracket \mathbf{T}^{A}_{\hat{n}} | \mathbf{Z}^{\prime A}_{\hat{n}\hat{n}}| \mathbf{I}^{A}_{{\hat{n}}\setminus{\hat{m}}} \rrbracket  \Big)  \\
    \leq & \sup_{\mathbf{Z}^{A}_{\hat{n}\hat{n}} \in \mathsf{Z}_{\hat{n}\hat{n}}^{A}} I\Big( \llbracket \mathbf{T}^{A}_{\hat{n}}  | \mathbf{Z}^{A}_{\hat{n}\hat{n}}| \mathbf{I}^{A}_{{\hat{n}}\setminus{\hat{m}}} \rrbracket  \Big)  \\
    =&\overline{I}_{\hat{m}}( \mathbf{T}_{\hat{n}}),
\end{align}
where we have used in the second equality that resource-free processes in IQI do not contain any correlations in time, and thus do not lead to superprocesses $\mathbf{Z}_{\hat{n}\hat{n}}^{\prime A}$ that lie outside of what is possible within IQI. 
\end{proof}

\section{Correspondence with Generalised Comb Divergences} \label{sec:correspondence}
The resource quantifiers $\overline{I}_{\hat{m}}(\mathbf{T}_{\hat{n}}),     \overline{N}_{\hat{m}}(\mathbf{T}_{\hat{n}})$ and $\overline{M}_{\hat{m}}(\mathbf{T}_{\hat{n}})$ introduced in Thm.~\ref{thm::irrev} decrease monotonically under the superprocesses of IQI (or, more generally, the superprocesses within the respective set $\mathsf{Z}_{\hat{n}{\hat{n}}}$) and temporal coarse-graining, thus satifsying the basic desideratum of a resource quantifier with respect to the superprocesses in $\mathsf{Z}_{\hat{n}\hat{n}}$. When $\mathsf{Z}_{\hat{n}\hat{n}}$ corresponds to superprocesses in IQI, $\overline{I}_{\hat{m}}$ ($\overline{N}_{\hat{m}}$) quantifies the maximal possible distance (as measured by the relative entropy $S$) of a process $\mathbf{T}_{\hat{n}}$ to its fully uncorrelated (Markovian) counterpart under superprocesses in IQI, while $\overline{M}_{\hat{m}}$ denotes the corresponding maximal distance between the Markovian and fully uncorrelated versions of $\mathbf{T}_{\hat{n}}$ [see Eq.~\eqref{eq:Imonotoneresolution}]. 

In a similar vein, the \textit{generalised comb divergence} $\widetilde{D}$ has been introduced in Ref.~\cite{zambon2024processtensordistinguishabilitymeasures} as a mathematically well-behaved (pseudo-) distance measure between quantum combs that decreases monotonically under \textit{all} superprocesses $\mathbf{Z}_{\hat{n}\hat{m}}$; here, we relate these two concepts and show that the quantifiers $\overline{I}_{\hat{m}}, \overline{N}_{\hat{m}}$ and $\overline{M}_{\hat{m}}$ naturally induce a class of `reachable' comb divergences by restricting the maximisation appearing in the computation of $\widetilde{D}$ to limited sets of superprocesses. In particular, we show that $\overline{I}_{\emptyset}(\mathbf{T}_{\hat{n}}) = \overline{M}_{\emptyset}(\mathbf{T}_{\hat{n}})$ can be understood as a lower bound on the comb divergences of Ref.~\cite{zambon2024processtensordistinguishabilitymeasures} as well as the corresponding reachable comb divergences we introduce here. 

Intuitively, the generalised comb divergence of two combs $\mathbf{T}_{\hat{n}}$ and $\mathbf{R}_{\hat{n}}$ quantifies the maximum relative entropy between quantum states that can be obtained from them via a superprocess. In detail, given two combs $\mathbf{T}_{\hat{n}}$ and $\mathbf{R}_{\hat{n}}$ defined on the same set of times, let $\widetilde{\mathsf{Y}}_{\hat{n}\emptyset}$ be the set of all superprocesses that map both of these combs to quantum states, i.e., $\llbracket \mathbf{T}_{\hat{n}}| \widetilde{\mathbf{Y}}_{\hat{n}\emptyset}\rrbracket, \llbracket \mathbf{R}_{\hat{n}}| \widetilde{\mathbf{Y}}_{\hat{n}\emptyset}\rrbracket \in \mathcal{L}(\mathcal{H}_{\text{in}_\text{f}} \otimes \mathcal{H}_{\text{a}_\text{f}})$ for all $\widetilde{\mathbf{Y}}_{\hat{n}\emptyset} \in \widetilde{\mathsf{Y}}_{\hat{n}\emptyset}$, where $\mathcal{H}_{\text{a}_\text{f}}$ is some arbitrary additional `reference' space that  $\mathbf{T}_{\hat{n}}$ and $\mathbf{R}_{\hat{n}}$ do not act on. Then the generalised comb divergence between $\mathbf{T}_{\hat{n}}$ and $\mathbf{R}_{\hat{n}}$ is given by 
\begin{gather}
\label{eqn::gen_div_og}
    \widetilde{D}\big(  \mathbf{T}_{\hat{n}} \big\Vert \mathbf{R}_{\hat{n}} \big) := \sup_{\widetilde{\mathbf{Y}}_{\hat{n}\emptyset} \in \widetilde{\mathsf{Y}}_{\hat{n}\emptyset}} S\Big( \llbracket \mathbf{T}_{\hat{n}}| \widetilde{\mathbf{Y}}_{\hat{n}\emptyset}\rrbracket \Big\Vert \llbracket \mathbf{R}_{\hat{n}}|\widetilde{\mathbf{Y}}_{\hat{n}\emptyset}\rrbracket \Big).
\end{gather}
As shown in Ref.~\cite{zambon2024processtensordistinguishabilitymeasures} $\widetilde{D}$ behaves monotonically under \textit{all} superprocesses $\mathbf{Z}_{\hat{n}\hat{m}}$, i.e., $\widetilde{D}\big(\mathbf{T}_{\hat{n}} \big\Vert \mathbf{R}_{\hat{n}} \big) \geq \widetilde{D}\big( \llbracket \mathbf{R}_{\hat{n}}| \mathbf{Z}_{\hat{n}\hat{m}}\rrbracket \big\Vert  \llbracket \mathbf{T}_{\hat{n}}| \mathbf{Z}_{\hat{n}\hat{m}}\rrbracket\big)$, for all superprocesses $\mathbf{Z}_{\hat{n}\hat{m}}$, thus satisfying a generalised data processing inequality . In contrast, the monotones introduced in this paper are only guaranteed to behave monotonically under a \textit{restricted} set of superprocesses, for example, those obtained from IQI and coarse-graining. 
To connect these monotones to the results of Ref.~\cite{zambon2024processtensordistinguishabilitymeasures}, we first adapt the generalised comb divergences to our setting, defining
\begin{equation}
\label{eqn::gen_div}
    D\big(  \mathbf{T}_{\hat{n}} \big\Vert \mathbf{R}_{\hat{n}} \big) := \sup_{\mathbf{Y}_{\hat{n}\emptyset} \in \mathsf{Y}_{\hat{n}\emptyset}} S\Big( \llbracket \mathbf{T}_{\hat{n}\emptyset}| \mathbf{Y}_{\hat{n}\emptyset}\rrbracket \Big\Vert \llbracket \mathbf{R}_{\hat{n}}|\mathbf{Y}_{\hat{n}\emptyset}\rrbracket \Big),
\end{equation}
where $\mathbf{Y}_{\hat{n}\emptyset}$ is a control comb that is such that $\llbracket \mathbf{T}_{\hat{n}}| \mathbf{Y}_{\hat{n}\emptyset}\rrbracket, \llbracket \mathbf{R}_{\hat{n}}|\mathbf{Y}_{\hat{n}\emptyset}\rrbracket \in \mathcal{L}(\mathcal{H}_{\text{in}_\text{f}} \otimes \mathcal{H}_{\text{a}_\text{f}} \otimes \mathcal{H}_{\text{out}_\text{i}} \otimes \mathcal{H}_{\text{a}_\text{i}})$ correspond to quantum \textit{channels}, respectively, in contrast to Eq.~\eqref{eqn::gen_div_og}, where the resulting objects corresponded to quantum \textit{states}. This difference in the definition of comb divergences is due to the fact that our primary concern here is the noise properties of the resultant channel. Consequently, the superprocess $\mathbf{Y}_{\hat{n}\emptyset}$ used in Eq.~\eqref{eqn::gen_div} may contain a \textit{pair} of auxiliary subsystems $\{\text{a}_\text{i}, \text{a}_\text{f}\}$ that $\mathbf{T}_{\hat{n}}$ and $\mathbf{R}_{\hat{n}}$ do not act on. 

Here, we show that for the case where no open slots remain the resource quantifiers $\overline{I}_{\emptyset}$ and $\overline{M}_{\emptyset}$ introduced in Thm.~\ref{thm::irrev} can be directly related to Eq.~\eqref{eqn::gen_div}, with the main difference that the supremum is taken over a limited set of control combs $\mathsf{Y}^\textup{reach}_{\hat{n}\emptyset}$ instead of arbitrary control combs (note that $\overline{N}_{\emptyset}$ vanishes when no open time slots remain). To this end, we first note that, given sets of superprocesses $\mathsf{Z}_{\hat{n}\hat{m}}$, possibly changing the temporal resolution, we can obtain the corresponding sets $\mathsf{Y}^\text{reach}_{\hat{n}\emptyset}$ of \textit{reachable} control combs as those that can be obtained from $\mathsf{Z}_{\hat{n}\hat{m}}$ via coarse-graining, i.e., 
\begin{gather}
\label{eqn::reachable}
    \mathsf{Y}^\text{reach}_{\hat{n}\emptyset} := \{ \mathbf{Y}_{\hat{n}\emptyset} = \llbracket \mathbf{Z}_{\hat{n}\hat{m}}| \mathbf{I}_{\hat{m}}\rrbracket \ \ \text{for some } \mathbf{Z}_{\hat{n}\hat{m}} \in \mathsf{Z}_{\hat{n}\hat{m}}\}.
\end{gather}
Throughout, we only consider \textit{compatible} sets $\mathsf{Z}_{\hat{n}\hat{m}}$ of superprocesses that are connected via temporal coarse-graining between different temporal resolutions, i.e., $\mathsf{Z}_{\hat{n}\hat{m}} = \llbracket \mathsf{Z}_{\hat{n}\hat{n}}|\mathbf{I}_{\hat{n}\setminus \hat{m}}\rrbracket$ for all $\hat{m}\subseteq \hat{n}$, and are closed under composition, i.e., $\mathsf{Z}_{\hat{n}\hat{m}} \mathsf{Z}_{\hat{m}\hat{\ell}} \subseteq \mathsf{Z}_{\hat{n}\hat{\ell}}$ for all $\hat{n}, \hat{m}$ and $\hat{\ell}$. With this, we have the following Lemma:

\begin{lemma}
Let $\mathsf{Z}_{\hat{n}\hat{m}}$ be compatible sets of superprocesses, and let $\mathsf{Y}^\textup{reach}_{\hat{n}\emptyset}$ be the corresponding sets of reachable control combs. Then the \textit{reachable} comb divergence 
\begin{gather}
    D^\textup{reach}\big( \mathbf{T}_{\hat{n}} \big\Vert \mathbf{R}_{\hat{n}} \big) := \sup_{\mathbf{Y}_{\hat{n}\emptyset} \in \mathsf{Y}^\textup{reach}_{\hat{n}\emptyset}} S\Big( \llbracket \mathbf{T}_{\hat{n}}| \mathbf{Y}_{\hat{n}\emptyset}\rrbracket \Big\Vert \llbracket \mathbf{R}_{\hat{n}}|\mathbf{Y}_{\hat{n}\emptyset}\rrbracket \Big),
\end{gather}
decreases monotonically under all superprocesses in $\mathsf{Z}_{\hat{n}\hat{m}}$.
\end{lemma}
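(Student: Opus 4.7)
The plan is to reduce the monotonicity claim to two algebraic facts about the setup: the associativity of the link product and the closure of compatible sets $\mathsf{Z}_{\hat{n}\hat{m}}$ under composition. Specifically, I want to show that applying a reachable control comb to the transformed combs $\llbracket \mathbf{T}_{\hat{n}}|\mathbf{Z}_{\hat{n}\hat{m}}\rrbracket$ and $\llbracket \mathbf{R}_{\hat{n}}|\mathbf{Z}_{\hat{n}\hat{m}}\rrbracket$ is equivalent to applying some (possibly different) reachable control comb directly to $\mathbf{T}_{\hat{n}}$ and $\mathbf{R}_{\hat{n}}$; then the supremum defining the right-hand side is a supremum over a subset of what defines the left-hand side.

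First, I would fix an arbitrary superprocess $\mathbf{Z}_{\hat{n}\hat{m}} \in \mathsf{Z}_{\hat{n}\hat{m}}$ and unfold
\begin{align*}
    &D^\textup{reach}\bigl( \llbracket \mathbf{T}_{\hat{n}}|\mathbf{Z}_{\hat{n}\hat{m}}\rrbracket \big\Vert \llbracket \mathbf{R}_{\hat{n}}|\mathbf{Z}_{\hat{n}\hat{m}}\rrbracket \bigr) \\
    &= \sup_{\mathbf{Y}'_{\hat{m}\emptyset} \in \mathsf{Y}^\textup{reach}_{\hat{m}\emptyset}} S\Bigl( \bigl\llbracket \llbracket \mathbf{T}_{\hat{n}}|\mathbf{Z}_{\hat{n}\hat{m}}\rrbracket \big\vert \mathbf{Y}'_{\hat{m}\emptyset}\bigr\rrbracket \,\Big\Vert\, \bigl\llbracket \llbracket \mathbf{R}_{\hat{n}}|\mathbf{Z}_{\hat{n}\hat{m}}\rrbracket \big\vert \mathbf{Y}'_{\hat{m}\emptyset}\bigr\rrbracket \Bigr).
\end{align*}
By associativity of the link product, I can regroup each argument as $\llbracket \mathbf{T}_{\hat{n}}|\mathbf{Z}_{\hat{n}\hat{m}}\,\mathbf{Y}'_{\hat{m}\emptyset}\rrbracket$ and similarly for $\mathbf{R}_{\hat{n}}$, thereby viewing the composition $\mathbf{Z}_{\hat{n}\hat{m}}\,\mathbf{Y}'_{\hat{m}\emptyset}$ as a single control comb with input slots at the times $\hat{n}$.

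Next, I use the definition of reachability in Eq.~\eqref{eqn::reachable}: any $\mathbf{Y}'_{\hat{m}\emptyset} \in \mathsf{Y}^\textup{reach}_{\hat{m}\emptyset}$ can be written as $\llbracket \mathbf{Z}'_{\hat{m}\hat{\ell}}|\mathbf{I}_{\hat{\ell}}\rrbracket$ for some $\mathbf{Z}'_{\hat{m}\hat{\ell}} \in \mathsf{Z}_{\hat{m}\hat{\ell}}$. Then
\begin{equation*}
    \mathbf{Z}_{\hat{n}\hat{m}}\,\mathbf{Y}'_{\hat{m}\emptyset} = \bigl\llbracket \mathbf{Z}_{\hat{n}\hat{m}}\,\mathbf{Z}'_{\hat{m}\hat{\ell}}\big\vert\mathbf{I}_{\hat{\ell}}\bigr\rrbracket.
\end{equation*}
The closure of the compatible family under composition gives $\mathbf{Z}_{\hat{n}\hat{m}}\,\mathbf{Z}'_{\hat{m}\hat{\ell}} \in \mathsf{Z}_{\hat{n}\hat{\ell}}$, so $\mathbf{Z}_{\hat{n}\hat{m}}\,\mathbf{Y}'_{\hat{m}\emptyset} \in \mathsf{Y}^\textup{reach}_{\hat{n}\emptyset}$ by the definition in Eq.~\eqref{eqn::reachable}.

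Combining these steps, as $\mathbf{Y}'_{\hat{m}\emptyset}$ ranges over $\mathsf{Y}^\textup{reach}_{\hat{m}\emptyset}$ the composite $\mathbf{Z}_{\hat{n}\hat{m}}\,\mathbf{Y}'_{\hat{m}\emptyset}$ traces out a subset of $\mathsf{Y}^\textup{reach}_{\hat{n}\emptyset}$. Enlarging the supremum to all of $\mathsf{Y}^\textup{reach}_{\hat{n}\emptyset}$ therefore yields
\begin{equation*}
    D^\textup{reach}\bigl( \llbracket \mathbf{T}_{\hat{n}}|\mathbf{Z}_{\hat{n}\hat{m}}\rrbracket \big\Vert \llbracket \mathbf{R}_{\hat{n}}|\mathbf{Z}_{\hat{n}\hat{m}}\rrbracket \bigr) \leq D^\textup{reach}\bigl( \mathbf{T}_{\hat{n}} \big\Vert \mathbf{R}_{\hat{n}} \bigr),
\end{equation*}
which is exactly the required monotonicity. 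I do not expect a serious obstacle here: the only subtlety is that the argument relies crucially on both compatibility (so that coarse-graining of $\mathsf{Z}_{\hat{n}\hat{n}}$ gives the same set as $\mathsf{Z}_{\hat{n}\hat{m}}|\mathbf{I}_{\hat{m}\setminus \emptyset}\rrbracket$) and composition closure, both of which were explicitly assumed just before the lemma, so the task is essentially to verify that these two conditions are doing the work and that the link product manipulations are valid. No monotonicity property of $S$ itself is required beyond the fact that it is well-defined on the resulting channels.
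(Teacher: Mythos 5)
Your proposal is correct and follows essentially the same route as the paper's proof: both arguments reduce monotonicity to the observation that, by associativity of the link product and closure of the compatible family under composition, the composite $\mathbf{Z}_{\hat{n}\hat{m}}\,\mathbf{Y}'_{\hat{m}\emptyset}$ lands inside $\mathsf{Y}^\textup{reach}_{\hat{n}\emptyset}$, so the supremum on the transformed side runs over a subset of the one on the original side. The only cosmetic difference is that the paper writes the reachable combs at resolution $\hat{m}$ via $\mathsf{Z}_{\hat{m}\hat{m}}$ followed by $\mathbf{I}_{\hat{m}}$ whereas you use the literal form $\llbracket\mathbf{Z}'_{\hat{m}\hat{\ell}}|\mathbf{I}_{\hat{\ell}}\rrbracket$ of the reachability definition; these coincide by compatibility.
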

\begin{proof}
    The proof follows a similar line of argument to that of Thm.~1 in Ref.~\cite{zambon2024processtensordistinguishabilitymeasures}. Let $\mathbf{Z}_{\hat{n}\hat{m}} \in \mathsf{Z}_{\hat{n}\hat{m}}$. We have 
    \begin{align}
    &D^\text{reach}\big( \llbracket \mathbf{T}_{\hat{n}}| \mathbf{Z}_{\hat{n}\hat{m}}\rrbracket \big\Vert \llbracket\mathbf{R}_{\hat{n}} |   \mathbf{Z}_{\hat{n}\hat{m}}\rrbracket\big) \\ 
    =& \sup_{\mathbf{Y}_{\hat{m}\emptyset} \in \mathsf{Y}^\text{reach}_{\hat{m}\emptyset}} S\Big( \llbracket \mathbf{T}_{\hat{n}}|\mathbf{Z}_{\hat{n}\hat{m}}| \mathbf{Y}_{\hat{m}\emptyset}\rrbracket \Big\Vert \llbracket \mathbf{R}_{\hat{n}}|\mathbf{Z}_{\hat{n}\hat{m}}|\mathbf{Y}_{\hat{m}\emptyset}\rrbracket \Big) \\
     =& \sup_{\mathbf{Z}'_{\hat{m}\hat{m}} \in \mathsf{Z}_{\hat{m}\hat{m}}} S\Big( \llbracket \mathbf{T}_{\hat{n}}|\mathbf{Z}_{\hat{n}\hat{m}}\mathbf{Z}'_{\hat{m}\hat{m}}|\mathbf{I}_{\hat{m}}\rrbracket \Big\Vert \llbracket \mathbf{R}_{\hat{m}}|\mathbf{Z}_{\hat{n}\hat{m}}\mathbf{Z}'_{\hat{m}\hat{m}}|\mathbf{I}_{\hat{m}}\rrbracket \Big) \\
    \label{eqn::reachDiv}
    \leq & \sup_{\mathbf{Z}'_{\hat{n}\hat{m}} \in \mathsf{Z}_{\hat{n}\hat{m}}} S\Big( \llbracket \mathbf{T}_{\hat{n}}|\mathbf{Z}'_{\hat{n}\hat{m}}|\mathbf{I}_{\hat{m}}\rrbracket \Big\Vert \llbracket \mathbf{R}_{\hat{m}}|\mathbf{Z}'_{\hat{n}\hat{m}}|\mathbf{I}_{\hat{m}}\rrbracket \Big) \\
    \label{eqn::reachDiv2}
    = & \sup_{\mathbf{Y}_{\hat{n}\emptyset} \in \mathsf{Y}^\text{reach}_{\hat{n}\emptyset}} S\Big( \llbracket \mathbf{T}_{\hat{n}}|\mathbf{Y}_{\hat{n}\emptyset}\rrbracket \Big\Vert \llbracket \mathbf{R}_{\hat{m}}|\mathbf{Y}_{\hat{n}\emptyset}\rrbracket \Big) \\
    =& D^\textup{reach}\big( \mathbf{T}_{\hat{n}} \big\Vert \mathbf{R}_{\hat{n}} \big),
    \end{align}
where, for the inequality and penultimate equality, we have used the fact that $\mathbf{Z}_{\hat{n}\hat{m}}\mathsf{Z}_{\hat{m}\hat{m}} \subseteq \mathsf{Z}_{\hat{n}\hat{m}}$, and that $\llbracket\mathsf{Z}_{\hat{n}\hat{m}}|\mathbf{I}_{\hat{m}}\rrbracket  = \mathsf{Y}_{\hat{n}\emptyset}^\text{reach}$, respectively, which both follow from the from the compatibility of the sets $\mathsf{Z}_{\hat{n}\hat{m}}$.
\end{proof}
As is evident from the proof of this Lemma [in particular in the step from Eq.~\eqref{eqn::reachDiv} to Eq.~\eqref{eqn::reachDiv2}] optimisation over sets of control combs, as required for the computation of reachable comb divergences, is equivalent to the optimisation over superprocesses $\mathsf{Z}_{\hat{n}\hat{m}}$ and subsequent temporal coarse-graining. Consequently, if $\mathsf{Z}_{\hat{n}\hat{m}}$ corresponds to superprocesses in IQI, then the quantifiers $\overline{I}_{\emptyset}(\mathbf{T}_{\hat{n}})$ and $ \overline{M}_{\emptyset}(\mathbf{T}_{\hat{n}})$ are upper bounded by the reachable comb divergence of $\mathbf{T}_{\hat{n}}$ with respect to its fully uncorrelated version $\mathbf{T}^{\text{marg}}_{\hat{n}}$. In particular, we have 
\begin{align}
&D^\text{reach}\big( \mathbf{T}_{\hat{n}} \big\Vert \mathbf{T}^{\text{marg}}_{\hat{n}}\big) \\
&= \sup_{\mathbf{Z}_{\hat{n}\hat{n}} \in \mathsf{Z}_{\hat{n}\hat{n}}} S\Big( \llbracket \mathbf{T}_{\hat{n}}| \mathbf{Z}_{\hat{n}\hat{n}}| \mathbf{I}_{{\hat{n}}}\rrbracket \Big\Vert \llbracket \mathbf{T}^{\text{marg}}_{\hat{n}}|\mathbf{Z}_{\hat{n}\hat{n}}|\mathbf{I}_{\hat{n}}\rrbracket \Big) \label{eq:noncommutativ}\\
&\geq \sup_{\mathbf{Z}_{\hat{n}\hat{n}} \in \mathsf{Z}_{\hat{n}\hat{n}}} S\Big( \llbracket \mathbf{T}_{\hat{n}}| \mathbf{Z}_{\hat{n}\hat{n}}| \mathbf{I}_{{\hat{n}}}\rrbracket \Big\Vert \llbracket \mathbf{T}_{\hat{n}}|\mathbf{Z}_{\hat{n}\hat{n}}|\mathbf{I}_{\hat{n}}\rrbracket^{\text{marg}} \Big), \label{eq:noncommutativityinequality}\\
&= \sup_{\mathbf{Z}_{\hat{n}\hat{n}} \in \mathsf{Z}_{\hat{n}\hat{n}}} I\Big( \llbracket \mathbf{T}_{\hat{n}}| \mathbf{Z}_{\hat{n}\hat{n}}| \mathbf{I}_{{\hat{n}}}\rrbracket \Big)\\
&= \overline{I}_{\emptyset}(\mathbf{T}_{\hat{n}}) = \overline{M}_{\emptyset}(\mathbf{T}_{\hat{n}}),
\end{align}
where we have used that the closest (in terms of the relative entropy) fully uncorrelated process to $\llbracket \mathbf{T}_{\hat{n}}| \mathbf{Z}_{\hat{n}\hat{n}}| \mathbf{I}_{{\hat{n}}}\rrbracket$ is the product of that process' own marginals, $\llbracket \mathbf{T}_{\hat{n}}| \mathbf{Z}_{\hat{n}\hat{n}}| \mathbf{I}_{{\hat{n}}}\rrbracket^{\text{marg}}$, yielding the inequality in the third line. Additionally, note that $\overline{I}$ and $\overline{M}$ coincide for quantum channels, i.e., when no open slots remain, leading to the final equality. Having established that $\overline{I}_{\emptyset}$ lower bounds $D^\text{reach}$, and given that generalised comb divergences upper bound reachable ones, we arrive at the hierarchy
\begin{gather}
    \overline{I}_\emptyset = \overline{M}_\emptyset \leq D^\text{reach}\big( \mathbf{T}_{\hat{n}} \big\Vert \mathbf{T}^{\text{marg}}_{\hat{n}}\big) \leq D\big( \mathbf{T}_{\hat{n}} \big\Vert \mathbf{T}^{\text{marg}}_{\hat{n}}\big).
\end{gather}
Reachable comb divergences can be understood as `pickier' resource quantifiers than generalised comb divergences, because they only consider transformations that are explicitly allowed within our specified setup.

The more general connection to reachable/generalised comb divergences for the case $\overline{I}_{\hat{m}}, \overline{M}_{\hat{m}}$ and $\overline{N}_{\hat{m}}$ for $\hat{m} \subset \hat{n}$ follows by introducing comb divergences in the same vein as in Eq.~\eqref{eqn::reachable}, but using $\mathsf{Y}^\text{reach}_{\hat{n}\hat{m}} = \llbracket \mathsf{Z}_{\hat{n}\hat{n}} | \mathbf{I}_{\hat{n}\setminus \hat{m}}\rrbracket$ and defining $D^{\text{reach}}_{\hat{m}}$ accordingly. With this, following the same line of reasoning as above, we obtain similar upper bounds on the introduced resource quantifiers:
\begin{gather}
\begin{split}
     &\overline{I}_{\hat{m}}(\mathbf{T}_{\hat{n}}) \leq D^{\text{reach}}_{\hat{m}}(\mathbf{T}_{\hat{n}}|| \mathbf{T}^{\text{marg}}_{\hat{n}}),\\ 
     &\overline{M}_{\hat{m}}(\mathbf{T}_{\hat{n}}) \leq D^{\text{reach}}_{\hat{m}}(\mathbf{T}^{\text{Mkv}}_{\hat{n}}|| \mathbf{T}^{\text{marg}}_{\hat{n}}),
     \\ \text{and} \ \ &\overline{N}_{\hat{m}}(\mathbf{T}_{\hat{n}}) \leq D^{\text{reach}}_{\hat{m}}(\mathbf{T}_{\hat{n}}|| \mathbf{T}^{\text{Mkv}}_{\hat{n}}).
\end{split}
\end{gather}
Whether there exist examples of $\mathbf{T}_{\hat{n}}$ where these quantities satisfy a strict inequality, or if they are always equal, is currently unknown.

\section{Dynamical Decoupling as Temporal Resource Distillation} \label{sec:numerics}
As mentioned throughout, the resource quantifiers we introduced are directly related to the amenability of a quantum process to dynamical decoupling, raising the question of how exactly these different resources are expended during the decoupling process. At face value, their physical interpretation is the maximum amount of relative entropy that a quantum process can exhibit -- with respect to its fully uncorrelated version ($\overline{I}_{\hat{m}}$), with respect to its Markovian version ($\overline{N}_{\hat{m}}$), or between its Markovian and fully uncorrelated version ($\overline{M}_{\hat{m}}$) -- at temporal resolution $\hat{m}$ under the available experimental control. Computation of these quantifiers is thus equivalent to the problem of finding an optimal quantum control policy over a non-Markovian quantum process. While the difficulty of deducing provably optimal solutions is not yet fully understood~\cite{Barry_2014,Ying2021}, there are numerous viable approximate approaches~\cite{werschnik2007quantumoptimalcontroltheory,Koch_2022,PhysRevA.84.022326,KHANEJA2005296}. 

Of particular relevance here is the multitimescale optimal dynamical decoupling~\cite{extractingdynamicalquantumresources} (MODD) scheme, originally introduced to find optimal sequences of operations that denoise a non-Markovian dynamics. Concretely, given $\mathbf{T}_{\hat{n}}$, the MODD protocol consists of applying a see-saw semidefinite program (SDP) at increasingly coarse timescales to search for the superprocess within IQI that yields the resultant channel $\mathbf{T}_\emptyset$ of highest mutual information between input and output. As a consequence, this numerical approach yields an approximation method for finding $\overline{I}_\emptyset$. It is not necessarily optimal, though, since i) the employed SDP maximises the largest eigenvalue of the Choi matrix of the resulting channel $\mathbf{T}_\emptyset$ as a proxy for the maximisation of its mutual information, not the mutual information itself and ii) the corresponding see-saw approach, iteratively optimising operations in each slot individually, is only guaranteed to converge to a local maximum~\cite{extractingdynamicalquantumresources}, not a global one. Nonetheless, as shown in Ref.~\cite{extractingdynamicalquantumresources}, this approximative computation of $\overline{I}_\emptyset$ yields a dramatic improvement over the $X,Z,X,Z$ pulse sequences -- leading to the superprocess $\mathbf{Z}^{\mathrm{DD}}_{\hat{n}\hat{n}}$ --  traditionally employed in dynamical decoupling for the task of denoising. This advantage suggests that non-Markovian processes tend to possess greater resource content than one would assume from the results of traditional DD alone.

Conducting the corresponding resource analysis, in Ref.~\cite{extractingdynamicalquantumresources}, it was claimed that since the MODD superprocess $\mathbf{Z}^{\text{MODD}}_{\hat{n}\hat{n}}$ and traditional DD superprocess $\mathbf{Z}^{DD}_{\hat{n}\hat{n}}$ satisfy $I\big( \llbracket \mathbf{T}_{\hat{n}} | \mathbf{Z}^{\text{MODD}}_{\hat{n}\hat{n}} | \mathbf{I}_{{\hat{n}}\setminus{\hat{m}}} \rrbracket \big) > I\big( \llbracket \mathbf{T}_{\hat{n}} | \mathbf{Z}^{\text{DD}}_{\hat{n}\hat{n}} | \mathbf{I}_{{\hat{n}}\setminus{\hat{m}}} \rrbracket \big)$, the traditional DD superprocess does not extract all resources present in $\mathbf{T}_{\hat{n}}$. However, since $I$ is not monotonic under the allowed control of IQI, i.e. it is not a meaningful resource quantifier, this intuitive conclusion is a priori not entirely justified, and does not follow directly from the premise. 

Instead, the correct statement would be that $\overline{I}_{\emptyset} ( \mathbf{T}_{\hat{n}} ) > I\big( \llbracket \mathbf{T}_{\hat{n}} | \mathbf{Z}^{\text{DD}}_{\hat{n}\hat{n}} | \mathbf{I}_{{\hat{n}}\setminus{\hat{m}}} \rrbracket \big)$ implies that not all resources have been utilised by traditional DD, as it signifies the existence of decoupling sequences that outperform $\mathbf{Z}_{\hat{n}\hat{n}}^\text{DD}$, and the corresponding possible advantage is quantified by $\overline{I}_{\emptyset} ( \mathbf{T}_{\hat{n}} ) - I\big( \llbracket \mathbf{T}_{\hat{n}} | \mathbf{Z}^{\text{DD}}_{\hat{n}\hat{n}} | \mathbf{I}_{{\hat{n}}\setminus{\hat{m}}} \rrbracket \big)$. Since, by construction,  $ \overline{I}_{\emptyset} (\mathbf{T}_{\hat{n}} ) \geq I\big( \llbracket \mathbf{T}_{\hat{n}} | \mathbf{Z}^{\text{MODD}}_{\hat{n}\hat{n}} | \mathbf{I}_{{\hat{n}}\setminus{\hat{m}}} \rrbracket \big)$, and Ref.~\cite{extractingdynamicalquantumresources} found that $I\big( \llbracket \mathbf{T}_{\hat{n}} | \mathbf{Z}^{\text{MODD}}_{\hat{n}\hat{n}} | \mathbf{I}_{{\hat{n}}\setminus{\hat{m}}} \rrbracket \big) > I\big( \llbracket \mathbf{T}_{\hat{n}} | \mathbf{Z}^{\text{DD}}_{\hat{n}\hat{n}} | \mathbf{I}_{{\hat{n}}\setminus{\hat{m}}} \rrbracket \big)$ for all processes $\mathbf{T}_{\hat{n}}$ that were analysed, the numerical results of Ref.~\cite{extractingdynamicalquantumresources} do indeed support the conclusion of resource wastage, but quantified by $\overline{I}_\emptyset$ rather than by $I$ directly. Performance using traditional DD underestimates the true potential for denoising, and the potential for improvement is lower bounded by $I\big( \llbracket \mathbf{T}_{\hat{n}} | \mathbf{Z}^{\text{MODD}}_{\hat{n}\hat{n}} | \mathbf{I}_{{\hat{n}}\setminus{\hat{m}}} \rrbracket \big) - I\big( \llbracket \mathbf{T}_{\hat{n}} | \mathbf{Z}^{\text{DD}}_{\hat{n}\hat{n}} | \mathbf{I}_{{\hat{n}}\setminus{\hat{m}}} \rrbracket \big)$, which is computationally accessible.

The resource quantifiers at temporal resolution $\hat{m}$, $\overline{I}_{\hat{m}}, \overline{M}_{\hat{m}}$ and $\overline{N}_{\hat{m}}$, also shed light on the trade-off between resources during dynamical decoupling. Due to their monotonicity, $\overline{I}_{\hat{m}},\overline{M}_{\hat{m}},\overline{N}_{\hat{m}}$ can only decrease under the allowed transformations of IQI. However -- by definition -- for each of them there is (at least) one respective transformation that will not decrease its value. We denote the corresponding superprocesses by $\mathbf{Z}^{I}_{\hat{n}\hat{n}},\mathbf{Z}^{M}_{\hat{n}\hat{n}},\mathbf{Z}^{N}_{\hat{n}\hat{n}}$, yielding the optimal values for $\overline{I}_{\hat{m}},\overline{M}_{\hat{m}}$ and $\overline{N}_{\hat{m}}$, respectively, given temporal resolution $\hat{m}$. Critically, these superprocesses need not be the same for different quantifiers, and hence optimising $\overline{I}_{\hat{m}}$ or $\overline{M}_{\hat{m}}$ in general requires sacrificing $\overline{N}_{\hat{m}}$. That is, a control sequence acting on $\mathbf{T}_{\hat{n}}$ that yields the highest overall mutual information at temporal resolution $\hat{m}$ will generally lead to a process $\llbracket \mathbf{T}_{\hat{n}}|\mathbf{Z}^{I}_{\hat{n}\hat{n}}|\mathbf{I}_{\hat{n}\setminus \hat{m}}\rrbracket$ whose non-Markovianity $\overline{N}_{\hat{m}}$ is lower than what could have potentially been obtained from $\mathbf{T}_{\hat{n}}$.  It is in this sense that non-Markovianity is `expended' during DD. 

Furthermore, applying $\mathbf{Z}^{I}_{\hat{n}\hat{n}}$ followed by coarse-graining to $\hat{m}$ does -- by definition -- not decrease the value of $\overline{I}_{\hat{m}}$. Then we can understand the transformation
\begin{equation}
    \mathbf{T}_{\hat{n}}   \mapsto \mathbf{T}'_{\hat{m}} = \llbracket \mathbf{T}_{\hat{n}} | \mathbf{Z}^{I}_{\hat{n}\hat{n}} | \mathbf{I}_{{\hat{n}}\setminus{\hat{m}}} \rrbracket  ,
\end{equation}
as lossless resource distillation in time, where $\overline{I}_{\hat{m}}(\mathbf{T}_{\hat{n}})=\overline{I}_{\hat{m}}(\mathbf{T}'_{\hat{m}})$ but $\hat{m} \subseteq \hat{n}$. This situation is analogous to resource distillation among spatially distinct subsystems, which is the basis of quantum error correction. Here, the resources contained in many temporally distinct subsystems of a `noisy' multitime process are concentrated into fewer `noisless' subsystems, e.g. an identity channel. Resource value is not increased \emph{overall}, yet a fine-grained noisy process can be converted to a coarse-grained noiseless one. A less optimal choice of superprocess than $\mathbf{Z}^{I}_{\hat{n}\hat{n}}$ will result in `lossy' resource distillation in time, or potentially even dilution. This is generically what happens if the experimenter chooses not to apply any kind of DD superprocess at all: resources present in the original fine-grained process are irreversibly wasted, and a resultant coarse-grained process/channel is noisier than it could have been if an adequate choice of control operation had been applied to it.

\section{Discussion}
Quantifying the amenability of quantum systems to control problems under specified experimental constraints is critical to the development of quantum technologies and the optimal usage of available resources. However, it is more challenging than quantifying the value of quantum states, particularly in the multitime paradigm~\cite{zambon2024processtensordistinguishabilitymeasures}. Here, we presented a generic methodology to construct operationally motivated resource quantifiers, focusing on the problem of noise reduction. This yielded quantifiers $\overline{I}_{\hat{m}},\overline{M}_{\hat{m}},\overline{N}_{\hat{m}}$ for total, Markovian, and non-Markovian correlations at temporal resolution $\hat{m}$, tailored to the respective sets $\mathsf{Z}_{\hat{n}\hat{m}}$ of available superprocesses. 

We found that each of the quantifiers is monotonic under the application of superprocesses in $\mathsf{Z}_{\hat{n}\hat{m}}$, making them well-behaved resource quantifiers. In addition, we demonstrated that $\overline{I}_{\hat{m}},\overline{M}_{\hat{m}}$ and $\overline{N}_{\hat{m}}$ are not mutually independent but satisfy a subadditivity property due to the fact that each of them is achieved by a different superprocess. As a consequence, optimising control operations to maximise one type of correlations, say, the total information $\overline{I}$, necessarily `wastes' the other two. 

Importantly, the monotonic behaviour of $\overline{I}_{\hat{m}},\overline{M}_{\hat{m}}$ and $\overline{N}_{\hat{m}}$ under control operations is limited only to superprocesses in $\mathsf{Z}_{\hat{n}\hat{m}}$. This is in contrast to generalised comb divergences $D$, a notion of distance between multi-time quantum processes which is monotonic under the application of \textit{arbitrary} superprocesses~\cite{zambon2024processtensordistinguishabilitymeasures}. We linked this concept with the resource quantifiers we introduced to this via \textit{reachable} comb divergences $D^\text{reach}$, which we showed to also be monotonic under $\mathsf{Z}_{\hat{n}\hat{m}}$. While we establish the hierarchy $\overline{I}_{\hat{m}} \leq D^{\text{reach}}_{\hat{m}}(\mathbf{T}_{\hat{n}}|| \mathbf{T}^{\text{marg}}_{\hat{n}})\leq D_{\hat{m}}(\mathbf{T}_{\hat{n}}|| \mathbf{T}^{\text{marg}}_{\hat{n}})$ (and similarly for $\overline{M}_{\hat{m}}$ and $\overline{N}_{\hat{m}}$), it is unknown whether there are cases of \textit{strict} inequality between $\overline{I}_{\hat{m}},\overline{M}_{\hat{m}},\overline{N}_{\hat{m}}$ and $D^\text{reach}$. Additionally, the existence, potential causes, and implications of strict separation for $D^\text{reach}$ and $D$ in this hierarchy remain to be investigated. When this hierarchy does not collapse, what resource is being captured by one quantifier but not the other?

Our results provide an avenue forward to faithfully quantify the resources of complex quantum processes and devices, given the practical control available to an experimenter. Notably, the case we investigated in most detail here is based on the assumption that the experimenter has no adaptive memory between timesteps. This is reasonable for the NISQ era, where adaptive control based on mid-circuit measurements can be difficult. However, our results immediately apply to more general cases, and imbuing the experimenter with (potentially restricted) memory may yield quantifiers that more accurately reflect the ultimate utility of a quantum system for a control task.

Going beyond the single-copy paradigm, we also investigated how the resource quantifiers we introduce behave under parallel and sequential composition, making them applicable to resource distillation setups when multiple copies of a given quantum process are available and can be processed. Here, in particular, superadditivity under parallel composition, as shown in Prop.~\ref{prop:superadditivity} suggests that regularisations of the quantifiers we present may be required to properly assess the amenability of composite quantum systems to noise reduction, or more general quantum control. We expect such treatment to extend the quantifiable utility of quantum processes beyond either the single-shot multitime or single-time asymptotic limits.

\section{Acknowledgements}
We thank Guilherme Zambon for bringing the error in Ref.~\cite{extractingdynamicalquantumresources} to our attention, and for fruitful discussions on this topic.
SM acknowledges funding from the European Union's Horizon Europe research and innovation programme under the Marie Sk{\l}odowska-Curie grant agreement No.\ 101068332.

\bibliographystyle{apsrev4-1}
\bibliography{refs.bib}

\begin{thebibliography}{51}%
\makeatletter
\providecommand \@ifxundefined [1]{%
 \@ifx{#1\undefined}
}%
\providecommand \@ifnum [1]{%
 \ifnum #1\expandafter \@firstoftwo
 \else \expandafter \@secondoftwo
 \fi
}%
\providecommand \@ifx [1]{%
 \ifx #1\expandafter \@firstoftwo
 \else \expandafter \@secondoftwo
 \fi
}%
\providecommand \natexlab [1]{#1}%
\providecommand \enquote  [1]{``#1''}%
\providecommand \bibnamefont  [1]{#1}%
\providecommand \bibfnamefont [1]{#1}%
\providecommand \citenamefont [1]{#1}%
\providecommand \href@noop [0]{\@secondoftwo}%
\providecommand \href [0]{\begingroup \@sanitize@url \@href}%
\providecommand \@href[1]{\@@startlink{#1}\@@href}%
\providecommand \@@href[1]{\endgroup#1\@@endlink}%
\providecommand \@sanitize@url [0]{\catcode `\\12\catcode `\$12\catcode `\&12\catcode `\#12\catcode `\^12\catcode `\_12\catcode `\%12\relax}%
\providecommand \@@startlink[1]{}%
\providecommand \@@endlink[0]{}%
\providecommand \url  [0]{\begingroup\@sanitize@url \@url }%
\providecommand \@url [1]{\endgroup\@href {#1}{\urlprefix }}%
\providecommand \urlprefix  [0]{URL }%
\providecommand \Eprint [0]{\href }%
\providecommand \doibase [0]{http://dx.doi.org/}%
\providecommand \selectlanguage [0]{\@gobble}%
\providecommand \bibinfo  [0]{\@secondoftwo}%
\providecommand \bibfield  [0]{\@secondoftwo}%
\providecommand \translation [1]{[#1]}%
\providecommand \BibitemOpen [0]{}%
\providecommand \bibitemStop [0]{}%
\providecommand \bibitemNoStop [0]{.\EOS\space}%
\providecommand \EOS [0]{\spacefactor3000\relax}%
\providecommand \BibitemShut  [1]{\csname bibitem#1\endcsname}%
\let\auto@bib@innerbib\@empty
\bibitem [{\citenamefont {Berk}\ \emph {et~al.}(2023)\citenamefont {Berk}, \citenamefont {Milz}, \citenamefont {Pollock},\ and\ \citenamefont {Modi}}]{extractingdynamicalquantumresources}%
  \BibitemOpen
  \bibfield  {author} {\bibinfo {author} {\bibfnamefont {G.~D.}\ \bibnamefont {Berk}}, \bibinfo {author} {\bibfnamefont {S.}~\bibnamefont {Milz}}, \bibinfo {author} {\bibfnamefont {F.~A.}\ \bibnamefont {Pollock}}, \ and\ \bibinfo {author} {\bibfnamefont {K.}~\bibnamefont {Modi}},\ }\href {\doibase 10.1038/s41534-023-00774-w} {\bibfield  {journal} {\bibinfo  {journal} {npj Quantum Information}\ }\textbf {\bibinfo {volume} {9}},\ \bibinfo {pages} {104} (\bibinfo {year} {2023})}\BibitemShut {NoStop}%
\bibitem [{\citenamefont {Shor}(1995)}]{PhysRevA.52.R2493}%
  \BibitemOpen
  \bibfield  {author} {\bibinfo {author} {\bibfnamefont {P.~W.}\ \bibnamefont {Shor}},\ }\href {\doibase 10.1103/PhysRevA.52.R2493} {\bibfield  {journal} {\bibinfo  {journal} {Phys. Rev. A}\ }\textbf {\bibinfo {volume} {52}},\ \bibinfo {pages} {R2493} (\bibinfo {year} {1995})}\BibitemShut {NoStop}%
\bibitem [{\citenamefont {Terhal}(2015)}]{RevModPhys.87.307}%
  \BibitemOpen
  \bibfield  {author} {\bibinfo {author} {\bibfnamefont {B.~M.}\ \bibnamefont {Terhal}},\ }\href {\doibase 10.1103/RevModPhys.87.307} {\bibfield  {journal} {\bibinfo  {journal} {Rev. Mod. Phys.}\ }\textbf {\bibinfo {volume} {87}},\ \bibinfo {pages} {307} (\bibinfo {year} {2015})}\BibitemShut {NoStop}%
\bibitem [{\citenamefont {Knill}(2005)}]{Knill_2005}%
  \BibitemOpen
  \bibfield  {author} {\bibinfo {author} {\bibfnamefont {E.}~\bibnamefont {Knill}},\ }\href {\doibase 10.1038/nature03350} {\bibfield  {journal} {\bibinfo  {journal} {Nature}\ }\textbf {\bibinfo {volume} {434}},\ \bibinfo {pages} {39–44} (\bibinfo {year} {2005})}\BibitemShut {NoStop}%
\bibitem [{\citenamefont {Temme}\ \emph {et~al.}(2017)\citenamefont {Temme}, \citenamefont {Bravyi},\ and\ \citenamefont {Gambetta}}]{PhysRevLett.119.180509}%
  \BibitemOpen
  \bibfield  {author} {\bibinfo {author} {\bibfnamefont {K.}~\bibnamefont {Temme}}, \bibinfo {author} {\bibfnamefont {S.}~\bibnamefont {Bravyi}}, \ and\ \bibinfo {author} {\bibfnamefont {J.~M.}\ \bibnamefont {Gambetta}},\ }\href {\doibase 10.1103/PhysRevLett.119.180509} {\bibfield  {journal} {\bibinfo  {journal} {Phys. Rev. Lett.}\ }\textbf {\bibinfo {volume} {119}},\ \bibinfo {pages} {180509} (\bibinfo {year} {2017})}\BibitemShut {NoStop}%
\bibitem [{\citenamefont {Endo}\ \emph {et~al.}(2018)\citenamefont {Endo}, \citenamefont {Benjamin},\ and\ \citenamefont {Li}}]{PhysRevX.8.031027}%
  \BibitemOpen
  \bibfield  {author} {\bibinfo {author} {\bibfnamefont {S.}~\bibnamefont {Endo}}, \bibinfo {author} {\bibfnamefont {S.~C.}\ \bibnamefont {Benjamin}}, \ and\ \bibinfo {author} {\bibfnamefont {Y.}~\bibnamefont {Li}},\ }\href {\doibase 10.1103/PhysRevX.8.031027} {\bibfield  {journal} {\bibinfo  {journal} {Phys. Rev. X}\ }\textbf {\bibinfo {volume} {8}},\ \bibinfo {pages} {031027} (\bibinfo {year} {2018})}\BibitemShut {NoStop}%
\bibitem [{\citenamefont {Kandala}\ \emph {et~al.}(2019)\citenamefont {Kandala}, \citenamefont {Temme}, \citenamefont {Córcoles}, \citenamefont {Mezzacapo}, \citenamefont {Chow},\ and\ \citenamefont {Gambetta}}]{Kandala_2019}%
  \BibitemOpen
  \bibfield  {author} {\bibinfo {author} {\bibfnamefont {A.}~\bibnamefont {Kandala}}, \bibinfo {author} {\bibfnamefont {K.}~\bibnamefont {Temme}}, \bibinfo {author} {\bibfnamefont {A.~D.}\ \bibnamefont {Córcoles}}, \bibinfo {author} {\bibfnamefont {A.}~\bibnamefont {Mezzacapo}}, \bibinfo {author} {\bibfnamefont {J.~M.}\ \bibnamefont {Chow}}, \ and\ \bibinfo {author} {\bibfnamefont {J.~M.}\ \bibnamefont {Gambetta}},\ }\href {\doibase 10.1038/s41586-019-1040-7} {\bibfield  {journal} {\bibinfo  {journal} {Nature}\ }\textbf {\bibinfo {volume} {567}},\ \bibinfo {pages} {491–495} (\bibinfo {year} {2019})}\BibitemShut {NoStop}%
\bibitem [{\citenamefont {Viola}\ \emph {et~al.}(1999)\citenamefont {Viola}, \citenamefont {Knill},\ and\ \citenamefont {Lloyd}}]{dynamicaldecouplingofopenquantumsystems}%
  \BibitemOpen
  \bibfield  {author} {\bibinfo {author} {\bibfnamefont {L.}~\bibnamefont {Viola}}, \bibinfo {author} {\bibfnamefont {E.}~\bibnamefont {Knill}}, \ and\ \bibinfo {author} {\bibfnamefont {S.}~\bibnamefont {Lloyd}},\ }\href {https://link.aps.org/doi/10.1103/PhysRevLett.82.2417} {\bibfield  {journal} {\bibinfo  {journal} {Phys. Rev. Lett.}\ }\textbf {\bibinfo {volume} {82}},\ \bibinfo {pages} {2417} (\bibinfo {year} {1999})}\BibitemShut {NoStop}%
\bibitem [{\citenamefont {Pokharel}\ \emph {et~al.}(2018)\citenamefont {Pokharel}, \citenamefont {Anand}, \citenamefont {Fortman},\ and\ \citenamefont {Lidar}}]{demonstrationoffidelityimprovement}%
  \BibitemOpen
  \bibfield  {author} {\bibinfo {author} {\bibfnamefont {B.}~\bibnamefont {Pokharel}}, \bibinfo {author} {\bibfnamefont {N.}~\bibnamefont {Anand}}, \bibinfo {author} {\bibfnamefont {B.}~\bibnamefont {Fortman}}, \ and\ \bibinfo {author} {\bibfnamefont {D.~A.}\ \bibnamefont {Lidar}},\ }\href {\doibase 10.1103/PhysRevLett.121.220502} {\bibfield  {journal} {\bibinfo  {journal} {Phys. Rev. Lett.}\ }\textbf {\bibinfo {volume} {121}},\ \bibinfo {pages} {220502} (\bibinfo {year} {2018})}\BibitemShut {NoStop}%
\bibitem [{\citenamefont {Naydenov}\ \emph {et~al.}(2011)\citenamefont {Naydenov}, \citenamefont {Dolde}, \citenamefont {Hall}, \citenamefont {Shin}, \citenamefont {Fedder}, \citenamefont {Hollenberg}, \citenamefont {Jelezko},\ and\ \citenamefont {Wrachtrup}}]{dynamicaldecouplingofasingleelectronspinatroom}%
  \BibitemOpen
  \bibfield  {author} {\bibinfo {author} {\bibfnamefont {B.}~\bibnamefont {Naydenov}}, \bibinfo {author} {\bibfnamefont {F.}~\bibnamefont {Dolde}}, \bibinfo {author} {\bibfnamefont {L.~T.}\ \bibnamefont {Hall}}, \bibinfo {author} {\bibfnamefont {C.}~\bibnamefont {Shin}}, \bibinfo {author} {\bibfnamefont {H.}~\bibnamefont {Fedder}}, \bibinfo {author} {\bibfnamefont {L.~C.~L.}\ \bibnamefont {Hollenberg}}, \bibinfo {author} {\bibfnamefont {F.}~\bibnamefont {Jelezko}}, \ and\ \bibinfo {author} {\bibfnamefont {J.}~\bibnamefont {Wrachtrup}},\ }\href {\doibase 10.1103/PhysRevB.83.081201} {\bibfield  {journal} {\bibinfo  {journal} {Phys. Rev. B}\ }\textbf {\bibinfo {volume} {83}},\ \bibinfo {pages} {081201} (\bibinfo {year} {2011})}\BibitemShut {NoStop}%
\bibitem [{\citenamefont {Du}\ \emph {et~al.}(2009)\citenamefont {Du}, \citenamefont {Rong}, \citenamefont {Zhao}, \citenamefont {Wang}, \citenamefont {Yang},\ and\ \citenamefont {Liu}}]{preservingelectronspincoherence}%
  \BibitemOpen
  \bibfield  {author} {\bibinfo {author} {\bibfnamefont {J.}~\bibnamefont {Du}}, \bibinfo {author} {\bibfnamefont {X.}~\bibnamefont {Rong}}, \bibinfo {author} {\bibfnamefont {N.}~\bibnamefont {Zhao}}, \bibinfo {author} {\bibfnamefont {Y.}~\bibnamefont {Wang}}, \bibinfo {author} {\bibfnamefont {J.}~\bibnamefont {Yang}}, \ and\ \bibinfo {author} {\bibfnamefont {R.~B.}\ \bibnamefont {Liu}},\ }\href {\doibase 10.1038/nature08470} {\bibfield  {journal} {\bibinfo  {journal} {Nature}\ }\textbf {\bibinfo {volume} {461}},\ \bibinfo {pages} {1265} (\bibinfo {year} {2009})}\BibitemShut {NoStop}%
\bibitem [{\citenamefont {Tong}\ \emph {et~al.}(2025)\citenamefont {Tong}, \citenamefont {Zhang},\ and\ \citenamefont {Pokharel}}]{tong2025empiricallearningdynamicaldecoupling}%
  \BibitemOpen
  \bibfield  {author} {\bibinfo {author} {\bibfnamefont {C.}~\bibnamefont {Tong}}, \bibinfo {author} {\bibfnamefont {H.}~\bibnamefont {Zhang}}, \ and\ \bibinfo {author} {\bibfnamefont {B.}~\bibnamefont {Pokharel}},\ }\href {\doibase 10.1103/h7pq-s159} {\bibfield  {journal} {\bibinfo  {journal} {PRX Quantum}\ }\textbf {\bibinfo {volume} {6}},\ \bibinfo {pages} {030319} (\bibinfo {year} {2025})}\BibitemShut {NoStop}%
\bibitem [{\citenamefont {Tripathi}\ \emph {et~al.}(2025)\citenamefont {Tripathi}, \citenamefont {Goss}, \citenamefont {Vezvaee}, \citenamefont {Nguyen}, \citenamefont {Siddiqi},\ and\ \citenamefont {Lidar}}]{PhysRevLett.134.050601}%
  \BibitemOpen
  \bibfield  {author} {\bibinfo {author} {\bibfnamefont {V.}~\bibnamefont {Tripathi}}, \bibinfo {author} {\bibfnamefont {N.}~\bibnamefont {Goss}}, \bibinfo {author} {\bibfnamefont {A.}~\bibnamefont {Vezvaee}}, \bibinfo {author} {\bibfnamefont {L.~B.}\ \bibnamefont {Nguyen}}, \bibinfo {author} {\bibfnamefont {I.}~\bibnamefont {Siddiqi}}, \ and\ \bibinfo {author} {\bibfnamefont {D.~A.}\ \bibnamefont {Lidar}},\ }\href {\doibase 10.1103/PhysRevLett.134.050601} {\bibfield  {journal} {\bibinfo  {journal} {Phys. Rev. Lett.}\ }\textbf {\bibinfo {volume} {134}},\ \bibinfo {pages} {050601} (\bibinfo {year} {2025})}\BibitemShut {NoStop}%
\bibitem [{\citenamefont {Ezzell}\ \emph {et~al.}(2023)\citenamefont {Ezzell}, \citenamefont {Pokharel}, \citenamefont {Tewala}, \citenamefont {Quiroz},\ and\ \citenamefont {Lidar}}]{PhysRevApplied.20.064027}%
  \BibitemOpen
  \bibfield  {author} {\bibinfo {author} {\bibfnamefont {N.}~\bibnamefont {Ezzell}}, \bibinfo {author} {\bibfnamefont {B.}~\bibnamefont {Pokharel}}, \bibinfo {author} {\bibfnamefont {L.}~\bibnamefont {Tewala}}, \bibinfo {author} {\bibfnamefont {G.}~\bibnamefont {Quiroz}}, \ and\ \bibinfo {author} {\bibfnamefont {D.~A.}\ \bibnamefont {Lidar}},\ }\href {\doibase 10.1103/PhysRevApplied.20.064027} {\bibfield  {journal} {\bibinfo  {journal} {Phys. Rev. Appl.}\ }\textbf {\bibinfo {volume} {20}},\ \bibinfo {pages} {064027} (\bibinfo {year} {2023})}\BibitemShut {NoStop}%
\bibitem [{\citenamefont {Breuer}\ and\ \citenamefont {Petruccione}(2007)}]{Breuer:2007juk}%
  \BibitemOpen
  \bibfield  {author} {\bibinfo {author} {\bibfnamefont {H.-P.}\ \bibnamefont {Breuer}}\ and\ \bibinfo {author} {\bibfnamefont {F.}~\bibnamefont {Petruccione}},\ }\href {\doibase 10.1093/acprof:oso/9780199213900.001.0001} {\emph {\bibinfo {title} {{The Theory of Open Quantum Systems}}}}\ (\bibinfo  {publisher} {Oxford University Press},\ \bibinfo {year} {2007})\BibitemShut {NoStop}%
\bibitem [{\citenamefont {Pollock}\ \emph {et~al.}(2018)\citenamefont {Pollock}, \citenamefont {Rodr{\'{\i}}guez-Rosario}, \citenamefont {Frauenheim}, \citenamefont {Paternostro},\ and\ \citenamefont {Modi}}]{nonmarkovianquantumprocesses}%
  \BibitemOpen
  \bibfield  {author} {\bibinfo {author} {\bibfnamefont {F.~A.}\ \bibnamefont {Pollock}}, \bibinfo {author} {\bibfnamefont {C.}~\bibnamefont {Rodr{\'{\i}}guez-Rosario}}, \bibinfo {author} {\bibfnamefont {T.}~\bibnamefont {Frauenheim}}, \bibinfo {author} {\bibfnamefont {M.}~\bibnamefont {Paternostro}}, \ and\ \bibinfo {author} {\bibfnamefont {K.}~\bibnamefont {Modi}},\ }\href {\doibase 10.1103/PhysRevA.97.012127} {\bibfield  {journal} {\bibinfo  {journal} {Phys. Rev. A}\ }\textbf {\bibinfo {volume} {97}},\ \bibinfo {pages} {012127} (\bibinfo {year} {2018})}\BibitemShut {NoStop}%
\bibitem [{\citenamefont {Milz}\ and\ \citenamefont {Modi}(2021)}]{quantumstochasticprocesses}%
  \BibitemOpen
  \bibfield  {author} {\bibinfo {author} {\bibfnamefont {S.}~\bibnamefont {Milz}}\ and\ \bibinfo {author} {\bibfnamefont {K.}~\bibnamefont {Modi}},\ }\href {\doibase 10.1103/PRXQuantum.2.030201} {\bibfield  {journal} {\bibinfo  {journal} {PRX Quantum}\ }\textbf {\bibinfo {volume} {2}},\ \bibinfo {pages} {030201} (\bibinfo {year} {2021})}\BibitemShut {NoStop}%
\bibitem [{\citenamefont {Crutchfield}\ and\ \citenamefont {Feldman}(2001)}]{regularitiesunseenrandomnessobserved}%
  \BibitemOpen
  \bibfield  {author} {\bibinfo {author} {\bibfnamefont {J.~P.}\ \bibnamefont {Crutchfield}}\ and\ \bibinfo {author} {\bibfnamefont {D.~P.}\ \bibnamefont {Feldman}},\ }\href {https://arxiv.org/abs/cond-mat/0102181} {\bibfield  {journal} {\bibinfo  {journal} {arXiv:cond-mat/0102181}\ } (\bibinfo {year} {2001})}\BibitemShut {NoStop}%
\bibitem [{\citenamefont {Pechukas}(1994)}]{reduceddynamicsneednotbe}%
  \BibitemOpen
  \bibfield  {author} {\bibinfo {author} {\bibfnamefont {P.}~\bibnamefont {Pechukas}},\ }\href {\doibase 10.1103/PhysRevLett.73.1060} {\bibfield  {journal} {\bibinfo  {journal} {Phys. Rev. Lett.}\ }\textbf {\bibinfo {volume} {73}},\ \bibinfo {pages} {1060} (\bibinfo {year} {1994})}\BibitemShut {NoStop}%
\bibitem [{\citenamefont {Morris}\ \emph {et~al.}(2022)\citenamefont {Morris}, \citenamefont {Pollock},\ and\ \citenamefont {Modi}}]{quantifyingnonmarkovianityinasuperconditing}%
  \BibitemOpen
  \bibfield  {author} {\bibinfo {author} {\bibfnamefont {J.}~\bibnamefont {Morris}}, \bibinfo {author} {\bibfnamefont {F.~A.}\ \bibnamefont {Pollock}}, \ and\ \bibinfo {author} {\bibfnamefont {K.}~\bibnamefont {Modi}},\ }\href {http://dx.doi.org/10.1142/S123016122250007X} {\bibfield  {journal} {\bibinfo  {journal} {Open Sys. Inf. Dyn.}\ }\textbf {\bibinfo {volume} {29}} (\bibinfo {year} {2022})}\BibitemShut {NoStop}%
\bibitem [{\citenamefont {White}\ \emph {et~al.}(2020)\citenamefont {White}, \citenamefont {Hill}, \citenamefont {Pollock}, \citenamefont {Hollenberg},\ and\ \citenamefont {Modi}}]{White_2020}%
  \BibitemOpen
  \bibfield  {author} {\bibinfo {author} {\bibfnamefont {G.~A.~L.}\ \bibnamefont {White}}, \bibinfo {author} {\bibfnamefont {C.~D.}\ \bibnamefont {Hill}}, \bibinfo {author} {\bibfnamefont {F.~A.}\ \bibnamefont {Pollock}}, \bibinfo {author} {\bibfnamefont {L.~C.~L.}\ \bibnamefont {Hollenberg}}, \ and\ \bibinfo {author} {\bibfnamefont {K.}~\bibnamefont {Modi}},\ }\href {http://dx.doi.org/10.1038/s41467-020-20113-3} {\bibfield  {journal} {\bibinfo  {journal} {Nat. Commun.}\ }\textbf {\bibinfo {volume} {11}} (\bibinfo {year} {2020})}\BibitemShut {NoStop}%
\bibitem [{\citenamefont {White}\ \emph {et~al.}(2025)\citenamefont {White}, \citenamefont {Jurcevic}, \citenamefont {Hill},\ and\ \citenamefont {Modi}}]{unifyingnonmarkoviancharacterisationefficient}%
  \BibitemOpen
  \bibfield  {author} {\bibinfo {author} {\bibfnamefont {G.~A.~L.}\ \bibnamefont {White}}, \bibinfo {author} {\bibfnamefont {P.}~\bibnamefont {Jurcevic}}, \bibinfo {author} {\bibfnamefont {C.~D.}\ \bibnamefont {Hill}}, \ and\ \bibinfo {author} {\bibfnamefont {K.}~\bibnamefont {Modi}},\ }\href {\doibase 10.1103/PhysRevX.15.021047} {\bibfield  {journal} {\bibinfo  {journal} {Phys. Rev. X}\ }\textbf {\bibinfo {volume} {15}},\ \bibinfo {pages} {021047} (\bibinfo {year} {2025})}\BibitemShut {NoStop}%
\bibitem [{\citenamefont {Tanggara}\ \emph {et~al.}(2024)\citenamefont {Tanggara}, \citenamefont {Gu},\ and\ \citenamefont {Bharti}}]{strategiccodeunifiedspatiotemporal}%
  \BibitemOpen
  \bibfield  {author} {\bibinfo {author} {\bibfnamefont {A.}~\bibnamefont {Tanggara}}, \bibinfo {author} {\bibfnamefont {M.}~\bibnamefont {Gu}}, \ and\ \bibinfo {author} {\bibfnamefont {K.}~\bibnamefont {Bharti}},\ }\href {https://arxiv.org/abs/2405.17567} {\bibfield  {journal} {\bibinfo  {journal} {arXiv:2405.17567}\ } (\bibinfo {year} {2024})}\BibitemShut {NoStop}%
\bibitem [{\citenamefont {Zambon}\ and\ \citenamefont {Adesso}(2025)}]{quantumprocessesthermodynamicresources}%
  \BibitemOpen
  \bibfield  {author} {\bibinfo {author} {\bibfnamefont {G.}~\bibnamefont {Zambon}}\ and\ \bibinfo {author} {\bibfnamefont {G.}~\bibnamefont {Adesso}},\ }\href {\doibase 10.1103/PhysRevLett.134.200401} {\bibfield  {journal} {\bibinfo  {journal} {Phys. Rev. Lett.}\ }\textbf {\bibinfo {volume} {134}},\ \bibinfo {pages} {200401} (\bibinfo {year} {2025})}\BibitemShut {NoStop}%
\bibitem [{\citenamefont {Milz}\ \emph {et~al.}(2017)\citenamefont {Milz}, \citenamefont {Pollock},\ and\ \citenamefont {Modi}}]{anintroductiontooperational}%
  \BibitemOpen
  \bibfield  {author} {\bibinfo {author} {\bibfnamefont {S.}~\bibnamefont {Milz}}, \bibinfo {author} {\bibfnamefont {F.~A.}\ \bibnamefont {Pollock}}, \ and\ \bibinfo {author} {\bibfnamefont {K.}~\bibnamefont {Modi}},\ }\href {\doibase 10.1142/S1230161217400169} {\bibfield  {journal} {\bibinfo  {journal} {Open Sys. Info. Dyn.}\ }\textbf {\bibinfo {volume} {24}},\ \bibinfo {pages} {1740016} (\bibinfo {year} {2017})}\BibitemShut {NoStop}%
\bibitem [{\citenamefont {White}\ \emph {et~al.}(2022)\citenamefont {White}, \citenamefont {Pollock}, \citenamefont {Hollenberg}, \citenamefont {Modi},\ and\ \citenamefont {Hill}}]{PRXQuantum.3.020344}%
  \BibitemOpen
  \bibfield  {author} {\bibinfo {author} {\bibfnamefont {G.}~\bibnamefont {White}}, \bibinfo {author} {\bibfnamefont {F.}~\bibnamefont {Pollock}}, \bibinfo {author} {\bibfnamefont {L.}~\bibnamefont {Hollenberg}}, \bibinfo {author} {\bibfnamefont {K.}~\bibnamefont {Modi}}, \ and\ \bibinfo {author} {\bibfnamefont {C.}~\bibnamefont {Hill}},\ }\href {\doibase 10.1103/PRXQuantum.3.020344} {\bibfield  {journal} {\bibinfo  {journal} {PRX Quantum}\ }\textbf {\bibinfo {volume} {3}},\ \bibinfo {pages} {020344} (\bibinfo {year} {2022})}\BibitemShut {NoStop}%
\bibitem [{\citenamefont {Chitambar}\ and\ \citenamefont {Gour}(2019)}]{review}%
  \BibitemOpen
  \bibfield  {author} {\bibinfo {author} {\bibfnamefont {E.}~\bibnamefont {Chitambar}}\ and\ \bibinfo {author} {\bibfnamefont {G.}~\bibnamefont {Gour}},\ }\href {\doibase 10.1103/RevModPhys.91.025001} {\bibfield  {journal} {\bibinfo  {journal} {Rev. Mod. Phys.}\ }\textbf {\bibinfo {volume} {91}},\ \bibinfo {pages} {025001} (\bibinfo {year} {2019})}\BibitemShut {NoStop}%
\bibitem [{\citenamefont {Berta}\ \emph {et~al.}(2023)\citenamefont {Berta}, \citenamefont {Brandão}, \citenamefont {Gour}, \citenamefont {Lami}, \citenamefont {Plenio}, \citenamefont {Regula},\ and\ \citenamefont {Tomamichel}}]{Berta_2023}%
  \BibitemOpen
  \bibfield  {author} {\bibinfo {author} {\bibfnamefont {M.}~\bibnamefont {Berta}}, \bibinfo {author} {\bibfnamefont {F.~G. S.~L.}\ \bibnamefont {Brandão}}, \bibinfo {author} {\bibfnamefont {G.}~\bibnamefont {Gour}}, \bibinfo {author} {\bibfnamefont {L.}~\bibnamefont {Lami}}, \bibinfo {author} {\bibfnamefont {M.~B.}\ \bibnamefont {Plenio}}, \bibinfo {author} {\bibfnamefont {B.}~\bibnamefont {Regula}}, \ and\ \bibinfo {author} {\bibfnamefont {M.}~\bibnamefont {Tomamichel}},\ }\href {\doibase 10.22331/q-2023-09-07-1103} {\bibfield  {journal} {\bibinfo  {journal} {Quantum}\ }\textbf {\bibinfo {volume} {7}},\ \bibinfo {pages} {1103} (\bibinfo {year} {2023})}\BibitemShut {NoStop}%
\bibitem [{\citenamefont {Hayashi}\ and\ \citenamefont {Yamasaki}(2025)}]{hayashi2025generalizedquantumsteinslemma}%
  \BibitemOpen
  \bibfield  {author} {\bibinfo {author} {\bibfnamefont {M.}~\bibnamefont {Hayashi}}\ and\ \bibinfo {author} {\bibfnamefont {H.}~\bibnamefont {Yamasaki}},\ }\href {https://arxiv.org/abs/2408.02722} {\bibfield  {journal} {\bibinfo  {journal} {arXiv:2408.02722}\ } (\bibinfo {year} {2025})}\BibitemShut {NoStop}%
\bibitem [{\citenamefont {Burgarth}\ \emph {et~al.}(2021)\citenamefont {Burgarth}, \citenamefont {Facchi}, \citenamefont {Fraas},\ and\ \citenamefont {Hillier}}]{nonmarkoviannoisethatcannotbe}%
  \BibitemOpen
  \bibfield  {author} {\bibinfo {author} {\bibfnamefont {D.}~\bibnamefont {Burgarth}}, \bibinfo {author} {\bibfnamefont {P.}~\bibnamefont {Facchi}}, \bibinfo {author} {\bibfnamefont {M.}~\bibnamefont {Fraas}}, \ and\ \bibinfo {author} {\bibfnamefont {R.}~\bibnamefont {Hillier}},\ }\href {\doibase 10.21468/SciPostPhys.11.2.027} {\bibfield  {journal} {\bibinfo  {journal} {SciPost Phys.}\ }\textbf {\bibinfo {volume} {11}},\ \bibinfo {pages} {27} (\bibinfo {year} {2021})}\BibitemShut {NoStop}%
\bibitem [{\citenamefont {Addis}\ \emph {et~al.}(2015)\citenamefont {Addis}, \citenamefont {Ciccarello}, \citenamefont {Cascio}, \citenamefont {Palma},\ and\ \citenamefont {Maniscalco}}]{dynamicaldecouplingefficiencyversusquantumnonmarkovianity}%
  \BibitemOpen
  \bibfield  {author} {\bibinfo {author} {\bibfnamefont {C.}~\bibnamefont {Addis}}, \bibinfo {author} {\bibfnamefont {F.}~\bibnamefont {Ciccarello}}, \bibinfo {author} {\bibfnamefont {M.}~\bibnamefont {Cascio}}, \bibinfo {author} {\bibfnamefont {G.~M.}\ \bibnamefont {Palma}}, \ and\ \bibinfo {author} {\bibfnamefont {S.}~\bibnamefont {Maniscalco}},\ }\href {\doibase 10.1088/1367-2630/17/12/123004} {\bibfield  {journal} {\bibinfo  {journal} {New J. Phys}\ }\textbf {\bibinfo {volume} {17}},\ \bibinfo {pages} {123004} (\bibinfo {year} {2015})}\BibitemShut {NoStop}%
\bibitem [{\citenamefont {Gough}\ and\ \citenamefont {Nurdin}(2017)}]{Gough_2017}%
  \BibitemOpen
  \bibfield  {author} {\bibinfo {author} {\bibfnamefont {J.~E.}\ \bibnamefont {Gough}}\ and\ \bibinfo {author} {\bibfnamefont {H.~I.}\ \bibnamefont {Nurdin}},\ }in\ \href {\doibase 10.1109/cdc.2017.8264587} {\emph {\bibinfo {booktitle} {2017 IEEE 56th Annual Conference on Decision and Control (CDC)}}}\ (\bibinfo  {publisher} {IEEE},\ \bibinfo {year} {2017})\ p.\ \bibinfo {pages} {6155–6160}\BibitemShut {NoStop}%
\bibitem [{\citenamefont {Figueroa-Romero}\ \emph {et~al.}(2021)\citenamefont {Figueroa-Romero}, \citenamefont {Modi}, \citenamefont {Harris}, \citenamefont {Stace},\ and\ \citenamefont {Hsieh}}]{PRXQuantum.2.040351}%
  \BibitemOpen
  \bibfield  {author} {\bibinfo {author} {\bibfnamefont {P.}~\bibnamefont {Figueroa-Romero}}, \bibinfo {author} {\bibfnamefont {K.}~\bibnamefont {Modi}}, \bibinfo {author} {\bibfnamefont {R.~J.}\ \bibnamefont {Harris}}, \bibinfo {author} {\bibfnamefont {T.~M.}\ \bibnamefont {Stace}}, \ and\ \bibinfo {author} {\bibfnamefont {M.-H.}\ \bibnamefont {Hsieh}},\ }\href {\doibase 10.1103/PRXQuantum.2.040351} {\bibfield  {journal} {\bibinfo  {journal} {PRX Quantum}\ }\textbf {\bibinfo {volume} {2}},\ \bibinfo {pages} {040351} (\bibinfo {year} {2021})}\BibitemShut {NoStop}%
\bibitem [{\citenamefont {Figueroa-Romero}\ \emph {et~al.}(2024)\citenamefont {Figueroa-Romero}, \citenamefont {Papič}, \citenamefont {Auer}, \citenamefont {Hsieh}, \citenamefont {Modi},\ and\ \citenamefont {de~Vega}}]{Figueroa-Romero_2024}%
  \BibitemOpen
  \bibfield  {author} {\bibinfo {author} {\bibfnamefont {P.}~\bibnamefont {Figueroa-Romero}}, \bibinfo {author} {\bibfnamefont {M.}~\bibnamefont {Papič}}, \bibinfo {author} {\bibfnamefont {A.}~\bibnamefont {Auer}}, \bibinfo {author} {\bibfnamefont {M.-H.}\ \bibnamefont {Hsieh}}, \bibinfo {author} {\bibfnamefont {K.}~\bibnamefont {Modi}}, \ and\ \bibinfo {author} {\bibfnamefont {I.}~\bibnamefont {de~Vega}},\ }\href {\doibase 10.1088/2058-9565/ad3f44} {\bibfield  {journal} {\bibinfo  {journal} {Quantum Science and Technology}\ }\textbf {\bibinfo {volume} {9}},\ \bibinfo {pages} {035020} (\bibinfo {year} {2024})}\BibitemShut {NoStop}%
\bibitem [{\citenamefont {Zambon}(2024)}]{zambon2024processtensordistinguishabilitymeasures}%
  \BibitemOpen
  \bibfield  {author} {\bibinfo {author} {\bibfnamefont {G.}~\bibnamefont {Zambon}},\ }\href {\doibase 10.1103/PhysRevA.110.042210} {\bibfield  {journal} {\bibinfo  {journal} {Phys. Rev. A}\ }\textbf {\bibinfo {volume} {110}},\ \bibinfo {pages} {042210} (\bibinfo {year} {2024})}\BibitemShut {NoStop}%
\bibitem [{\citenamefont {Ng}\ \emph {et~al.}(2011)\citenamefont {Ng}, \citenamefont {Lidar},\ and\ \citenamefont {Preskill}}]{Ng_2011}%
  \BibitemOpen
  \bibfield  {author} {\bibinfo {author} {\bibfnamefont {H.~K.}\ \bibnamefont {Ng}}, \bibinfo {author} {\bibfnamefont {D.~A.}\ \bibnamefont {Lidar}}, \ and\ \bibinfo {author} {\bibfnamefont {J.}~\bibnamefont {Preskill}},\ }\href {http://dx.doi.org/10.1103/PhysRevA.84.012305} {\bibfield  {journal} {\bibinfo  {journal} {Phys. Rev. A}\ }\textbf {\bibinfo {volume} {84}} (\bibinfo {year} {2011})}\BibitemShut {NoStop}%
\bibitem [{\citenamefont {Paz-Silva}\ and\ \citenamefont {Lidar}(2013)}]{optimallycombining}%
  \BibitemOpen
  \bibfield  {author} {\bibinfo {author} {\bibfnamefont {G.~A.}\ \bibnamefont {Paz-Silva}}\ and\ \bibinfo {author} {\bibfnamefont {D.~A.}\ \bibnamefont {Lidar}},\ }\href {\doibase 10.1038/srep01530} {\bibfield  {journal} {\bibinfo  {journal} {Sci. Rep.}\ }\textbf {\bibinfo {volume} {3}},\ \bibinfo {pages} {1530} (\bibinfo {year} {2013})}\BibitemShut {NoStop}%
\bibitem [{\citenamefont {Han}\ \emph {et~al.}(2025)\citenamefont {Han}, \citenamefont {Zhang}, \citenamefont {Xue}, \citenamefont {Yu},\ and\ \citenamefont {Long}}]{han2024protectinglogicalqubitsdynamical}%
  \BibitemOpen
  \bibfield  {author} {\bibinfo {author} {\bibfnamefont {J.-X.}\ \bibnamefont {Han}}, \bibinfo {author} {\bibfnamefont {J.}~\bibnamefont {Zhang}}, \bibinfo {author} {\bibfnamefont {G.-M.}\ \bibnamefont {Xue}}, \bibinfo {author} {\bibfnamefont {H.}~\bibnamefont {Yu}}, \ and\ \bibinfo {author} {\bibfnamefont {G.}~\bibnamefont {Long}},\ }\href {https://link.aps.org/doi/10.1103/lm8x-r48q} {\bibfield  {journal} {\bibinfo  {journal} {Phys. Rev. Appl.}\ }\textbf {\bibinfo {volume} {24}},\ \bibinfo {pages} {024003} (\bibinfo {year} {2025})}\BibitemShut {NoStop}%
\bibitem [{\citenamefont {Chiribella}\ \emph {et~al.}(2009)\citenamefont {Chiribella}, \citenamefont {D'Ariano},\ and\ \citenamefont {Perinotti}}]{theoreticalframeworkforquantumnetworks}%
  \BibitemOpen
  \bibfield  {author} {\bibinfo {author} {\bibfnamefont {G.}~\bibnamefont {Chiribella}}, \bibinfo {author} {\bibfnamefont {G.~M.}\ \bibnamefont {D'Ariano}}, \ and\ \bibinfo {author} {\bibfnamefont {P.}~\bibnamefont {Perinotti}},\ }\href {\doibase 10.1103/PhysRevA.80.022339} {\bibfield  {journal} {\bibinfo  {journal} {Phys. Rev. A}\ }\textbf {\bibinfo {volume} {80}},\ \bibinfo {pages} {022339} (\bibinfo {year} {2009})}\BibitemShut {NoStop}%
\bibitem [{Note1()}]{Note1}%
  \BibitemOpen
  \bibinfo {note} {Here, the definition of the link product differs from the original one in Ref.~\cite {theoreticalframeworkforquantumnetworks} by a factor $d_s^{|\protect \mathbf {T} \cap \protect \mathbf {Z}|}$. This stems from the fact that the Choi matrices we use are normalised to unit trace.}\BibitemShut {Stop}%
\bibitem [{\citenamefont {Berk}\ \emph {et~al.}(2021)\citenamefont {Berk}, \citenamefont {Garner}, \citenamefont {Yadin}, \citenamefont {Modi},\ and\ \citenamefont {Pollock}}]{resourcetheoriesofmultitime}%
  \BibitemOpen
  \bibfield  {author} {\bibinfo {author} {\bibfnamefont {G.~D.}\ \bibnamefont {Berk}}, \bibinfo {author} {\bibfnamefont {A.~J.~P.}\ \bibnamefont {Garner}}, \bibinfo {author} {\bibfnamefont {B.}~\bibnamefont {Yadin}}, \bibinfo {author} {\bibfnamefont {K.}~\bibnamefont {Modi}}, \ and\ \bibinfo {author} {\bibfnamefont {F.~A.}\ \bibnamefont {Pollock}},\ }\href {\doibase 10.22331/q-2021-04-20-435} {\bibfield  {journal} {\bibinfo  {journal} {{Quantum}}\ }\textbf {\bibinfo {volume} {5}},\ \bibinfo {pages} {435} (\bibinfo {year} {2021})}\BibitemShut {NoStop}%
\bibitem [{Note2()}]{Note2}%
  \BibitemOpen
  \bibinfo {note} {Here, all process tensors are normalised to unit trace, such that entropic measures are well-defined}\BibitemShut {NoStop}%
\bibitem [{\citenamefont {Capela}\ \emph {et~al.}(2020)\citenamefont {Capela}, \citenamefont {C{\'e}leri}, \citenamefont {Modi},\ and\ \citenamefont {Chaves}}]{capela_monogamy_2020}%
  \BibitemOpen
  \bibfield  {author} {\bibinfo {author} {\bibfnamefont {M.}~\bibnamefont {Capela}}, \bibinfo {author} {\bibfnamefont {L.~C.}\ \bibnamefont {C{\'e}leri}}, \bibinfo {author} {\bibfnamefont {K.}~\bibnamefont {Modi}}, \ and\ \bibinfo {author} {\bibfnamefont {R.}~\bibnamefont {Chaves}},\ }\href {\doibase 10.1103/PhysRevResearch.2.013350} {\bibfield  {journal} {\bibinfo  {journal} {Phys. Rev. Res.}\ }\textbf {\bibinfo {volume} {2}},\ \bibinfo {pages} {013350} (\bibinfo {year} {2020})}\BibitemShut {NoStop}%
\bibitem [{\citenamefont {Zambon}\ and\ \citenamefont {Soares-Pinto}(2024)}]{Zambon_2024}%
  \BibitemOpen
  \bibfield  {author} {\bibinfo {author} {\bibfnamefont {G.}~\bibnamefont {Zambon}}\ and\ \bibinfo {author} {\bibfnamefont {D.~O.}\ \bibnamefont {Soares-Pinto}},\ }\href {\doibase 10.1103/physreva.109.062401} {\bibfield  {journal} {\bibinfo  {journal} {Phys. Rev. A}\ }\textbf {\bibinfo {volume} {109}},\ \bibinfo {pages} {062401} (\bibinfo {year} {2024})}\BibitemShut {NoStop}%
\bibitem [{\citenamefont {Hirche}(2023)}]{Hirche2023}%
  \BibitemOpen
  \bibfield  {author} {\bibinfo {author} {\bibfnamefont {C.}~\bibnamefont {Hirche}},\ }\href {\doibase https://doi.org/10.22331/q-2023-07-25-1064} {\bibfield  {journal} {\bibinfo  {journal} {Quantum}\ }\textbf {\bibinfo {volume} {7}},\ \bibinfo {pages} {1064} (\bibinfo {year} {2023})}\BibitemShut {NoStop}%
\bibitem [{\citenamefont {Barry}\ \emph {et~al.}(2014)\citenamefont {Barry}, \citenamefont {Barry},\ and\ \citenamefont {Aaronson}}]{Barry_2014}%
  \BibitemOpen
  \bibfield  {author} {\bibinfo {author} {\bibfnamefont {J.}~\bibnamefont {Barry}}, \bibinfo {author} {\bibfnamefont {D.~T.}\ \bibnamefont {Barry}}, \ and\ \bibinfo {author} {\bibfnamefont {S.}~\bibnamefont {Aaronson}},\ }\href {http://dx.doi.org/10.1103/PhysRevA.90.032311} {\bibfield  {journal} {\bibinfo  {journal} {Phys. Rev. A}\ }\textbf {\bibinfo {volume} {90}} (\bibinfo {year} {2014})}\BibitemShut {NoStop}%
\bibitem [{\citenamefont {Ying}\ \emph {et~al.}(2021)\citenamefont {Ying}, \citenamefont {Feng},\ and\ \citenamefont {Ying}}]{Ying2021}%
  \BibitemOpen
  \bibfield  {author} {\bibinfo {author} {\bibfnamefont {M.-S.}\ \bibnamefont {Ying}}, \bibinfo {author} {\bibfnamefont {Y.}~\bibnamefont {Feng}}, \ and\ \bibinfo {author} {\bibfnamefont {S.-G.}\ \bibnamefont {Ying}},\ }\href {\doibase 10.1007/s11633-021-1278-z} {\bibfield  {journal} {\bibinfo  {journal} {Int. J. Autom. Comput.}\ }\textbf {\bibinfo {volume} {18}},\ \bibinfo {pages} {410} (\bibinfo {year} {2021})}\BibitemShut {NoStop}%
\bibitem [{\citenamefont {Werschnik}\ and\ \citenamefont {Gross}(2007)}]{werschnik2007quantumoptimalcontroltheory}%
  \BibitemOpen
  \bibfield  {author} {\bibinfo {author} {\bibfnamefont {J.}~\bibnamefont {Werschnik}}\ and\ \bibinfo {author} {\bibfnamefont {E.~K.~U.}\ \bibnamefont {Gross}},\ }\href {https://arxiv.org/abs/0707.1883} {\bibfield  {journal} {\bibinfo  {journal} {arXiv:0707.1883}\ } (\bibinfo {year} {2007})}\BibitemShut {NoStop}%
\bibitem [{\citenamefont {Koch}\ \emph {et~al.}(2022)\citenamefont {Koch}, \citenamefont {Boscain}, \citenamefont {Calarco}, \citenamefont {Dirr}, \citenamefont {Filipp}, \citenamefont {Glaser}, \citenamefont {Kosloff}, \citenamefont {Montangero}, \citenamefont {Schulte-Herbrüggen}, \citenamefont {Sugny},\ and\ \citenamefont {Wilhelm}}]{Koch_2022}%
  \BibitemOpen
  \bibfield  {author} {\bibinfo {author} {\bibfnamefont {C.~P.}\ \bibnamefont {Koch}}, \bibinfo {author} {\bibfnamefont {U.}~\bibnamefont {Boscain}}, \bibinfo {author} {\bibfnamefont {T.}~\bibnamefont {Calarco}}, \bibinfo {author} {\bibfnamefont {G.}~\bibnamefont {Dirr}}, \bibinfo {author} {\bibfnamefont {S.}~\bibnamefont {Filipp}}, \bibinfo {author} {\bibfnamefont {S.~J.}\ \bibnamefont {Glaser}}, \bibinfo {author} {\bibfnamefont {R.}~\bibnamefont {Kosloff}}, \bibinfo {author} {\bibfnamefont {S.}~\bibnamefont {Montangero}}, \bibinfo {author} {\bibfnamefont {T.}~\bibnamefont {Schulte-Herbrüggen}}, \bibinfo {author} {\bibfnamefont {D.}~\bibnamefont {Sugny}}, \ and\ \bibinfo {author} {\bibfnamefont {F.~K.}\ \bibnamefont {Wilhelm}},\ }\href {http://dx.doi.org/10.1140/epjqt/s40507-022-00138-x} {\bibfield  {journal} {\bibinfo  {journal} {EPJ Quantum Technology}\ }\textbf {\bibinfo {volume} {9}} (\bibinfo {year} {2022})}\BibitemShut {NoStop}%
\bibitem [{\citenamefont {Caneva}\ \emph {et~al.}(2011)\citenamefont {Caneva}, \citenamefont {Calarco},\ and\ \citenamefont {Montangero}}]{PhysRevA.84.022326}%
  \BibitemOpen
  \bibfield  {author} {\bibinfo {author} {\bibfnamefont {T.}~\bibnamefont {Caneva}}, \bibinfo {author} {\bibfnamefont {T.}~\bibnamefont {Calarco}}, \ and\ \bibinfo {author} {\bibfnamefont {S.}~\bibnamefont {Montangero}},\ }\href {\doibase 10.1103/PhysRevA.84.022326} {\bibfield  {journal} {\bibinfo  {journal} {Phys. Rev. A}\ }\textbf {\bibinfo {volume} {84}},\ \bibinfo {pages} {022326} (\bibinfo {year} {2011})}\BibitemShut {NoStop}%
\bibitem [{\citenamefont {Khaneja}\ \emph {et~al.}(2005)\citenamefont {Khaneja}, \citenamefont {Reiss}, \citenamefont {Kehlet}, \citenamefont {Schulte-Herbrüggen},\ and\ \citenamefont {Glaser}}]{KHANEJA2005296}%
  \BibitemOpen
  \bibfield  {author} {\bibinfo {author} {\bibfnamefont {N.}~\bibnamefont {Khaneja}}, \bibinfo {author} {\bibfnamefont {T.}~\bibnamefont {Reiss}}, \bibinfo {author} {\bibfnamefont {C.}~\bibnamefont {Kehlet}}, \bibinfo {author} {\bibfnamefont {T.}~\bibnamefont {Schulte-Herbrüggen}}, \ and\ \bibinfo {author} {\bibfnamefont {S.~J.}\ \bibnamefont {Glaser}},\ }\href {\doibase https://doi.org/10.1016/j.jmr.2004.11.004} {\bibfield  {journal} {\bibinfo  {journal} {J. Magn. Reson.}\ }\textbf {\bibinfo {volume} {172}},\ \bibinfo {pages} {296} (\bibinfo {year} {2005})}\BibitemShut {NoStop}%
\end{thebibliography}%
\end{document}